\documentclass[11pt]{article}
\usepackage{fullpage}
\usepackage[margin=1in]{geometry}

\setlength{\parskip}{2pt}



\usepackage{graphicx}
\usepackage{wrapfig}

\usepackage{xcolor}
\usepackage{hyperref}
\usepackage[linesnumbered,ruled]{algorithm2e}
\usepackage[noend]{algpseudocode}
\usepackage[normalem]{ulem}
\usepackage{multicol}
\usepackage{multirow}
\usepackage{hhline}
\usepackage{amsmath}
\usepackage{mathtools}
\usepackage[capitalize,noabbrev]{cleveref}
\usepackage{float}
\usepackage{array}
\usepackage{xspace}
\usepackage{subfigure}
\newcolumntype{L}[1]{>{\raggedright\let\newline\\\arraybackslash\hspace{0pt}}m{#1}}
\newcolumntype{C}[1]{>{\centering\let\newline\\\arraybackslash\hspace{0pt}}m{#1}}
\newcolumntype{R}[1]{>{\raggedleft\let\newline\\\arraybackslash\hspace{0pt}}m{#1}}


\usepackage{amsthm} 

\RequirePackage[T1]{fontenc} 
\RequirePackage[tt=false, type1=true]{libertine} 
\RequirePackage[varqu]{zi4} 
\RequirePackage[libertine]{newtxmath}

\usepackage{nicefrac}
\usepackage{array}
\usepackage{makecell}

\newtheorem{definition}{\textbf{Definition}}
\newtheorem{assumption}{\textbf{Assumption}}
\newtheorem{remark}{\textbf{Remark}}

\newtheorem{lemma}{\textbf{Lemma}}
\newtheorem{theorem}{\textbf{Theorem}}
\newtheorem{proposition}{\textbf{Proposition}}
\newtheorem{corollary}{\textbf{Corollary}}
\newtheorem{claim}{\textbf{Claim}}


\newcommand{\calt}{\mathcal{T}}
\newcommand{\cala}{\mathcal{A}}

\newcommand{\opt}{\texttt{OPT}\xspace}
\newcommand{\alg}{\texttt{ALG}\xspace}
\newcommand{\oja}{\textsc{flp-$\phi$-variable}\xspace}

\newcommand{\hx}{x}

\newcommand{\pr}{\mathbb{P}}

\DeclareMathOperator*{\argmax}{arg\,max}
\DeclareMathOperator*{\argmin}{arg\,min}


\newcommand{\problemkRentalVD}{\textsf{kRental-Variable}\xspace}
\newcommand{\problemkRentalFD}{\textsf{kRental-Fixed}\xspace}

\newcommand{\gocrv}{$\gamma$-\textsc{ocr-v}\xspace}

\newcommand{\ocr}{\textsc{ocr}\xspace}
\newcommand{\gocr}{$\gamma$-\ocr}
\newcommand{\locr}{1-\ocr}



\newcommand{\algkRentalFD}{\textsc{dop-$\phi$-fixed}\xspace}

\newcommand{\algorithmVD}{\textsc{dop-$\phi$-variable}\xspace}

\newcommand{\algorithmVDfrac}{\textsc{dop-$\phi$-variable-fractional}\xspace}

\newcommand{\indicator}[1]{\mathbb{I}_{\{#1\}}}





\usepackage{soul}
\sethlcolor{yellow}

\setlength{\marginparwidth}{2cm}
\usepackage{todonotes}
\newcounter{todotan}

\newcounter{todobo}

\begin{document}
\title{Online Rounding Schemes for $ k $-Rental Problems}

\author{
    Hossein Nekouyan\thanks{University of Alberta. Email: \texttt{nekouyan@ualberta.ca}}\\
    \and
    Bo Sun\thanks{University of Ottawa. Email:
    \texttt{bo.sun@uottawa.ca}}\\
    \and 
    Raouf Boutaba\thanks{University of Waterloo. Email:
    \texttt{rboutaba@uwaterloo.ca}}\\
    \and
    Xiaoqi Tan\thanks{University of Alberta. 
    Email: \texttt{xiaoqi.tan@ualberta.ca}}
}

%
%
%
\maketitle              
\begin{abstract}
We study two online resource allocation problems with reusability in an adversarial setting, namely \problemkRentalFD and \problemkRentalVD. In both problems, a decision-maker manages $k$ identical reusable units and faces a sequence of rental requests over time. We develop theoretically grounded relax-and-round algorithms with provable competitive ratio guarantees for both settings. For \problemkRentalFD, we present an optimal randomized algorithm that achieves the best possible competitive ratio. The algorithm first computes an optimal fractional allocation using a price-based approach, and then applies a novel lossless online rounding scheme to obtain an integral solution. For \problemkRentalVD, we first establish the impossibility of achieving lossless online rounding. We then introduce a limited-correlation rounding technique that treats each unit independently while introducing controlled dependencies across allocation decisions involving the same unit. Combined with a carefully-crafted price-based method for computing the fractional allocation, this approach yields an order-optimal competitive ratio for the variable-duration setting.

\end{abstract}

\section{Introduction}

This paper investigates adversarial online allocation problems involving \textit{reusable} resources, which we refer to as \textit{$k$-rental} problems. In the $k$-rental setting, a seller (or decision-maker) manages an inventory of $k$ identical units of a resource, say, $ k $ balls. A sequence of requests arrives over time, each seeking to \textit{rent} one ball for a specified duration. Once allocated, a ball becomes unavailable until the rental period ends, after which it returns to the inventory. Each request also specifies a unique valuation for renting a ball. The objective is to maximize the total valuation of all accepted requests.

In the special case without reusability (i.e., the rental durations are infinite), numerous variants of $k$-rental problems have been extensively studied under different arrival models. These include the secretary problem in the random-order model \cite{gardner1970mathematical}, prophet inequalities in stochastic IID or non-IID settings \cite{huang2025prophet,fu2024prophet,Jiashuo2023prophet}, and different forms of adversarial online selection problems, such as $ k $-search and one-way trading \cite{el2001optimal,lorenz2009optimal,jiang2021online,tan2023threshold}, the online knapsack problem~\cite{sun2022online,tan2020mechanism,zhou2008budget}, and their applications in revenue management~\cite{ma2020algorithms,ball2009toward}.

In this paper, we study two adversarial $k$-rental problems, denoted \problemkRentalFD and \problemkRentalVD. In \problemkRentalFD, all requests have a fixed and identical rental duration, but their valuations are uncertain. The \problemkRentalVD setting generalizes this by allowing rental durations to vary across requests, with each request's valuation assumed to be linear in its duration. 

For the \problemkRentalFD problem, \cite{delong2022} considers a more general online bipartite matching model in which each matched offline node becomes available again after a fixed $d$ time units. Their algorithm achieves a competitive ratio of 0.589, though it does not address the optimality of their algorithm or lower bounds of the \problemkRentalFD problem. They also proposed a second algorithm using a correlation-based rounding scheme that achieves a 0.505-competitive ratio which is very close to the 0.5 ratio attained by the greedy algorithm in this setting. However, the primary limitation of their approach lies in the rounding step, which lacks sufficient effectiveness to deliver tight performance guarantees. A more recent work by \cite{ekbatani2025} examines an online job assignment problem that overlaps with both \problemkRentalFD and \problemkRentalVD. They develop \textit{deterministic} algorithms that achieve order-optimal competitive ratios, but only in the \textit{large-inventory} regime where $k \to \infty$. This leaves open the important question of designing \textit{randomized} algorithms with tight guarantees for both \problemkRentalFD and \problemkRentalVD across all ranges of inventory sizes, including small and moderate regimes (i.e., $ k $ is  finite).

\subsection{Our Contributions}

We address the above open question affirmatively by developing randomized algorithms for $k$-rental under general inventory settings. Our algorithms achieve the optimal competitive ratio for \problemkRentalFD and a near-optimal performance guarantee for \problemkRentalVD. Specifically, our contributions are twofold:

\paragraph{\gocr and lossless online rounding schemes.} We introduce an online rounding subroutine termed $\gamma$-Online Correlated $k$-Rental (\gocr) that captures the core challenge of rounding any feasible online fractional allocations in reusable settings with fixed rental durations. The goal is to round each fractional allocation to an integral one, allocating a unit with probability at least equal to its fractional value multiplied by $ \gamma \in [0, 1] $. As a warm-up, we present an independent rounding scheme and show that it is lossless when the inventory is large but fails in small-inventory cases. To address this, we develop a new online rounding algorithm, \locr, which introduces correlations across time steps to achieve lossless rounding. We believe this method has broader applicability and may be of independent interest.

\paragraph{Implications and insights of \gocr for pricing reusable resources.}
 Our online rounding framework \gocr yields both positive and negative results for the \problemkRentalFD and \problemkRentalVD problems, respectively. For \problemkRentalFD, we design an optimal randomized algorithm grounded in the \locr framework. This algorithm first computes an optimal online fractional solution using a price-based approach and then applies the \locr rounding scheme to obtain an integral allocation. In contrast, for \problemkRentalVD with variable rental durations, we establish a fundamental impossibility result: no online rounding scheme can convert an arbitrary fractional solution into an integral one while preserving the same competitive ratio. To address this challenge, we propose a randomized algorithm based on a limited-correlation rounding scheme that treats different units independently while introducing controlled dependencies across allocation decisions involving the same unit. Although the rounding step is inherently lossy, we prove that the resulting algorithm attains an order-optimal, best-known performance guarantee for \problemkRentalVD.

\paragraph{Techniques.}
Our overarching approach follows the \textit{relax-and-round} framework, which first computes a fractional allocation for a relaxed version of the problem using a price-based method, and then rounds this solution to obtain an integral allocation. The second step, i.e., the rounding procedure, is more nuanced and central to our contribution. From a technical standpoint, to derive a pricing function that produces the optimal fractional solution for \problemkRentalFD, we adopt an online primal-dual approach to guide the design of an appropriate pricing function. For \problemkRentalVD, we employ the LP-free certificate framework developed in~\cite{goyal2020} to construct the pricing function. In this approach, designing an $\alpha$-competitive algorithm reduces to finding a feasible solution to a system of \textit{delayed differential inequalities}, which the pricing function of \algorithmVD must satisfy in order to achieve the desired competitiveness. This system, parameterized by the achievable competitive ratio, characterizes the design of a pricing function that attains the smallest possible competitive ratio.

\subsection{Related Work}

Online resource allocation—the process of assigning limited resources to sequentially arriving requests to maximize social welfare or profit—is a central topic in computer science and operations research.   
In the following we briefly mention important papers from the literature related to this work. 

\paragraph{Online scheduling.}
The adversarial online interval scheduling problem, introduced by \cite{Lipton1994OnlineIS}, considers the task of scheduling a sequence of intervals (jobs) on a single server as they arrive in order of their start times. Assuming that the minimum and maximum job durations are unknown, their randomized algorithm based on a converging sequence of coin flip probabilities achieves an $O\bigl((\ln \Delta)^{1+\varepsilon}\bigr)$ competitive ratio for any $\varepsilon > 0$, where $\Delta$ is the ratio of the longest to the shortest job duration. Motivated by this work, several subsequent studies (e.g., \cite{Faigle96,Gupta2016,Goyal2020OIS}) have explored various extensions of the problem. For instance, \cite{Goyal2020OIS} examined a variant in which the minimum and maximum job durations are known, referred to as the \emph{online reservation problem}, and proposed a static pricing algorithm with a performance guarantee of $3 \cdot \bigl(1 + \ln(\Delta)\bigr)$. However, the static threshold approach may be ill-suited for online allocation involving reusable resources, where the algorithm must manage more complex dynamics and schedule jobs to maximize total resource utilization. In Section~\ref{sec:krental:variable}, we investigate whether a dynamic pricing scheme, one that continuously updates resource prices in response to changing market conditions, can achieve stronger performance guarantees. Additionally, we develop an online rounding scheme that, when integrated with the dynamic pricing mechanism, effectively converts fractional solutions into integral ones while maintaining robust performance guarantees.

\paragraph{Online matching with reusable resources.}
Several studies, including \cite{delong2022,sun2022online,Garofalakis2002,huo2023online,ekbatani2025}, have investigated the online matching problem and its variants in settings with resource reusability. In particular, as previously discussed, \cite{delong2022} examined an online bipartite matching problem in which resources are reusable. This setting extends the classical online matching model introduced by \cite{karp90} by allowing each matched offline node to become available again after $d$ time units. In Section \ref{sec:krental-fixed}, we focus on a special case of this model involving a single resource type and valuation uncertainty. We explore whether improved competitive ratios and stronger correlation schemes can be achieved compared to their results. \cite{huo2023online} studied adversarial reusability in the context of online assortment planning, where the \problemkRentalVD problem can be reduced to their model. Their work also centers on deterministic algorithm design. Within the \problemkRentalVD framework, their algorithm achieves a competitive ratio of $4\,\ln\bigl(\tfrac{d_{\max}}{d_{\min}}\bigr)$ as the inventory size tends to infinity, where $d_{\max}$ and $d_{\min}$ denote the maximum and minimum rental durations, respectively. More recently, \cite{ekbatani2025} considered a generalized version of the \problemkRentalVD problem within the online matching framework. Their model incorporates uncertainty in each request’s per-unit-time valuation over the requested interval and across different item types. Given the close connection between their formulation and ours, we provide a detailed comparison in Appendix~\ref{apx:comparison}, deferred due to space constraints.

\paragraph{Online rounding.}
Recent work in computer science and operations research has demonstrated the effectiveness of online rounding frameworks~\cite{fahrbach2022edge,ma2024randomizedroundingapproachesonline}, which typically adopt a \textit{relax-and-round} paradigm. Notable progress has been made in designing innovative online rounding schemes in recent years. For example, \cite{fahrbach2022edge} introduced a subroutine called Online Correlated Selection, which imposes negative correlation across selected pairs to achieve improved competitive ratios over classical greedy algorithms, which are limited to a $\frac{1}{2}$-competitive ratio. Building on this idea, \cite{delong2022} developed the Online Correlated Rental method for settings with reusable resources and fixed rental durations. Similar correlation-based techniques have also been adapted to the Adwords problem by \cite{huang2024adwords}. In the context of stochastic optimization, several studies (e.g., \cite{Feldman2021,Alaei2014}) have proposed related rounding techniques, such as online contention resolution schemes, demonstrating their effectiveness in applications including prophet inequalities and online stochastic matching.

\section{A Lossless Rounding Scheme for Online Correlated $ k $-Rental}
\label{sec:losslessCorrelation}

We first introduce Online Correlated $ k $-Rental (\ocr), which is an online rounding subroutine for $ k $-rental problems. The  \ocr subroutine captures the key challenge of rounding a fractional solution in $ k $-rental problems where resources are reusable, and 

serves as a building block for algorithm design in Section~\ref{sec:krental-fixed}.

\subsection{\gocr: Definitions, Objectives, and Challenges}
\label{sec:gocr}

\begin{definition}[\gocr]
Consider a set of $k$ identical balls, each uniquely labeled from the set $\{1, 2, \ldots, k\}$. Each ball can be rented to a player for a fixed duration of $d$ time units, after which it becomes available for reuse. A sequence of $N$ players arrives one by one, where each player $n$ is characterized by a tuple $(\hat{x}_n, a_n)$. Here, $\hat{x}_n \in [0,1]$ denotes the target probability with which the procedure should assign a ball to player $n$, and $a_n$ is the arrival time of player $n$. For a fixed $\gamma \in [0,1]$, a \gocr\ is an online rounding scheme that guarantees renting a ball to each player $n \in [N]$ with probability at least $\gamma \hat{x}_n$, for all input instances $\{(\hat{x}_n, a_n)\}_{n \in [N]}$ satisfying the following condition:
\begin{align}\label{eq:ocr_conditions}
\hat x_n \leq \min \left\{1,\, k - \sum\nolimits_{j \in [n-1]} \hat x_{j} \cdot \mathbb{I}_{\{a_{j} + d > a_n\}} \right\}, \qquad\quad \forall n \in [N].
\end{align}
\end{definition}

As a rounding scheme, 
\gocr is a randomized online algorithm that makes irrevocable decisions upon the arrival of each player, either assigning a ball to the player or rejecting it.
\gocr focuses on the inputs that satisfy the regularity conditions in Eq.~\eqref{eq:ocr_conditions}. In particular, the term $\sum_{j \in [n-1]} \hat x_{j} \cdot \mathbb{I}_{\{a_{j} + d > a_n\}}$ on the right-hand side quantifies the cumulative targeted probabilities of players who arrived prior to player~$n$ and whose rental intervals (each of length~$d$) overlap with that of player~$n$. This sum accounts for the expected occupancy of the balls at the time of player~$n$’s arrival. Furthermore, the condition in Eq.~\eqref{eq:ocr_conditions} also guarantees that $\hat x_n \leq 1$ for all $n \in [N]$, thereby ensuring that the assignment probability for any individual player does not exceed the availability of a single unit. Violation of this constraint would imply that the total targeted assignment at the arrival of player~$n$ exceeds the inventory limit of $k$ units.

The goal is to design a \gocr that maximizes $\gamma$. A \locr is referred to as a \textit{lossless} rounding scheme, as it guarantees renting a ball to each player $n \in [N]$ with probability at least $\hat{x}_n$. Designing such a scheme, however, is non-trivial. As a warm-up, we first present a simple independent rounding algorithm that incurs rounding losses. This scheme highlights the limitations of uncorrelated rounding and underscores the need to correlate the allocation decision for each newly arriving player with the decisions made for previously served players who still hold rented balls. 

\subsection{Warm Up: A Simple Independent Rounding Scheme}

\begin{algorithm}[H]
\caption{\gocr using Independent Rounding}
\label{alg:independent:rounding}
\textbf{Input}: Number of balls $k$, rental duration $d$, parameter $f_k \in [0,1]$.\\
\textbf{Output}: Assignment $z_n \in \{0, 1\}$ for each player $n \in [N]$.

\For{each player $n$ with the tuple $(a_n, \hat x_n)$}{
    Compute the number of balls currently rented at time $a_n$: $y_n = \sum_{j=[n-1]} z_j \cdot \mathbb{I}_{\{a_j + d > a_n\}}$.

    Sample a random seed {$S_n \sim U(0,1)$.} 

    \If{$S_n \leq f_k \cdot \hat x_n$ \textbf{and} $y_n < k$}{
        $z_n \gets 1$; \Comment{Allocate the $(y_n+1)$-th available ball to player $n$}
    }
    \Else{
        $z_n \gets 0$. \Comment{Reject request $n$}
    }
}
\end{algorithm}

Consider the rounding scheme in Algorithm~\ref{alg:independent:rounding}, which is motivated by rounding schemes in the approximation algorithms literature.

It samples a random seed $S_n$ independently at the arrival of each player $n$ from the uniform distribution $U(0, 1)$. A ball is then allocated to player~$n$ if $S_n \leq f_k \cdot \hat x_n$ and at least one ball is available. The parameter $f_k \in [0,1]$ serves as a downscaling factor that depends on the total number of balls, $k$, and targets to reduce the allocation probability for each player. This downshift increases the likelihood that a ball will be available for future arrivals, allowing more players to be served.

\begin{proposition}
\label{lem:ir}
Given $f_k\in [0,1]$, Algorithm~\ref{alg:independent:rounding} is $\gamma_k$-\ocr, where $\gamma_k = f_k \cdot \left(1 - \exp\left( - \frac{(k - f_k \cdot k)^2}{f_k \cdot k + k } \right) \right).$

\end{proposition}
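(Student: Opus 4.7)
The plan is to lower-bound $\Pr[z_n = 1]$ by separating it into the allocation decision and the availability condition. Since the seed $S_n$ is drawn independently of $S_1, \ldots, S_{n-1}$ (and hence of the counter $y_n$, which is a function only of earlier seeds), we obtain $\Pr[z_n = 1] = f_k \hat{x}_n \cdot \Pr[y_n < k]$ directly from the algorithm's definition. Consequently, to establish the claimed bound $\Pr[z_n = 1] \geq \gamma_k \hat{x}_n$, it suffices to prove the one-sided estimate $\Pr[y_n \geq k] \leq \exp\bigl(-(k - f_k k)^2/(f_k k + k)\bigr)$.

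The main obstacle is that the indicators $z_j$ composing $y_n$ are \emph{not} independent: whether ball $j$ is actually allocated depends on the availability check $y_j < k$, which itself is a function of all earlier allocations. My workaround is to couple $y_n$ with a cleaner, dominating variable $\tilde{y}_n := \sum_{j \in \mathcal{J}_n} \mathbb{I}_{\{S_j \leq f_k \hat{x}_j\}}$, where $\mathcal{J}_n = \{j < n : a_j + d > a_n\}$ indexes the players whose rentals overlap $a_n$. Because $z_j \leq \mathbb{I}_{\{S_j \leq f_k \hat{x}_j\}}$ pointwise (dropping the availability constraint only increases the indicator), we have $y_n \leq \tilde{y}_n$, and crucially $\tilde{y}_n$ is now a sum of \emph{independent} Bernoullis. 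Its mean satisfies $\mu := \mathbb{E}[\tilde{y}_n] = f_k \sum_{j \in \mathcal{J}_n} \hat{x}_j \leq f_k k$, where the last inequality follows from condition \eqref{eq:ocr_conditions}, which forces $\sum_{j \in \mathcal{J}_n} \hat{x}_j \leq k$.

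Finally, I would apply the multiplicative Chernoff bound $\Pr[\tilde{y}_n \geq (1+\delta)\mu] \leq \exp\bigl(-\mu \delta^2/(2+\delta)\bigr)$ with $\delta = k/\mu - 1$; a short algebraic simplification collapses the exponent to $-(k-\mu)^2/(k+\mu)$. Since the function $g(\mu) := (k-\mu)^2/(k+\mu)$ is easily verified (by differentiation) to be decreasing on $[0,k]$, the least favorable value of $\mu$ for our upper bound is its largest admissible value, $\mu = f_k k$. Plugging this in yields $\Pr[y_n \geq k] \leq \Pr[\tilde{y}_n \geq k] \leq \exp\bigl(-(k-f_k k)^2/(f_k k + k)\bigr)$. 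Combining with the decomposition from the first paragraph gives $\Pr[z_n = 1] \geq f_k \hat{x}_n \cdot \bigl(1 - \exp\bigl(-(k-f_k k)^2/(f_k k + k)\bigr)\bigr) = \gamma_k \hat{x}_n$, as required.
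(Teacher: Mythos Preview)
Your proposal is correct and follows essentially the same approach as the paper: both decompose $\Pr[z_n=1]$ via independence of $S_n$, dominate the dependent occupancy $y_n$ by the independent sum $\tilde y_n=\sum_{j\in\mathcal J_n}\mathbb{I}_{\{S_j\le f_k\hat x_j\}}$, bound its mean by $f_k k$ using condition~\eqref{eq:ocr_conditions}, and then apply the Chernoff bound in the form $\Pr[S\ge A]\le\exp\bigl(-(A-\mu)^2/(A+\mu)\bigr)$ together with the monotonicity of the exponent in $\mu$. Your derivation of that exponent from the multiplicative Chernoff form $\exp\bigl(-\mu\delta^2/(2+\delta)\bigr)$ with $\delta=k/\mu-1$ is the only cosmetic difference; the paper simply states the $(A-\mu)^2/(A+\mu)$ version directly.
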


\begin{figure}
    \centering
    \includegraphics[width=0.5\linewidth]{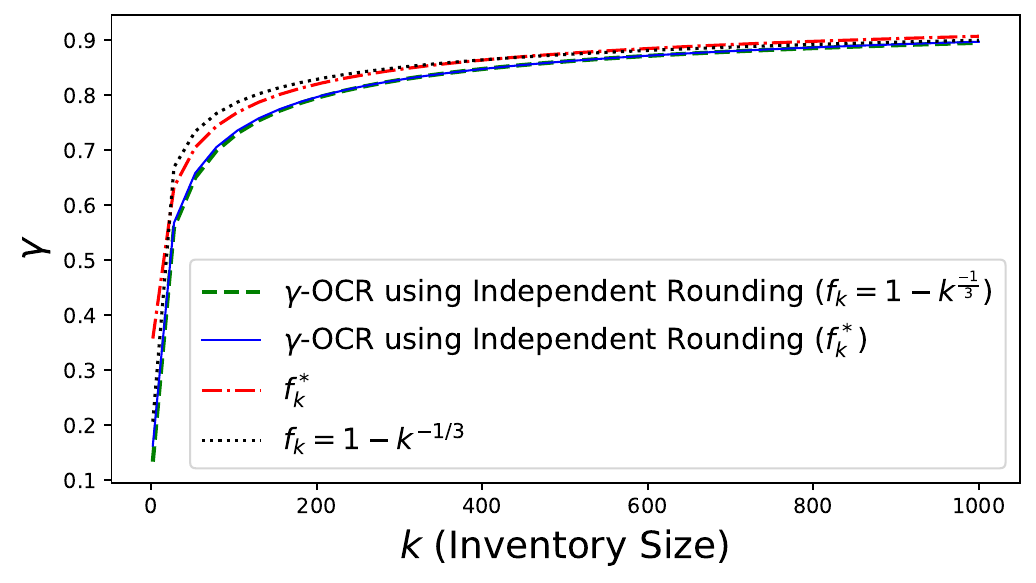}
    \caption{Comparison of the approximate parameter $f_k = 1 - \frac{1}{k^{1/3}}$ versus the optimal parameter $f^*_k$, and the resulting $\gamma$ values for Algorithm~\ref{alg:independent:rounding} using $ f_k$ and $f_k^*$. }
    \label{fig:independet}
\end{figure}

The proof of Proposition \ref{lem:ir} is based on the Chernoff bound inequality and is provided in Appendix \ref{apx:proof:indp:rounding}. This result also informs the design of the downscaling parameter $f_k$ as a function of $k$ to maximize $\gamma_k$. The optimal value $f_k^*$ cannot be obtained in closed form, since it involves solving a transcendental equation of the form $A(x) = \ln(1 - D(x))$. However, it can be efficiently computed using numerical methods such as binary search, and is illustrated in Figure~\ref{fig:independet}. Based on the numerical results, we also observe that the optimal parameter $f_k^*$ closely follows a simple expression $f_k = 1 - \frac{1}{k^{1/3}}$. 

Figure~\ref{fig:independet} shows that the performances of Algorithm~\ref{alg:independent:rounding} with $f_k = 1 - \frac{1}{k^{1/3}}$ and $f_k^*$ are also close. 
In addition, as seen in the figure, for large values of $k$ the resulting $\gamma$ approaches one; however, for small inventory sizes (small values of $k$), the algorithm is not an effective rounding scheme. This motivates our design for a lossless rounding scheme.

\subsection{A \locr Through Dependent Rounding}

In this subsection, we propose a \locr by correlating the allocation decisions in Algorithm \ref{alg:locr}.

\paragraph{Key idea of creating negative correlation.}  
At a high level, the key idea of Algorithm \ref{alg:locr} is to correlate each current allocation decision with past decisions involving players whose allocated balls are expected to return sooner. Specifically, prior to the arrival of any player, the algorithm samples a single random seed $r \sim U(0,1)$, which remains fixed throughout the entire execution and serves as the only source of randomness. This shared seed induces correlation across allocation decisions for different players. For each player $n$, the algorithm uses this seed to determine whether to allocate a ball. It maintains two pointers: $m_n \in \{1, 2, \dots, k \}$ and $p_n \in [0,1]$. The first pointer, $m_n$, indicates the ball under consideration for allocation to player $n$; this ball may be unavailable if it is currently assigned to a previous player. The second pointer, $p_n$, specifies a subinterval of $[0,1]$ used to guide the allocation decision. For each player $n$, the algorithm proceeds based on the relationship between $p_n$ and $\hat{x}_n$, as described in the following two cases.
\begin{itemize}
    \item \textbf{Case 1:} If $p_n + \hat x_n < 1$, then the algorithm allocates ball $m_n$ to player $n$ if $r \in [p_n,\,p_n+\hat x_n)$ and the ball is available in the system. Otherwise, the player is rejected and no ball is allocated.

    \item \textbf{Case 2:} If $p_n + \hat x_n \geq 1$, then the algorithm allocates ball $m_n$ if $r \in [p_n,\,1]$ (and the ball is available in the system), or allocates ball $m_n+1$ if $r \in [0,\,p_n+\hat x_n-1)$. Since $\hat x_n \leq 1$, these intervals are non-overlapping, ensuring that no more than one ball is allocated.
\end{itemize}

\begin{algorithm}[tbh]
\caption{$1$-\ocr using Dependent Rounding}
\label{alg:locr}
\textbf{Input}: Number of balls $k$, rental duration $d$.

\textbf{Output}: Assignment $z_n \in \{0, 1\}$ for each player $n \in [N]$.

\textbf{Initialize:} Set $m_1 = 1$, $p_1 = 0$.
sample a random seed $r \sim U(0,1)$.

\For{each request $n$ with the tuple $(a_n, \hat x_n)$}{

\If{$ \sum_{j \in [n]} \hat x_{j} \cdot \mathbb{I}_{\{a_{j} + d > a_n\}} > k $} {

$z_n \leftarrow 0$. \Comment{Reject player $n$}

}

    \If{$p_n + \hat x_n < 1$}{
        \If{$r \in [p_n, p_n+\hat x_n)$}  { \label{eq:kRental:OCKR:if1}

        $z_n \leftarrow 1$. \Comment{Assign ball $m_n$ to player $n$}
            
        }
        \Else{
        
        $z_n \leftarrow 0$. \Comment{Reject player $n$}

        }
         
        Update $p_{n+1} = p_n + \hat x_n$ and $m_{n+1} = m_n$.
    }
    \Else{
        \If{$r \in [p_n, 1]$} { \label{eq:kRental:OCKR:if2}

        $z_n \leftarrow 1$. \Comment{Allocate ball $m_n$ to player $n$}
            
        }
        \ElseIf{$r \in [0, \hat x_n + p_n - 1)$}{ \label{eq:kRental:OCKR:if3}

        $z_n \leftarrow 1$. \Comment{Allocate ball $m_n+1$ to player $n$}
            
        }    
    \Else{
    $z_n \leftarrow 0$. \Comment{Reject player $n$}
    
    }
    Update $p_{n+1} = \hat x_n + p_n - 1$ and $m_{n+1} = m_n + 1$ (if $m_{n+1} > k$, then set $m_{n+1} = 1$).
    }
}
\end{algorithm}

\paragraph{Key invariant of Algorithm \ref{alg:locr}.}

It is worth noting that Algorithm \ref{alg:locr} does not verify the availability of balls $m_n$ and $m_n+1$ before assigning them to players. This omission is justified by some key invariants that we establish in Proposition \ref{lemma:kRental:VD:loca:availability} below concerning the availability of these balls.

\begin{proposition}[Invariants of Algorithm~\ref{alg:locr}]
\label{lemma:kRental:VD:loca:availability}
Upon the arrival of the $n$-th player, the following holds:
\begin{itemize}
    \item If $p_n + \hat x_n < 1$ and the random seed $r \in [p_n,\, p_n+\hat x_n)$, then the ball $m_n$ is available.
    \item If $p_n + \hat x_n \ge 1$ and the random seed $r \in [p_n,\, 1]$, then the ball $m_n$ is available.
    \item If $p_n + \hat x_n \ge 1$ and the random seed $r \in [0,\, p_n+\hat x_n-1)$, then the ball $m_n +1$ is available.

\end{itemize}
\end{proposition}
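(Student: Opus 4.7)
The plan is to recast Algorithm~\ref{alg:locr} geometrically so that the three bullets collapse into a single disjointness statement. Define $C_n := \sum_{j<n}\hat x_j$. An easy induction on $n$, inspecting how the algorithm updates $(m_{n+1}, p_{n+1})$ in both Case~1 and Case~2, shows that $p_n = C_n \bmod 1$ and $m_n - 1 \equiv \lfloor C_n \rfloor \pmod{k}$. I would then embed the $k$ balls on a circle of circumference $k$, with ball $b\in\{1,\dots,k\}$ occupying the arc $[b-1,b)$ (indices mod $k$), associate to each player $n$ the allocation arc $I_n := [C_n \bmod k,\, C_{n+1} \bmod k)$ of length $\hat x_n$, and associate to the shared seed $r$ the $k$ ``test points'' $\{(b-1)+r : b\in[k]\}$ on the circle. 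A direct case check of the three if-branches of Algorithm~\ref{alg:locr} shows that player $n$ is allocated ball $b$ (for $b=m_n$ in the first two branches, or $b=m_n+1$ in the third) exactly when the test point $(b-1)+r$ lies in $I_n$. Hence ball $b$ is already rented at time $a_n$ by some earlier player $j<n$ with $a_j + d > a_n$ iff that earlier $I_j$ also contains $(b-1)+r$.

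The key step is the following geometric invariant: at the arrival of player $n$, the arcs $\{I_j : j<n,\, a_j+d>a_n\}\cup\{I_n\}$ are pairwise disjoint on the circle. Because all rental durations equal the same constant $d$ and arrivals are time-ordered, the set of currently active players is a contiguous tail $\{j_{\min}, j_{\min}+1,\dots,n-1\}$, where $j_{\min}$ is the earliest index with $a_{j_{\min}}+d>a_n$. Since allocation arcs are placed end-to-end by construction ($C_{j+1}=C_j+\hat x_j$), the union $I_{j_{\min}}\cup\cdots\cup I_n$ is a single arc of real-line length $\sum_{j=j_{\min}}^{n}\hat x_j$. The regularity condition \eqref{eq:ocr_conditions} applied at player $n$ gives exactly $\sum_{j=j_{\min}}^{n}\hat x_j \le k$, so when wrapped onto a circle of circumference $k$ this arc does not self-overlap. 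Pairwise disjointness follows immediately.

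Combining the two pieces closes the argument: in each of the three bulleted scenarios, the triggering condition on $r$ is equivalent to the relevant test point $(b-1)+r$ lying in $I_n$, and the disjointness invariant then guarantees that $(b-1)+r$ lies in no active $I_j$ with $j<n$. Thus the targeted ball is not currently rented, i.e., it is available. The main obstacle I anticipate is the bookkeeping in the first step: rigorously matching the algorithm's explicit offsets $p_n$ and ball indices $m_n$ (including the wraparound $m_{n+1}\to 1$ when $m_n=k$, and the straddling of $I_n$ across the cut between $[k-1,k)$ and $[0,1)$) with the clean circle statement ``allocate ball $b$ iff $(b-1)+r\in I_n$.'' Once this identification is made, the disjointness step via \eqref{eq:ocr_conditions} is short and uniform across all three cases.
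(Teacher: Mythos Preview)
Your proof is correct and genuinely different from the paper's. The paper argues at the level of the unit interval $[0,1]$ separately for each ball: it defines, for every ball $i$ and every arrival $n$, a set $R_n^{(i)}$ of sub-intervals of $[0,1]$ that would trigger an allocation of ball $i$ to some currently active player, proves by induction (with an auxiliary Proposition classifying $R_n^{(i)}$ into four structural forms) that no two of these sub-intervals overlap, and checks in each of eight subcases (four forms $\times$ element added/removed) that the structure is preserved. Your argument instead lifts everything to a single circle of circumference $k$, shows that the algorithm's pointer pair $(m_n,p_n)$ is precisely the ball-index/fractional decomposition of $C_n=\sum_{j<n}\hat x_j\bmod k$, and then observes that all active allocation arcs $I_j$ are \emph{consecutive} because the fixed duration $d$ forces the active set to be a contiguous tail of indices; the feasibility constraint~\eqref{eq:ocr_conditions} then bounds the total active arc length by $k$, giving disjointness in one stroke.

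What each buys: your route is shorter, avoids the four-form case analysis entirely, and makes transparent \emph{why} fixed duration is essential---it is exactly what gives the contiguous-tail property, which is what fails under variable durations (cf.\ Theorem~\ref{prop:rounding:impossiblity}). The paper's route, by tracking per-ball interval structure explicitly, stays closer to the algorithm's literal state and might be easier to adapt if one wanted to reason about partial or lossy variants where the arcs do overlap in a controlled way; but for the present lossless statement your circle picture is the cleaner argument. The bookkeeping you flag (matching $(m_n,p_n)$ to $C_n\bmod k$ and handling the $m_n=k$ wraparound) is routine once $C_n$ is defined; the induction you sketch for $p_n=C_n\bmod 1$ and $m_n-1\equiv\lfloor C_n\rfloor\pmod k$ goes through exactly as you say.
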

The proof of the above proposition is provided in Appendix \ref{apx:proof:lemma:kRental:VD:loca:availability}. The pointer mechanism described therein plays a critical role in synchronizing the scheduling of different balls for consecutive players. It also maintains a record of subintervals within $[0,1]$ such that, if the random seed $r$ falls within these intervals, either ball $m_n$ or $m_{n+1}$ remains available. As explained in the context of Algorithm~\ref{alg:locr}, the total length of the subinterval used for decision-making for each player $n$ is precisely $\hat x_n$. Since $r$ is sampled from the uniform distribution $U(0,1)$ and a ball is assigned when $r$ lies within the corresponding subinterval, each player receives a ball with the target probability $\hat x_n$.

Leveraging the key invariants established in Proposition~\ref{lemma:kRental:VD:loca:availability}, we can conclude that Algorithm~\ref{alg:locr} is lossless.

\begin{theorem}
\label{thm:correctness-OCS-algorithm}
Algorithm \ref{alg:locr} is a \locr, namely, it is a lossless rounding scheme.

\end{theorem}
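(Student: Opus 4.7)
The statement asserts that, for every admissible input, Algorithm~\ref{alg:locr} allocates a ball to each player $n$ with probability at least $\hat{x}_n$. My plan is to decouple the argument into (i) showing that the outer ``overbooked'' rejection test is never triggered by inputs satisfying Eq.~\eqref{eq:ocr_conditions}, and (ii) computing, with the help of the availability invariants from Proposition~\ref{lemma:kRental:VD:loca:availability}, the exact measure of the seed set on which the algorithm outputs $z_n=1$. For (i), the bound $\hat{x}_n\le k-\sum_{j\in[n-1]}\hat{x}_j\cdot\mathbb{I}_{\{a_j+d>a_n\}}$ from Eq.~\eqref{eq:ocr_conditions} is exactly the statement that $\sum_{j\in[n]}\hat{x}_j\cdot\mathbb{I}_{\{a_j+d>a_n\}}\le k$, so the outer check never fires and player $n$ always proceeds into Case~1 or Case~2.

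Once inside Case~1 or Case~2, I would invoke Proposition~\ref{lemma:kRental:VD:loca:availability} to conclude that whenever the random seed $r$ satisfies the condition tested on Lines~\ref{eq:kRental:OCKR:if1}--\ref{eq:kRental:OCKR:if3}, the ball designated by the pointer $m_n$ (or $m_n+1$, in the wrap-around case) is actually free, so the assignment succeeds and $z_n=1$. Thus the event $\{z_n=1\}$ coincides exactly with the event that $r$ lies in the designated subinterval(s). A direct length computation then delivers the target probability: in Case~1 ($p_n+\hat{x}_n<1$), the subinterval $[p_n,p_n+\hat{x}_n)$ has length $\hat{x}_n$; in Case~2 ($p_n+\hat{x}_n\ge 1$), the disjoint union $[p_n,1]\cup[0,p_n+\hat{x}_n-1)$ has total length $(1-p_n)+(p_n+\hat{x}_n-1)=\hat{x}_n$. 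Since $r\sim U(0,1)$, in both cases $\pr[z_n=1]=\hat{x}_n$, which establishes losslessness.

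The genuine obstacle has already been absorbed into Proposition~\ref{lemma:kRental:VD:loca:availability}: the hard part is showing that the joint update rule $p_{n+1}=(p_n+\hat{x}_n)\bmod 1$ together with the cyclic increment of $m_n$ keeps a ball free precisely at the moment the pointer lands on it, which is the negative-correlation mechanism that the design exploits. Without that invariant one would have to open up the circular bookkeeping of the pointers and verify directly that any two players who share the same ball index $m$ are separated by enough accumulated targeted probability to guarantee the ball has returned to inventory by the time it is re-pointed to. Once Proposition~\ref{lemma:kRental:VD:loca:availability} is in hand, however, the proof of Theorem~\ref{thm:correctness-OCS-algorithm} reduces to the transparent measure-theoretic bookkeeping sketched above.
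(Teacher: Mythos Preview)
Your proposal is correct and follows essentially the same approach as the paper: split into the two pointer cases, invoke Proposition~\ref{lemma:kRental:VD:loca:availability} to guarantee availability of the designated ball whenever $r$ lands in the relevant subinterval, and compute the Lebesgue measure of that subinterval to obtain $\hat{x}_n$ in both cases. Your step~(i) on the overbooked rejection test is a small completeness addition the paper omits, but the core argument is identical.
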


Due to the space limit, we defer the full proof of the above theorem to Appendix~\ref{apx:thm:correctness-OCS-algorithm}, and provide a detailed example to explain how Algorithm~\ref{alg:locr} works in Appendix~\ref{apx:example:locr}.

\section{\problemkRentalFD: $k$-Rental with Fixed Rental Durations}
\label{sec:krental-fixed}

In this section, we define the  \problemkRentalFD problem and present an optimal randomized algorithm based on the rounding scheme, \locr, developed in the previous section.

\subsection{Problem Formulation and Assumptions}

Let us formally define the online $k$-rental problem (\problemkRentalFD) as follows. A decision maker has $k$ units of an item and allocates them to online arriving requests. 
Each request $n \in [N]$ asks to rent one item for $d$ time units. 
The $n$-th request arrives at time $a_n$, and has a valuation $v_n$ for renting one item. In particular, if an item is allocated to request $n$, the item is rented starting from $a_n$, and becomes available again at time $a_n + d$. 

Let $x_n \in \{0,1\}$ denote the decision to accept or reject the $n$-th request. The objective of the problem is to maximize the total valuation of requests that receive an item allocation, i.e., $\sum_{n \in [N]} x_n \cdot v_n$, while respecting the constraints on available items.

Let $I = \{v_n, a_n\}_{n \in [N]}$ denote an instance of \problemkRentalFD. The maximum valuation from the optimal clairvoyant algorithm, $\opt(I)$, can be determined by solving the following optimization problem:
\begin{subequations}
\label{lp:kRentalFD:primal}
\begin{align}
 \max_{x_n} \quad & \sum\nolimits_{n \in [N]} v_n \cdot x_n, \\
 \label{eq:capacity}
    \text{s.t.} \quad & \sum\nolimits_{j\in [n]} x_{j} \cdot \indicator{d + a_{j} > a_n} \leq k,\quad \forall\, n \in [N], \\
    & x_n \in \{0,1\},\quad \forall\, n \in [N].
\end{align}    
\end{subequations}

The constraint~\eqref{eq:capacity} ensures that at any point in time throughout the horizon, no more than $k$ items are allocated. 
The left-hand-side of this constraint counts the number of units of the item allocated at that moment, including the decision made for the arriving request~$n$.
Thus, it is sufficient to enforce this constraint only at the arrival of each request~$n$, when the available inventory may decrease.

In the online setting, the decision maker must make an irrevocable decision to accept or reject each request upon its arrival. The uncertainty regarding future requests' valuations and the overall demand for the items makes this decision challenging. 
To achieve a bounded performance, we follow the literature and assume the valuations of requests are bounded.
\begin{assumption}
\label{ass:bounded-value}
All valuations are within the range $[v_{\min},v_{\max}]$,
    i.e., $ v_n \in [v_{\min},v_{\max}], \forall n\in[N]$.
\end{assumption}

Let $\mathcal{I}$ denote the set of all instances of the \problemkRentalFD that satisfy Assumption~\ref{ass:bounded-value}. Our goal is to design online algorithms whose objective is competitive with that of $\opt(I)$ for every instance $I \in \mathcal{I}$. 
Specifically, an online algorithm $\alg$ is said to be $\alpha$-competitive if, for any input instance $ I $, the following inequality holds $\alpha \ge \frac{\opt(I)}{\mathbb{E}[\alg(I)]}$, where the expectation $\mathbb{E}[\alg(I)]$ is taken over the randomness of the online algorithm.

In the following section, we present an algorithm for the \problemkRentalFD problem that achieves the minimum possible competitive ratio among all online algorithms. This result strictly improves upon the deterministic algorithm proposed by \cite{ekbatani2025}, which attains only an order-optimal competitive ratio in the asymptotic regime where the inventory size tends to infinity.

\subsection{\algkRentalFD: A Relax-and-Round Algorithm for \problemkRentalFD}
\label{sec:kRental:fixed:algorithm:explanation}

We introduce a randomized online algorithm, \algkRentalFD, presented in Algorithm~\ref{alg:kRentalFD}, which is based on a general \textit{relax-and-round} framework. In the \textit{relaxation step}, we design an online algorithm for a continuous version of problem~\eqref{lp:kRentalFD:primal} by relaxing the integrality constraint to $x_n \in [0,1]$ for all $n \in [N]$. This algorithm produces an online fractional solution $\{\hat{x}_n\}_{n \in [N]}$. In the \textit{rounding step}, the algorithm converts the fractional solution $\hat{x}_n$ into an integral decision $x_n \in \{0,1\}$.

\paragraph{Core idea for obtaining fractional allocation: Pseudo-utility maximization in Eq. \eqref{eq:kRental:FD:fraction:allocation}.} 
Upon the arrival of request $n$, the algorithm first computes an expected utilization level $y_n = \sum_{j=1}^{n-1} \hx_{j} \cdot \indicator{a_{j} + d > a_n}$, which is the expected number of resource units that are currently rented by the previous requests whose rental durations overlap with that of request $n$. Using this expected utilization level, the algorithm determines the fractional allocation by solving a \textit{pseudo-utility maximization} problem as described in Eq.~\eqref{eq:kRental:FD:fraction:allocation}. The first term $x \cdot v_n$ is the valuation from request $n$, and the second term is a \textit{pseudo-cost} of renting $x$ unit of an item when the current utilization level is $y_n$. Specifically, the pseudo-cost, $k\, \int_{y_n/k}^{(y_n+x)/k} \phi(\eta) \, d\eta$, is estimated using a carefully-designed, normalized \textit{marginal pricing} function $\phi: [0,1] \to [v_{\min}, v_{\max}]$. Thus, the integration of $ \phi $ over the resource utilization interval $ [\frac{y_n}{k}, \frac{y_n+x}{k}] $ gives the {pseudo-cost}. 

In this context, the fractional allocation $\hat x_n$ represents the optimal fraction of a resource unit to allocate to request $n$, given their valuation $v_n$ and the pricing rule defined by $\phi$.

\paragraph{Rounding subroutine: \locr.} In this step, the fractional solution $\hat x_n$ is passed to the lossless online rounding procedure \locr. This procedure generates an integral decision $x_n$ on whether to accept the request. This procedure operates as an online algorithm, with an instance of this algorithm initiated at the start of Algorithm~\ref{alg:kRentalFD}. As each request $n$ arrives at time $a_n$, the \locr procedure receives a probability value $\hat x_n \in [0, 1]$ (generated by the relaxation step) as input at time $a_n$. Then it makes an integral decision $x_n$ on whether to accept or reject the request. All rounding decisions are based on one random seed, which is fixed when Algorithm~\ref{alg:kRentalFD} initiates the \locr instance and remains the same for all requests.

\begin{algorithm}[htb]
\caption{Duration-Oblivious Price-based Algorithm with Lossless Rounding for \problemkRentalFD (\algkRentalFD); 
}
\label{alg:kRentalFD}
\textbf{Input:} Pricing function $\phi: [0,1] \rightarrow [v_{\min}, v_{\max}]$.

\textbf{Initiate} an instance of \locr procedure.

\For{each arriving request $n$ with $(v_n, a_n)$}{
    Compute the expected utilization level at time $ a_n $: $y_n = \sum_{j=1}^{n-1} \hx_{j} \cdot \indicator{a_{j} + d > a_n}$.

    \If{ $y_n < k$} 
    {
    Compute the fractional allocation: \Comment{Relax step}
    \begin{align}
    \label{eq:kRental:FD:fraction:allocation}
    \hat x_n = \argmax_{x \in [0,\,\min\{1,\,k - y_n\}]}  x \cdot v_n - k\, \int_{\eta=y_n/k}^{(y_n+x)/k}  \phi(\eta)\,d\eta.
    \end{align}
    
    Decide the integral allocation by Algorithm~\ref{alg:locr}: $x_n =$ \locr$(\hat x_n, a_n)$. \label{line:rounding} \Comment{Round step}
    } 
    \Else{$x_n = 0$. \Comment{Reject request $n$}}
}
\end{algorithm}
I 

Before leaving this subsection, we remark that the pseudo-utility maximization that yields the fractional allocation in Eq. \eqref{eq:kRental:FD:fraction:allocation} differs fundamentally from the pseudo-cost frameworks of \cite{ekbatani2025,sun2022online}. Those works adopt a \textit{forward-looking} approach that estimates the cost of accepting a request using the \emph{entire} utilization trajectory over the request’s rental interval. Thus, the pseudo-cost is a functional of the utilization curve throughout the interval, not merely the level observed at arrival. In contrast, our algorithm adopts a much simpler \textit{myopic} approach that computes its pseudo-cost just based on the \emph{current} utilization at the moment the request arrives (hence the term ``\textit{duration-oblivious}"). We next show that it is a simpler yet still competitive pricing rule.

\subsection{Theoretical Guarantee of \algkRentalFD}

As explained in Section~\ref{sec:kRental:fixed:algorithm:explanation}, \algkRentalFD employs the \locr procedure to round the fractional allocation $\hat x_n$ at each time step and to make an integral decision. For this procedure to function correctly, the fractional solution generated by the relax step must satisfy the feasibility conditions in Eq.~\eqref{eq:ocr_conditions} for \locr.

\begin{lemma}[Feasibility of Fractional Solutions]
\label{lemma:feasibility-oca}

In \algkRentalFD, the input of the round step (line \ref{line:rounding} of Algorithm \ref{alg:kRentalFD}),  $ \{(\hat x_n, a_n)\}_{n \in [N]}$, satisfies the feasibility condition in Eq. \eqref{eq:ocr_conditions}.
\end{lemma}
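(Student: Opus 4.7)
The plan is to verify the claim directly by unwrapping the construction of the fractional allocation in \algkRentalFD. First, I would align notation: the expected utilization level $y_n = \sum_{j=1}^{n-1} \hat x_{j} \cdot \indicator{a_{j} + d > a_n}$ computed at the start of each iteration is, by its very definition, identical to the cumulative sum $\sum_{j \in [n-1]} \hat x_{j} \cdot \indicator{a_{j} + d > a_n}$ appearing on the right-hand side of Eq.~\eqref{eq:ocr_conditions}. This bookkeeping identification is the only nontrivial step, and it is immediate; everything else falls out mechanically from the structure of the optimization.

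Next, I would split the argument according to the two branches of the conditional inside \algkRentalFD. In the branch where $y_n < k$, the fractional allocation $\hat x_n$ is obtained as the maximizer in Eq.~\eqref{eq:kRental:FD:fraction:allocation} over the closed interval $[0,\,\min\{1, k - y_n\}]$. Since the optimizer lies inside this interval by construction, we immediately get $\hat x_n \le \min\{1, k - y_n\}$, which is exactly the condition required by \locr. In the complementary branch $y_n \ge k$, the algorithm bypasses the optimization and sets $x_n = 0$; interpreting the corresponding input to \locr as $\hat x_n = 0$, the inequality $\hat x_n \le \min\{1, k - y_n\}$ holds trivially, since $k - y_n \le 0$ forces $\min\{1, k-y_n\} \le 0$ and $\hat x_n = 0$ matches this bound. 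Taking both branches together establishes the feasibility condition for every $n \in [N]$, and the proof is complete.

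I do not expect any genuine technical obstacle here: the feasibility condition has been hard-wired into the domain of the $\argmax$ in the relaxation step, so the proof is essentially a matter of reading the constraint off the optimization problem. The only point worth articulating carefully is the treatment of the rejected-without-call case $y_n \ge k$, where one must explicitly designate a placeholder input to \locr in order for the statement $\{(\hat x_n, a_n)\}_{n \in [N]}$ to be well posed; taking $\hat x_n = 0$ is the natural and consistent choice, and it is consistent with the integral decision $x_n = 0$ that the algorithm actually makes in this case.
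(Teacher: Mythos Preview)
Your proposal is correct and matches the paper's approach: the paper simply asserts that the lemma ``naturally follows from the design of \algkRentalFD'' without spelling out details, and your argument is precisely the explicit unpacking of that assertion. The identification of $y_n$ with the cumulative sum in Eq.~\eqref{eq:ocr_conditions}, the feasibility inherited from the domain of the $\argmax$, and the placeholder $\hat x_n = 0$ in the $y_n \ge k$ branch are exactly the observations needed, and none is more than a direct reading of Algorithm~\ref{alg:kRentalFD}.
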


The above lemma naturally follows from the design of \algkRentalFD. Based on this, we prove the performance guarantee for \algkRentalFD.

\begin{theorem}
\label{them:kRentalFD}
 \algkRentalFD is $ \bigl(1+\ln \bigl(\tfrac{v_{\max}}{v_{\min}}\bigr) \bigr)$-competitive for \problemkRentalFD when  $\phi$ is given by 
\begin{align}
\label{eq:pricing-FD}
\phi(y) = v_{\min} \cdot \exp\Bigl(\bigl(1+\ln(\tfrac{v_{\max}}{v_{\min}})\bigr)\cdot y - 1\Bigr),
\quad 
y \in [0,1].
\end{align}
\end{theorem}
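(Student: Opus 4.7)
The argument has two phases. In the \emph{rounding phase}, I would combine Lemma~\ref{lemma:feasibility-oca} (which certifies that the fractional inputs $\{(\hat{x}_n, a_n)\}$ satisfy the feasibility condition~\eqref{eq:ocr_conditions}) with Theorem~\ref{thm:correctness-OCS-algorithm} (which shows \locr is lossless) to conclude that each request $n$ is allocated a ball with probability exactly $\hat{x}_n$. Hence $\mathbb{E}[\alg(I)] = \sum_n v_n \hat{x}_n$, so it suffices to prove the fractional competitive bound $\opt(I) \le \alpha \sum_n v_n \hat{x}_n$ for $\alpha = 1 + \ln(v_{\max}/v_{\min})$.

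In the \emph{competitive analysis phase}, I would relax the integrality constraint in \eqref{lp:kRentalFD:primal} to $x_n \in [0,1]$ and form its LP dual, with nonnegative variables $\mu_{n'}$ for each capacity constraint at arrival $n'$ and $\lambda_n$ for each upper bound $x_n \le 1$. The dual constraint is
\begin{align*}
\sum_{n' \ge n:\, a_{n'} < a_n + d} \mu_{n'} + \lambda_n \;\ge\; v_n, \quad \forall n,
\end{align*}
with objective $\sum_{n'} k \mu_{n'} + \sum_n \lambda_n$. The plan is to construct these duals online from $\phi$: set $\mu_n$ in terms of the marginal pseudo-cost $k \int_{y_n/k}^{(y_n + \hat{x}_n)/k} \phi(\eta)\, d\eta$ incurred at $n$'s arrival, and set $\lambda_n = (v_n - \phi((y_n + \hat{x}_n)/k))^+$ to absorb slack when $\hat{x}_n$ is saturated at $1$. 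The KKT conditions of the pseudo-utility maximization~\eqref{eq:kRental:FD:fraction:allocation}---yielding $v_n = \phi((y_n + \hat{x}_n)/k)$ on the interior, $v_n \ge \phi((y_n + \hat{x}_n)/k)$ when $\hat{x}_n \in \{1, k - y_n\}$, and $v_n \le \phi(y_n/k)$ when $\hat{x}_n = 0$---together with monotonicity of $\phi$ will certify dual feasibility.

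The crux is showing $\sum_{n'} k \mu_{n'} + \sum_n \lambda_n \le \alpha \sum_n v_n \hat{x}_n$. Here the exponential form of $\phi$ in~\eqref{eq:pricing-FD} is essential: it satisfies the ODE $\phi'(y) = \alpha \phi(y)$ with boundary conditions $\phi(0) = v_{\min}/e$ and $\phi(1) = v_{\max}$, which implies the pseudo-cost identity
\begin{align*}
k \int_{y_n/k}^{(y_n + \hat{x}_n)/k} \phi(\eta) \, d\eta \;=\; \frac{k}{\alpha}\bigl(\phi((y_n + \hat{x}_n)/k) - \phi(y_n/k)\bigr).
\end{align*}
Combining this identity with the pseudo-utility inequality $v_n \hat{x}_n \ge k \int \phi$ and amortizing the $\lambda_n$ contributions across the saturated cases yields the desired factor-$\alpha$ bound. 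Chaining $\opt(I) \le \text{dual value} \le \alpha \cdot \mathbb{E}[\alg(I)]$ then closes the argument.

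The main obstacle I anticipate lies in establishing dual feasibility simultaneously with the value bound under reusability. Unlike classical online knapsack, each dual constraint here involves a \emph{sum} of $\mu_{n'}$ over all arrivals in $n$'s rental window $[a_n, a_n + d)$, and the utilization $y(t)$ is non-monotone because past rentals expire. Handling the interaction of the exponential pricing with this fluctuating utilization---and in particular avoiding double counting across overlapping rental windows of different requests---is the technical heart of the proof and likely requires a careful amortized charging that distributes each marginal pseudo-cost across the dual constraints it serves.
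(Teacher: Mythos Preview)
Your primal-dual scaffolding coincides with the paper's, and your diagnosis of the obstacle is correct: the dual constraint for request $n$ sums $\mu_{n'}$ over \emph{future} arrivals in $[a_n,a_n+d)$, whereas $y_n$---and hence the pseudo-cost needed to certify $v_n$---is determined by requests that arrived \emph{before} $n$. What your plan does not supply is the concrete charging rule that bridges this mismatch; assigning request $n$'s marginal pseudo-cost to its own $\mu_n$ does not make past contributions visible in later constraints, and the amortization you allude to is left unspecified.

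The paper's resolution is a single deferred-charging step. Credit the increment $F\int_{y_n/k}^{(y_n+\hat x_n)/k}\phi$ not to $\mu_n$ but to $\mu_{\nu^*_n}$, where $\nu^*_n:=\max\{j\ge n:a_j<a_n+d\}$ is the \emph{last} arrival in $n$'s rental window. With fixed durations, if $j<m$ and $a_j+d>a_m$ (so $j$ is still active at $a_m$ and contributes to $y_m$), then $m\le\nu^*_j$ and $a_{\nu^*_j}<a_j+d\le a_m+d$; hence $\nu^*_j$ lands inside $m$'s window. Consequently $\sum_{n'\ge m:\,a_{n'}<a_m+d}\mu_{n'}$ collects the pseudo-cost of \emph{every} request holding a unit at time $a_m$. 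Sorting those requests by arrival and using $y_{n_i}\ge\sum_{\ell<i}\hat x_{n_\ell}$ telescopes the sum to at least $F\int_0^{(y_m+\hat x_m)/k}\phi$, after which dual feasibility and the per-request bound $\Delta D_n\le\alpha\,v_n\hat x_n$ follow from the exponential structure of $\phi$ (your ODE observation $\phi'=\alpha\phi$ is exactly what is used). This one re-indexing is the idea your proposal flags as the ``technical heart'' but leaves open.
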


The complete proof of Theorem \ref{them:kRentalFD} is provided in Appendix \ref{apx:proof:them:kRentalFD}. Our  proof follows the well-established online primal-dual (OPD) framework, with a notable deviation from the conventional approach: the updates to the dual variables are deferred until after the arrival of the final request similar to the work done in \cite{huo2023online}. Specifically, the dual linear program involves variables $\{u_n, \lambda_n\}_{n \in [N]}$. In line with the interpretation in \cite{eden2021economics}, $\lambda_n$ can be viewed as the per-unit price of the item at the arrival of request~$n$, while $u_n$ represents the utility that request~$n$ derives under the pricing scheme that determines the fractional allocation. A key step in the OPD method is the update of the variable $\lambda_n$. For each request~$n$, it suffices to increase the price only at the arrival time of request $\nu^*_n$, defined as the \textit{latest-arriving request whose interval overlaps with that of request $n$}. This suffices because any future request whose allocation might be influenced by that of request~$n$ must also have an interval overlapping with $\nu^*_n$. Therefore, updating the item price at the arrival of $\nu^*_n$ ensures the the dual constraint getting satisfied.

Finally, we can show that \algkRentalFD achieves the optimal competitive ratio by showing a matching lower bound as follows.

\begin{proposition}[Lower Bound]
\label{lemma:kRental:FD:lb}
Under Assumption \ref{ass:bounded-value}, no online algorithm, deterministic or randomized, can obtain a competitive ratio better than $ 1+\ln \bigl(\tfrac{v_{\max}}{v_{\min}}\bigr) $ for the \problemkRentalFD problem.
\end{proposition}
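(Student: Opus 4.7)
}
The plan is to apply Yao's minimax principle. It suffices to construct a family of input instances $\{I_\tau\}$ and argue that against this family, the expected competitive ratio of every deterministic online algorithm is at least $1+\ln(v_{\max}/v_{\min})$; a randomized algorithm is a distribution over deterministic ones, so the bound carries over. The family is parameterized by a stopping threshold $\tau \in [v_{\min}, v_{\max}]$. In instance $I_\tau$, all requests arrive within a time window of length strictly less than $d$, so all rental periods mutually overlap and the capacity constraint \eqref{eq:capacity} degenerates to accepting at most $k$ requests in total. The valuations form a fine geometric progression $v_{\min} = v_0 < v_1 < \cdots < v_L$ with $v_L \le v_{\max}$ and ratio $1+\varepsilon$; at each price level $v_i$ exactly $k$ requests are presented. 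The adversary then halts at the first price level $v_i$ with $v_i \ge \tau$, so the instance contains exactly $k(i+1)$ requests and $\opt(I_\tau) = k v_i$, realized by the clairvoyant algorithm accepting only the last $k$ requests at price $v_i$.

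Next, for an arbitrary deterministic online algorithm, let $p(v)$ be the number of requests it has accepted once requests of price $v$ have been processed (by the symmetry and oblivious nature of the construction, this is well-defined and independent of the unknown future). Because the algorithm is online and cannot revoke decisions, $p$ is a monotonically non-decreasing step function in $v$ with $p(v) \le k$, and its total profit on $I_\tau$ equals the Riemann--Stieltjes integral $\int_{v_{\min}}^{\tau} v \, dp(v)$. If the algorithm is $\alpha$-competitive, we must have
\begin{align}
\int_{v_{\min}}^{\tau} v \, dp(v) \;\ge\; \frac{k\tau}{\alpha}, \qquad \forall \tau \in [v_{\min}, v_{\max}].
\end{align}
For a randomized algorithm we pass to expectations and obtain the same inequality for the non-decreasing function $\bar p(v) = \ex[p(v)]$, whose values still lie in $[0,k]$.

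The main remaining step is to turn this inequality, which must hold \emph{simultaneously} for every $\tau$, into the sharp lower bound on $\alpha$. I would take the $\varepsilon \to 0$ limit so that $\bar p$ can be treated as an absolutely continuous non-decreasing function, and differentiate both sides with respect to $\tau$, yielding $\tau \, \bar p'(\tau) \ge k/\alpha$, i.e.\ $\bar p'(\tau) \ge k/(\alpha \tau)$. Integrating from $v_{\min}$ to $v_{\max}$, together with the boundary condition $\bar p(v_{\min}) \ge k/\alpha$ obtained by plugging $\tau = v_{\min}$ into the original inequality, and the capacity bound $\bar p(v_{\max}) \le k$, gives
\begin{align}
k \;\ge\; \bar p(v_{\max}) \;\ge\; \bar p(v_{\min}) + \int_{v_{\min}}^{v_{\max}} \frac{k}{\alpha \tau}\, d\tau \;\ge\; \frac{k}{\alpha}\bigl(1 + \ln(v_{\max}/v_{\min})\bigr),
\end{align}
so $\alpha \ge 1 + \ln(v_{\max}/v_{\min})$, matching Theorem~\ref{them:kRentalFD}.

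I expect the main obstacle to be the technical passage from the discrete geometric instance family (which is what actually exists as a valid input) to the continuous differential inequality used in the limit. To make this rigorous without glossing over details, I would either (i) prove a discrete analogue directly by summation by parts and then let $\varepsilon \to 0$, carefully tracking an $O(\varepsilon)$ loss that vanishes in the limit, or (ii) invoke monotonicity of $\bar p$ to justify the use of Lebesgue--Stieltjes integration and conclude $\alpha \ge 1 + \ln(v_{\max}/v_{\min}) - o_\varepsilon(1)$ for all $\varepsilon > 0$. A secondary care point is verifying that the construction genuinely fits \problemkRentalFD: all requests must fit within a time window of length less than $d$ while still arriving in the prescribed order, which is achieved by scheduling arrivals at times $a_n = n \cdot \delta$ with $\delta$ arbitrarily small.
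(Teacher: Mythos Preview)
Your hard-instance construction is essentially the same as the paper's: nested batches of $k$ requests with increasing valuations, all arriving in a window shorter than $d$ so that reusability is neutralized and the problem collapses to $k$-selection. The paper defers the actual analysis to the representative-function technique of \cite{sun2024static,jazi2025posted}; you instead try to carry it out directly, which is fine in spirit.

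The gap is the step ``differentiate both sides with respect to $\tau$, yielding $\tau\,\bar p'(\tau)\ge k/\alpha$.'' From $F(\tau)\ge G(\tau)$ for all $\tau$ you cannot conclude $F'(\tau)\ge G'(\tau)$; an algorithm could front-load acceptances so that the integral inequality is slack on an interval and its derivative there falls below $k/\alpha$. The obstacle you flag (discrete-to-continuous) is a secondary technicality; the real issue is this differentiation. There are two standard fixes. One is to first argue that among all $\bar p$ satisfying the family of constraints with a given $\alpha$, the one minimizing $\bar p(v_{\max})$ makes every constraint tight (any slack can be ``shifted later'' without violating earlier constraints, strictly decreasing $\bar p$ on a prefix); for that extremal $\bar p$ you have equality and may differentiate. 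The other is a Gr\"onwall/weighting argument: divide the constraint by $\tau^2$, integrate over $\tau\in[v_{\min},v_{\max}]$, and apply Fubini to get
\[
\int_{v_{\min}}^{v_{\max}}\Bigl(1-\tfrac{v}{v_{\max}}\Bigr)\,d\bar p(v)\;\ge\;\frac{k}{\alpha}\ln\frac{v_{\max}}{v_{\min}},
\]
whose left side equals $\bar p(v_{\max})-\tfrac{1}{v_{\max}}\int v\,d\bar p(v)\le \bar p(v_{\max})-k/\alpha$ by the $\tau=v_{\max}$ constraint; combined with $\bar p(v_{\max})\le k$ this yields $\alpha\ge 1+\ln(v_{\max}/v_{\min})$ without any differentiation. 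Either route closes the argument. (Minor note: what you call Yao's principle is really just the direct adversarial argument on nested instances; your handling of randomization via $\bar p=\mathbb{E}[p]$ is correct and does not need Yao.)
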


The proof of the above proposition is presented in Appendix~\ref{apx:lemma:kRental:FD:lb}. Following the approach of \cite{sun2024static,jazi2025posted}, we construct a \emph{family of hard instances} to derive the desired lower bound. The instance begins with $k$ requests, each with valuation $v_{\min}$. These are followed by successive batches of requests with monotonically increasing valuations, eventually reaching $v_{\max}$. All requests arrive within an arbitrarily small time window that is negligible relative to the deadline $d$. By applying the representative function-based approach \cite{sun2024static,jazi2025posted} to this family of instances, we derive the optimal online algorithm and establish a matching lower bound applicable to all online algorithms.

\section{\problemkRentalVD: $k$-Rental with Variable Rental Durations}

\label{sec:krental:variable}

This section focuses on the \problemkRentalVD\ problem, which generalizes the \problemkRentalFD\ problem studied in Section~\ref{sec:krental-fixed} by allowing variable rental durations.

\subsection{Problem Statement and Assumptions}
We formally define the $k$-rental problem with variable rental durations (\problemkRentalVD) as follows.
A decision maker has $k$ identical units of a resource to allocate to $N$ requests arriving online.
Each request $n \in [N]$ arrives at time $a_n$ and asks to rent one unit of the resource for $d_n$ time units, where the rental duration $d_n$ may vary across requests.
If a unit is allocated to request $n$, it remains occupied until time $a_n + d_n$, after which it becomes available again for future requests.
Each request has a valuation equal to its rental duration $d_n$. Let $x_n \in \{0,1\}$ indicate whether request $n$ is accepted ($x_n = 1$) or rejected ($x_n = 0$). The objective is to maximize the total valuation of accepted requests, $
\sum_{n\in[N]} x_n \, d_n,$
subject to the resource constraint corresponding to the finite inventory capacity.

Let $I = \{a_n, d_n\}_{n \in [N]}$ denote an instance of the problem. The performance of the optimal clairvoyant algorithm on instance $I$, $\opt(I)$, can be computed based on following integer linear program:
\begin{subequations}
\label{lp:kRentalVD:primal}
\begin{align} 
 \max_{x_n} \quad & \sum\nolimits_{n \in [N]} x_n \cdot d_n, \\
    s.t. \quad & \sum\nolimits_{j\in [n]} x_{j} \cdot \mathbb{I}_{\{d_j + a_j > a_n\}} \leq k,\quad \forall n \in [N], \\
    & x_n \in \{0,1\},\quad \forall\, n \in [N].
\end{align}    
\end{subequations}

To achieve a bounded competitive ratio, we still impose constraints on the adversary, ensuring that the requested rental durations of requests
are bounded within a finite support.

\begin{assumption}\label{ass:bounded-duration-continuous}
All rental durations are within the range $[d_{\min},d_{\max}]$, i.e., $ d_n \in [d_{\min},d_{\max}], \forall n\in[N]$.
\end{assumption}

In the following, we first prove that obtaining a lossless rounding scheme for a variant of the \gocr problem where players have variable rental durations is impossible. Given this, we then develop a new randomized algorithm that uses a new limited-correlation scheme to round the fractional solution, achieving a tight competitive ratio for \problemkRentalVD.

\subsection{Impossibility of \locr under Variable Durations}
We show that the correlation scheme of the \locr procedure in Algorithm~\ref{alg:locr} cannot be extended to settings with variable, request-dependent rental durations; in fact, when the duration of the requested rental intervals varies, it can be proven that no lossless rounding scheme exists. 

\begin{theorem}
\label{prop:rounding:impossiblity}
There is no rounding scheme that can attain a lossless rounding, i.e.\ $\gamma = 1$, for the \gocr problem with variable rental durations.
\end{theorem}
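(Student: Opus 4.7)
My plan is to prove the impossibility by exhibiting an explicit adversarial input $\mathcal{I}$ satisfying the feasibility condition~\eqref{eq:ocr_conditions} for which the constraints any lossless rounding scheme must obey are mutually inconsistent. The key observation I would establish first is a \emph{tightness-forces-equality} lemma: at any time $t$ at which $\sum_{j\text{ active at }t}\hat{x}_j=k$, any lossless scheme must satisfy $\sum_{j\text{ active at }t}A_j=k$ almost surely, because $\sum E[A_j]\ge\sum\hat{x}_j=k$ by losslessness while the physical capacity caps the pointwise sum at $k$. Hence every tight time in $\mathcal{I}$ transforms into a deterministic linear identity on the integer variables $\{A_j\}$.

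Next I would engineer $\mathcal{I}$, initially for $k=1$, as a short sequence of players with $\hat{x}_n\in\{1/2\}$ (or similarly simple values) and carefully chosen durations so that the induced identities combine into a contradiction. The construction will exploit the defining feature of variable rental durations, namely that the \emph{return order} of allocated units need not match the \emph{arrival order} of players; this is exactly what breaks the pointer-based invariant in Proposition~\ref{lemma:kRental:VD:loca:availability} that makes Algorithm~\ref{alg:locr} lossless. The durations will be arranged so that (i) each tight moment pairs a "background" player with a currently-arriving short player, yielding an identity $A_{\text{bg}}+A_{\text{short}}=1$, and (ii) by switching which background player is active through a sequence of out-of-order returns, the short players inherit identities which, when composed, require some $A_n$ to equal both $A_m$ and $1-A_m$ for a fixed earlier $m$, an impossibility in $\{0,1\}$.

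The main obstacle will be realizing such a closed chain of identities in one-dimensional time, because Helly's theorem rules out three pairwise-overlapping intervals having an empty triple intersection, which would be the most direct way to get a cycle. I plan to bypass this by using a longer-duration background player whose interval contains several shorter players: different \emph{pairs} are then simultaneously active at different tight times while the full triple never co-occurs, so the feasibility sum never exceeds $k$. If the resulting identities for $k=1$ still fail to close nontrivially, I would switch to $k=2$ with two background players of unequal duration, which provides the extra degree of freedom needed for an inconsistent composition while still respecting feasibility. Once the instance is specified, verifying that \eqref{eq:ocr_conditions} holds at each arrival is mechanical, and the contradiction follows by chaining the tight-time identities together with the requirement that each $A_n\in\{0,1\}$ has the prescribed marginal $\hat x_n$.
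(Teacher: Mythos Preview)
Your tightness-forces-equality lemma is correct, but the plan to derive a contradiction from a \emph{single} instance $\mathcal{I}$ cannot succeed, no matter how the durations are arranged. For any one instance, the capacity-constraint matrix (rows indexed by arrival times, columns by players, entry $1$ whenever the player is active) has the consecutive-ones property in each column and is therefore totally unimodular; the polytope $\{x\in[0,1]^N: Ax\le k\mathbf{1}\}$ is integral. Any feasible $(\hat x_1,\dots,\hat x_N)$ is thus a convex combination of integral feasible schedules, and sampling from that decomposition yields a randomized rounding that hits every marginal exactly while respecting capacity at all times. For a single fixed instance this distribution can simply be hard-coded, so it is implementable online as well. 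The Helly obstruction you observed for $k=1$ is a symptom of this structural fact rather than a removable artifact, and passing to $k=2$ does not help: no chain of tight-time identities on one instance can close into a contradiction with the marginals, because a consistent offline (hence hard-coded online) rounding always exists.

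What your argument is missing is the one place where being \emph{online} bites: a lossless scheme must work on every input, and since decisions are irrevocable, its behavior on any prefix is identical across all instances sharing that prefix. The paper exploits this by fixing a main instance $\mathcal{I}$ (with $k=2$ and six players) and repeatedly replacing its tail by an auxiliary continuation chosen so that tightness at the next arrival forces an additional equality on the \emph{joint} distribution of the earlier indicators. For example, appending a player with $\hat x=1$ right after $P_1,P_2$ forces $\Pr[A_1{=}1,\,A_2{=}1]=0$; a different continuation after $P_1,P_2,P_3$ pins $\Pr[A_2{=}1,\,A_3{=}1]$ to $\tfrac16$. These prefix-derived equalities are exactly the constraints that the offline decomposition above is free to violate, and it is their accumulation, not the tight times internal to $\mathcal{I}$, that eventually drives the availability probability at the last arrival below its target. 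To repair your plan you need to bring in auxiliary instances of this kind; the identities you can extract from one instance are never enough.
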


We prove the theorem by constructing a family of hard instances on which every online algorithm necessarily incurs a non-zero loss. Fundamentally, a lossless algorithm would need to randomize its decisions such that, across all sample paths, the number of allocated items remains balanced. Specifically, upon the arrival of request~$n$, the algorithm must accept it on precisely those sample paths where the number of allocated items is lower than in any other path, or on paths where the number of allocated items is about to decrease due to the imminent expiration of rental durations for previously accepted requests. However, in the variable-duration setting, rental lengths fluctuate unpredictably with the arrival of new requests. A request allocated a unit later than others may return it earlier, introducing significant variability. This makes it infeasible to maintain the necessary correlation across sample paths. As a result, any online algorithm must incur a non-trivial competitive loss. In contrast, when rental durations are fixed, a request that receives a unit later will always return it later. This temporal consistency enables effective correlation of decisions and allows the design of a lossless rounding scheme. In Appendix~\ref{apx:proof:impos:rounding}, we formally define the \gocr problem under variable rental durations and establish the impossibility of a lossless correlation scheme in this setting.

\subsection{\algorithmVD : A Randomized Algorithm with Limited Correlation}

We present \algorithmVD in Algorithm~\ref{alg:kRentalVD-dynamic}, which also follows a relax-and-round paradigm. Upon the arrival of request $n$, the algorithm computes the \textit{probabilistic utilization level} $y_n^{(i)}$ for each unit $i \in [k]$. It then selects the unit $i^*_n$ with the lowest utilization level (breaking ties arbitrarily). A fractional allocation $\hat{x}_n \in [0, 1]$ is determined by solving the pseudo-utility maximization problem in Eq.~\eqref{eq:kRental:VD:fractional}, using a pricing function $\phi$. Note that here $\phi$ does not refer to the function in Eq.~\eqref{eq:pricing-FD}, but instead denotes a new function specifically designed for \algorithmVD. Once $\hat{x}_n$ is computed, the algorithm proceeds to the rounding phase by correlating the current allocation decision for unit $i^*_n$ with its prior allocation history. Specifically, if unit $i^*_n$ is available, a new random seed is drawn, and the unit is allocated to request $n$ with probability ${\hat{x}_n}/{(1 - y_n^{(i^*_n)})}$. This ensures that the overall allocation probability for unit $i^*_n$ is exactly $\hat{x}_n$. \textit{This correlation process is carried out independently for each unit, introducing dependency only among allocation decisions involving the same unit}.

\begin{algorithm}
    \caption{Duration-Oblivious Price-based Algorithm with Limited Correlation for \problemkRentalVD (\algorithmVD) 
    }
    \label{alg:kRentalVD-dynamic}
    
    \textbf{Input:} Pricing function $\phi: [0,1] \rightarrow [d_{\min}, d_{\max}]$.

\For{each arriving request $n$ with $(d_n, a_n)$}{
            
      Compute the probabilistic utilization level at $ a_n $: $ y^{(i)}_{n} = \sum_{j = 1}^{n-1} \hat x_{j} 
        \cdot \indicator{ a_{j} + d_{j} > a_{n} } 
        \cdot \indicator{i^*_{j} = i}, \forall i\in [k] $.

      Let $ i^*_n = \argmin_{i \in [k]}\,\bigl\{y_n^{(i)}\bigr\} $ and compute
      \begin{align}
        \label{eq:kRental:VD:fractional}
       \hat x_n = \argmax_{x\in[0,1]}\  d_n \cdot x - \int_{y_n^{(i^*_n)}}^{y_n^{(i^*_n)}+x} \phi(\eta) d\eta.
      \end{align}

      Sample a random seed $S_n \sim U(0,1)$.
      
      \If{$S_n \leq \frac{\hat x_n}{1 - y_n^{(i^*_n)}}$ and  unit $i^*_n$ available in system}{
      $x_n = 1$ \Comment{Allocate unit $i^*_n$ to request $n$ }
        }
        \Else{
        $x_n = 0$. \Comment{Reject request $n$}
        }

    }
\end{algorithm}

\paragraph{Overview of challenges and design principles of $ \phi $.} Designing a competitive pricing function $\phi$ for \algorithmVD involves two primary challenges. First, unlike the relax-and-round approach adopted for \problemkRentalFD, Algorithm~\ref{alg:kRentalVD-dynamic} integrates the relaxation and rounding steps into a unified procedure, wherein the rounding phase directly influences the structure of the pricing function $\phi$. Similar design principles have been employed in related works, such as \cite{gaoOCS,fahrbach2022edge}. Second, as with Algorithm~\ref{alg:kRentalFD} and in contrast to the approaches in \cite{ekbatani2025,sun2022online}, Algorithm~\ref{alg:kRentalVD-dynamic} adopts a myopic, duration-oblivious decision-making strategy. It bases allocation decisions solely on the current utilization levels $\{y^{(i)}_{n}\}_{i \in [k]}$ at the arrival time $a_n$ of request $n$, rather than considering the entire rental duration. This duration-oblivious pricing scheme may initially appear counterintuitive, as the pseudo-cost term in Eq.~\eqref{eq:kRental:VD:fractional} does not depend on the rental duration $d_n$. Nevertheless, as elaborated in the next section, the pricing function $\phi$ is carefully designed to implicitly capture potential fluctuations in utilization throughout the entire duration of request~$n$. In essence, our goal is to \textit{maintain a simple and cognitively lightweight pricing scheme, while embedding the complexity into the internal structure of the pricing function $\phi$ itself}.

\subsection{Theoretical Guarantees of \algorithmVD}

We employ the LP-free certificate method developed by~\cite{goyal2020} to establish a performance guarantee for \algorithmVD. This involves constructing a system of linear constraints parameterized by $\alpha$, such that the existence of a feasible solution certifies $\alpha$-competitiveness. Satisfying this system effectively imposes a distinct constraint for each request~$n$. To address these constraints, we derive a system of differential inequalities that guide the design of the pricing function $\phi$. For each possible utilization level at the arrival of request~$n$, we consider the worst-case scenario for utilization fluctuations over its rental interval. Consequently, the design constraints for $\phi$ naturally arise from this worst-case analysis, embedding such fluctuations explicitly into the pricing function.  The following theorem formalizes this result.

\begin{theorem}
\label{prop:kRentalVD-Dynamic-CR}
 \algorithmVD is $\alpha$-competitive for the \problemkRentalVD problem, provided that the pricing function $\phi$ is increasing and satisfies the following inequalities 
 for all $d_n \in \bigl[d_{\min}, d_{\max}\bigr]$ and $ n \in [N] $:
\begin{align}
\label{eq:kRental:VD:price:design:1}
\begin{cases}
\int_{\eta=y_1}^{2y_1} \frac{2\alpha}{3}\,\phi(\eta)\,d\eta 
\;+\; \frac{\alpha}{3}\,d_n\,\Bigl(\phi^*(d_n)\;-\;2\,y_1\Bigr)
&\;\ge\; d_n,
\quad \forall\,
y_1 \in \Bigl[0,\,\tfrac{\phi^*(d_n)}{2}\Bigr],\\[6pt]
\frac{\alpha}{3}\,d_n\,\phi^*(d_n)
\;-\; \frac{\alpha}{3}\,y_2\,\Bigl(d_n\;-\;\phi(y_2)\Bigr)
&\;\ge\; d_n,
\quad \forall\,
y_2 \in \Bigl[0,\,\phi^*(d_n)\Bigr],
\end{cases}
\end{align}   
where $\phi^*(d_n) = \sup_{x \in [0,1]}\{\phi(x) \leq d_n \} $.
\end{theorem}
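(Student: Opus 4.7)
The plan is to use the LP-free certificate framework of \cite{goyal2020}: $\alpha$-competitiveness is reduced to exhibiting, for each request $n$, a system of linear inequalities that the pricing function $\phi$ and the resulting fractional allocation must satisfy for every admissible trajectory of the utilization process. Because \algorithmVD handles each unit independently in the rounding step, and the correlation rule $S_n \leq \hat{x}_n/(1 - y_n^{(i^*_n)})$ is precisely the one that makes the marginal probability of allocating unit $i^*_n$ to request $n$ equal to $\hat{x}_n$, the global analysis decouples into $k$ per-unit arguments plus the $\argmin_i y_n^{(i)}$ routing rule that ties them together.

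First, I would characterize the fractional allocation from the KKT condition of Eq.~\eqref{eq:kRental:VD:fractional}: whenever $y_n^{(i^*_n)} < \phi^*(d_n)$, the maximizer satisfies $y_n^{(i^*_n)} + \hat{x}_n = \min\{1,\,\phi^*(d_n)\}$, and otherwise $\hat{x}_n = 0$. This turns the algorithm's expected reward on unit $i^*_n$ from request $n$ into a telescoping change in the potential $\Phi(y) = \int_0^y \phi(\eta)\,d\eta$, which is the standard device that makes price-based online algorithms amenable to dual fitting. The monotonicity of $\phi$ together with the identity $\phi(\phi^*(d_n)) = d_n$ will let me cleanly re-express integrals of $\phi$ over sub-intervals of $[0,\phi^*(d_n)]$ in terms of $d_n$ and the endpoint utilizations.

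Next, I would fix a request $n$ that OPT accepts and compare the algorithm's expected gain from unit $i^*_n$ over the rental interval $[a_n, a_n + d_n]$ with the reward $d_n$ that OPT collects. Because \algorithmVD is duration-oblivious, the only state observed at $a_n$ is the utilization $y_n^{(i^*_n)}$, while during $[a_n, a_n+d_n]$ earlier rentals on unit $i^*_n$ may expire and new ones may start. I would parameterize the worst-case utilization trajectories on unit $i^*_n$ by two scalars: an initial value $y_1 \in [0,\phi^*(d_n)/2]$ (the regime where utilization starts low and is later refilled by newly arriving requests of duration up to $\phi^*(d_n)$), and a sustained value $y_2 \in [0, \phi^*(d_n)]$ (the regime where utilization stays at a plateau). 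Writing the per-request charging inequality in each regime, invoking $\phi(\phi^*(d_n)) = d_n$, and collecting the contributions of the algorithm's potential change and the direct $d_n \hat{x}_n$ reward yields precisely the two inequalities of Eq.~\eqref{eq:kRental:VD:price:design:1}; the weights $\tfrac{2\alpha}{3}$ and $\tfrac{\alpha}{3}$ come from splitting the charge between the reward of $n$ itself and the reward pledged to overlapping requests whose duration is at most $\phi^*(d_n)$.

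The main obstacle will be proving that these two parameterized trajectories are extremal among all utilization paths that the adversary can induce on unit $i^*_n$ during $[a_n, a_n+d_n]$ given Assumption~\ref{ass:bounded-duration-continuous} and the routing rule. This is where the duration-obliviousness of the pricing scheme has to be compensated by the structural properties of $\phi$, and I would handle it via a monotonicity and convexity argument on $\Phi$: any intermediate trajectory can be majorized by one of the two extreme trajectories by pushing the mass of occupied time toward the boundaries of $[0,\phi^*(d_n)]$, using that $\phi$ is increasing. Once this extremality lemma is established, summing the per-request certificates over all $n \in [N]$ and over the $k$ units and invoking the marginal correctness of the rounding step gives $\opt(I) \le \alpha \cdot \ex[\alg(I)]$ for every instance $I$, which is Theorem~\ref{prop:kRentalVD-Dynamic-CR}.
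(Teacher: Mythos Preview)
Your high-level framework is right: LP-free certificates, per-unit decoupling via the $\argmin$ routing, and the marginal correctness of the rounding rule $\Pr[\text{allocate }i^*_n\text{ to }n]=\hat x_n$ are all used in the paper exactly as you describe. But your reading of the two inequalities in Eq.~\eqref{eq:kRental:VD:price:design:1} is off. They are not two adversarial \emph{regimes} of the utilization trajectory. In the paper, the certificate for a fixed request $n$ on unit $i$ is lower-bounded by a function $\mathfrak{L}(z'_1,z'_L,d_n)$ of two scalars $z'_1\le z'_L$ that summarize a constructed sequence of earlier overlapping requests; the two inequalities then arise by minimizing $\mathfrak{L}$ over the single free parameter $z'_1\in(0,z'_L]$, the first being the interior critical point $z'_1=z'_L/2$ (whence the integral from $y_1$ to $2y_1$) and the second the boundary $z'_1=z'_L$. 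So $y_1$ and $y_2$ are two candidate minimizers of one scalar optimization, not two trajectory types, and your explanation of the $\tfrac{2\alpha}{3},\tfrac{\alpha}{3}$ split is not where those coefficients actually come from (they come directly from the dual updates $u_n=\tfrac{\alpha}{3}d_n\hat x_n$ and $\sum_j\Delta\lambda_j^{(i^*_n)}=\tfrac{2\alpha}{3}d_n\hat x_n/k$).

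More importantly, the reduction to $\mathfrak{L}$ is not a convexity or mass-pushing argument on $\Phi$, and I do not see how monotonicity of $\phi$ alone closes it. The paper splits the earlier requests on unit $i$ that overlap $n$ into those whose rental ends before $a_n+d_n$ (set $C_n^{(i)}$) and those that extend past it; from $C_n^{(i)}$ it greedily extracts a subsequence $\{c_l\}$ of earliest-finishing requests, defines cumulative fractional masses $z_l$, and then replaces $\{z_l\}$ by a backward-recursively transformed sequence $\{z'_l\}$ involving $\phi^*$. Two inductive lemmas on $\{z'_l\}$---one proving $d_{c_l}\ge\phi(z'_l)$, the other a telescoping comparison between the original and transformed contributions---are what collapse an arbitrary trajectory to the two-parameter bound $\mathfrak{L}(z'_1,z'_L,d_n)$. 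The interplay between the \emph{arrival times} $a_{c_l}$ and the \emph{durations} $d_{c_l}$ of overlapping requests is essential here and is exactly what your ``push mass toward the boundaries'' idea does not control; you would need to discover something equivalent to the $\{z'_l\}$ construction to make the extremality step go through. (Minor slip as well: the KKT condition gives $y_n^{(i^*_n)}+\hat x_n=\min\{y_n^{(i^*_n)}+1,\phi^*(d_n)\}$, not $\min\{1,\phi^*(d_n)\}$.)
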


The detailed proof of the theorem is provided in Appendix~\ref{apx:prop:kRental:VD:CR}. 
By Theorem~\ref{prop:kRentalVD-Dynamic-CR}, designing an algorithm with guaranteed performance reduces to constructing a pricing function $\phi$ that satisfies the constraints in Eqs.~\eqref{eq:kRental:VD:price:design:1},
while minimizing the competitive ratio $\alpha$. 
However, solving $\phi$ is challenging in general because it involves solving a system of \textit{delayed differential inequalities} with an inverse term.

\paragraph{Tightness of results from \algorithmVD.}
Based on Theorem \ref{prop:kRentalVD-Dynamic-CR}, we can show that \algorithmVD is 
$3\cdot (1+\ln(\frac{d_{\max}}{d_{\min}} ))$-competitive for the \problemkRentalVD problem when the pricing function $\phi$ is designed as: $
\phi(y) = d_{\min} \cdot \exp([1+\ln(\frac{d_{\max}}{d_{\min}})]\cdot y - 1), \forall y \in [0,1] $. This also matches the best-known competitive ratio for \problemkRentalVD in prior work~\cite{Goyal2020OIS}.
On the other hand, we can prove that no online algorithms can attain a competitive ratio smaller than $ 1+\ln({d_{\max}}/{d_{\min}})$ for the \problemkRentalVD problem. 
Therefore, one simple feasible design of $\phi$ based on Theorem~\ref{prop:kRentalVD-Dynamic-CR} can already attain an order-optimal competitive ratio for \problemkRentalVD. We defer all proofs related to the order-optimality to Appendix~\ref{apx:order:optimal:alg:variable}.

\paragraph{Comparison to prior work.
}

To demonstrate the significance of our results, we further develop a numerical method in Appendix \ref{apx:numerical:phi:computation} to solve Eqs. \eqref{eq:kRental:VD:price:design:1} and obtain a pricing function. In Figure \ref{fig:kRental:VD:comparison:numerical}, we show that the competitive ratio achieved by \algorithmVD (blue curve) surpasses the best known bound of $3\left(1 + \ln(d_{\max}/d_{\min})\right)$ from \cite{Goyal2020OIS}.

As a relax and round algorithm, we also investigate the performance of the fractional solution from the relaxation step. Recall that \algorithmVD adopts an integrated design of the relaxation and rounding. Thus, deriving the fractional solution without the rounding step needs to additionally modify the LP certificate conditions in Eqs.~\eqref{eq:kRental:VD:price:design:1}. We show the details in Appendix~\ref{apx:krental:frac} and call the resulting new algorithm  \algorithmVDfrac.
The competitive ratio of \algorithmVDfrac is illustrated in green curve in Figure~\ref{fig:kRental:VD:comparison:numerical}.  We observe that its performance is comparable to $4 + \ln(d_{\max}/d_{\min})$, which is the best-known competitive ratio of \problemkRentalVD in the large inventory regime (as $k\to\infty$)~\cite{ekbatani2025}. 
We highlight that it is possible to design a new algorithm for the fractional solution of \problemkRentalVD using the techniques developed in~\cite{ekbatani2025} (See Appendix~\ref{apx:comparison} for more details).
In comparison, our proposed \algorithmVDfrac uses a far simpler, \emph{duration-oblivious} pseudo-utility maximization rule: the fractional decision for each arriving request depends only on the utilization level at the arrival of the request, whereas the new algorithm based on~\cite{ekbatani2025} must additionally track the utilization fluctuation over the entire rental interval of every request.

\begin{figure}
    \centering
    \includegraphics[width=0.5\linewidth]{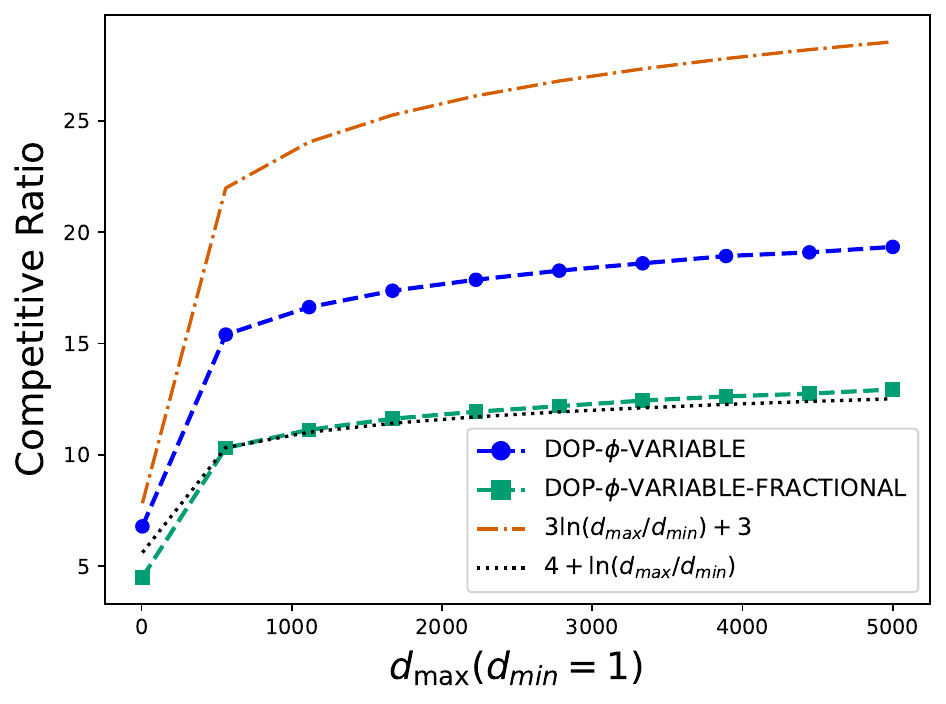}
    \caption{The blue and green curves illustrate the competitive ratios of Algorithm \ref{alg:kRentalVD-dynamic} in the integral and fractional settings (using the numerically derived pricing function~$\phi$). The brown and black curves serve as benchmarks, which correspond to the best known bound for general $k$~\cite{Goyal2020OIS} and the bound for large $k$~\cite{ekbatani2025}.}
    \label{fig:kRental:VD:comparison:numerical}
\end{figure}

Furthermore, due to the impossibility results of \gocr under variable durations, we are unable to design an online rounding scheme for the fractional solution based on~\cite{ekbatani2025}. In contrast, our proposed algorithm \algorithmVD integrates the design of deriving the fractional solution and the rounding scheme, which results in the improved competitive solutions.  
Designing a tight online rounding scheme for general online fractional solutions remains an interesting open question.

\paragraph{Insights into the hardness of \problemkRentalVD.}

To design a pricing function that satisfies the constraints in Theorem~\ref{prop:kRentalVD-Dynamic-CR} with the minimum possible value of $\alpha$, it suffices to enforce these constraints only at the critical points where the left-hand side of the inequalities is minimized. Furthermore, achieving the optimal value of $\alpha$ requires that the inequalities hold with equality. This condition leads to a system of \textit{delayed differential equations}. Notably, the fact that the pricing function attaining the tightest competitive ratio arises from such a system highlights a \textit{memory} effect: past prices, which are used to reserve a fraction of the resource for short-duration requests, affect the current pricing dynamics. As a result, the pricing function becomes sensitive to the durations of previously accepted requests, particularly those assigned fractional allocations. This sensitivity implies that the pricing function not only sets current prices but also implicitly manages request scheduling by balancing resource reservations in anticipation of returns from short-duration requests. We believe that this provides a new perspective on the fundamental hardness of \problemkRentalVD and related online allocation problems involving reusable resources, such as in scheduling and matching \cite{Huang_2024_Survey}.

\section{Conclusion and Future Directions}
We studied two adversarial online allocation problems involving reusable resources: the online $k$-rental problem with fixed rental durations (\problemkRentalFD) and a more general variant with variable rental durations (\problemkRentalVD). To address both settings, we proposed a unified relax-and-round framework that leverages a price-based approach to compute fractional solutions and novel online rounding schemes to convert them into integral decisions. For \problemkRentalFD, we integrated the price-based strategy with a new lossless online rounding scheme, \locr, achieving the optimal competitive ratio. In the more general \problemkRentalVD setting, where lossless rounding is provably unattainable, we developed a limited-correlation rounding strategy that achieves an order-optimal guarantee. We believe that the lossless online rounding technique introduced in this work is broadly applicable and has the potential to enhance algorithms for a wide range of online resource allocation problems.

This work opens several promising directions for future research. In particular, two open problems stand out as especially compelling: (i) designing more powerful rounding schemes that further narrow the gap between fractional and integral solutions in the variable-duration setting, and (ii) establishing tight upper bounds on the best achievable \gocr under variable rental durations, despite the impossibility of attaining \locr as shown by our result.

\clearpage

\clearpage
\begin{center}
   \textbf{\LARGE Appendix}
\end{center}
\appendix

\section{Proof of Proposition~\ref{lem:ir}}
\label{apx:proof:indp:rounding}

Let $P_n = \{i \in [n-1] \mid a_i + d > a_n \}$ denote the set of players whose rental intervals overlap with that of player~$n$. Let $E_n$ denote the event that a ball is allocated to player~$n$. Define $A_i$ as the event that the random seed $s_n \leq f_k \cdot \hat x_n$ for the $n$-th player, for each $i \in [N]$.
For convenience in notation, we occasionally abuse the notation by using $E_n$ or $A_n$ to also denote the indicator random variable $\mathbb{I}_{E_n}$ and $\mathbb{I}_{A_n}$.
Then, we have:
\begin{align*}
    \pr[E_n] &= \pr[A_n] \cdot \pr\left[\sum_{i \in P_n} E_i < k\right] \geq \pr[A_n] \cdot \pr\left[\sum_{i \in P_n} A_i < k\right] = f_k \cdot \hat x_n \cdot \pr\left[\sum_{i \in P_n} A_i < k\right].
\end{align*}
The inequality $\pr\left[\sum_{i \in P_n} E_i < k\right] \geq \pr\left[\sum_{i \in P_n} A_i < k\right]$ holds because, whenever $\sum_{i \in P_n} A_i < k$ occurs, the event $\sum_{i \in P_n} E_i < k$ must also occur.  
This is true since if fewer than $k$ players satisfy the seed condition (which means that they are eligible for allocation), then at most $k - 1$ players can have a resource unit under allocation at the arrival of player~$n$.
Thus, to prove the proposition, it suffices to lower bound the term $\pr\left[\sum_{i \in P_n} A_i < k\right]$ and show that this probability is at least $1 - \exp\left( - \frac{(k - f_k \cdot k)^2}{f_k \cdot k + k } \right)$. 

The random variables $\{ A_i \}_{i \in P_n}$ are mutually independent since the random seed $s_i$ for each player $i$ is sampled independently from the uniform distribution.

Furthermore, we have
\[
\mathbb{E}\left[\sum_{i \in P_n} A_i\right] = \sum_{i \in P_n} \mathbb{E}[A_i] = \sum_{i \in P_n} f_k \cdot \hat x_i \leq f_k \cdot k,
\]
where the last inequality follows from the condition in Eq.~\eqref{eq:ocr_conditions}.

We can restate an alternative form of the Chernoff bound as follows:  
\begin{lemma}[Chernoff Bound]
Let $\{X_i\}_{i=1}^n$ be independent random variables taking values in $[0,1]$, and let $S = \sum_{i=1}^n X_i$ with mean $\mu = \mathbb{E}[S]$. Then, for any threshold $A > \mu$, $\pr[S \ge A] \le  \exp\left(-\frac{(A - \mu)^2}{A + \mu}\right).$
  
\end{lemma}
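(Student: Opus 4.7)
The plan is to follow the standard exponential moment (Cramér–Chernoff) argument and then weaken the resulting classical bound to the friendlier form stated. For any $t>0$, Markov's inequality applied to the nonnegative variable $e^{tS}$ gives
\begin{equation*}
\pr[S \ge A] \;=\; \pr[e^{tS} \ge e^{tA}] \;\le\; e^{-tA}\,\ex[e^{tS}].
\end{equation*}
Independence of the $X_i$ yields $\ex[e^{tS}] = \prod_{i=1}^n \ex[e^{tX_i}]$. Since each $X_i$ takes values in $[0,1]$ and $x\mapsto e^{tx}$ is convex, the chord bound $e^{tx} \le 1 + (e^t-1)x$ holds on $[0,1]$; taking expectations and using $1+u \le e^u$ gives $\ex[e^{tX_i}] \le 1 + (e^t-1)\mu_i \le \exp\bigl((e^t-1)\mu_i\bigr)$, where $\mu_i = \ex[X_i]$. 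Multiplying over $i$ and using $\sum_i \mu_i = \mu$ produces the moment-generating bound $\ex[e^{tS}] \le \exp\bigl(\mu(e^t-1)\bigr)$.

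Substituting back, for every $t>0$,
\begin{equation*}
\pr[S \ge A] \;\le\; \exp\bigl(\mu(e^t-1) - tA\bigr).
\end{equation*}
Write $A = (1+\delta)\mu$ with $\delta = (A-\mu)/\mu > 0$, and choose the standard optimizer $t = \ln(1+\delta)$. The exponent becomes $\mu\bigl[\delta - (1+\delta)\ln(1+\delta)\bigr]$, which is the classical multiplicative Chernoff bound.

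It remains to show that this exponent is dominated by $-(A-\mu)^2/(A+\mu) = -\delta^2 \mu/(2+\delta)$, i.e.\ that
\begin{equation*}
(1+\delta)\ln(1+\delta) - \delta \;\ge\; \frac{\delta^2}{2+\delta}, \qquad \delta \ge 0.
\end{equation*}
The key analytic ingredient is the elementary inequality $\ln(1+\delta) \ge \frac{2\delta}{2+\delta}$ for $\delta \ge 0$ (verified by noting equality at $\delta=0$ and comparing derivatives: the left side differentiates to $\frac{1}{1+\delta}$ and the right side to $\frac{4}{(2+\delta)^2}$, with $(2+\delta)^2 \ge 4(1+\delta)$). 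Multiplying by $1+\delta$ and subtracting $\delta$ gives exactly $\delta^2/(2+\delta)$ on the right, establishing the desired inequality. Substituting back in terms of $A$ and $\mu$ produces the stated bound.

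The only nontrivial step will be verifying the auxiliary inequality $\ln(1+\delta) \ge 2\delta/(2+\delta)$; all other steps (Markov, independence, the chord bound for $e^{tx}$ on $[0,1]$, and the $1+u \le e^u$ relaxation) are routine. Since the bound must hold for all $A > \mu$, no truncation or separate small-deviation case is needed—the single optimizer $t = \ln(A/\mu)$ handles the entire range.
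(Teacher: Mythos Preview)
Your proof is correct and complete: the Cram\'er--Chernoff moment-generating bound, the optimizer $t=\ln(A/\mu)$, and the relaxation via $\ln(1+\delta)\ge 2\delta/(2+\delta)$ all go through exactly as you describe. The paper, however, does not prove this lemma at all---it simply states it as a known variant of the Chernoff bound and applies it directly in the proof of Proposition~\ref{lem:ir}. So there is no ``paper's own proof'' to compare against; your argument supplies what the paper treats as folklore.
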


Applying the above form of Chernoff bound gives
    \begin{align*}
        \pr\left[\sum_{i \in P_n} A_i \ge k \right] \leq \exp\left( - \frac{(k-\mathbb{E}[\sum_{i \in P_n} A_i])^2}{\mathbb{E}[\sum_{i \in P_n} A_i] + k} \cdot  \right) \leq \exp\left( - \frac{(k-f_k\cdot k)^2}{ f_k \cdot k + k}  \right), 
    \end{align*}
    where the second inequality follows from the inequality $\mathbb{E}[\sum_{i \in P_n} A_i] \leq f_k \cdot k$ and the fact that function $\exp\left( - \frac{(k-x)^2}{ x + k}   \right) $ is increasing when $x \leq k$ and $f_k < 1$ .
    Putting together everything, we have
        \begin{align*}
            \pr[E_n] = \pr[A_n] \cdot \pr[\sum_{i \in P_n} A_i < k]  \leq \hat x_n \cdot f_k \cdot (1 - \exp\left( - \frac{(k-f_k\cdot k)^2}{  f_k \cdot k + k} \right) ).
    \end{align*}
We thus complete the proof of Proposition \ref{lem:ir}.

\section{Proofs and Examples for \locr}

\subsection{Proof of Proposition~\ref{lemma:kRental:VD:loca:availability}}
\label{apx:proof:lemma:kRental:VD:loca:availability}

For each player $n \in [N]$ and ball $i \in [k]$, we define the set $R_{n}^{(i)}$ as follows:
\begin{align}
R_{n}^{(i)}  = \Bigl\{\bigl(p_{j}, \max\{1,  p_{j} + \hat x_{j}\}, j\bigr)
           \big| j \in B_{n}^{(i)}\Bigr\} \cup 
           \Bigl\{\bigl(0, p_{j} + \hat x_j - 1, j\bigr)
           \big| j \in B_{n}^{(i-1)},  p_{j} + \hat x_j > 1\Bigr\},
\end{align}
where the set of players in $B^{(i)}_{n}$ are defined as 
\begin{align*}
   B^{(i)}_{n} = \Bigl\{ j \in [n] \mid a_{j} + d > a_n, m_{j} = i, \hat x_j \neq 0 \Bigr\}.
\end{align*}

For each ball $i,$ the set $R_{n}^{(i)}$ contains all triplets that each correspond to the process of the algorithm deciding to allocate ball $i$ to player $n$. Specifically, if an element of the form
$\Bigl(p_{j}, \max\{1, p_{j} + \hat x_j\}, j\Bigr)$
appears in $R_{n}^{(i)}$ and $j \in B_{n}^{(i)},$ then upon the arrival of player $j,$ if the random seed $r$ lies in the range
$(p_{j}, \max\{1, p_{j} + \hat x_j\}]$,
Algorithm \ref{alg:locr} allocates ball $m_{j} = i$ to player $j.$

Similarly, if 
$\Bigl(0,  p_{j} + \hat x_j - 1,  j\Bigr)$ 
belongs to $R_{n}^{(i)}$ and $j \in B_{n}^{(i-1)}$, then upon the arrival of player $j$, if 
$r \in [ 0,  p_{j} + \hat x_j - 1)$,
Algorithm~\ref{alg:locr} allocates ball $m_{j} + 1 = i$ to player $j$. Thus, the element
$\Bigl(0,  p_{j} + \hat x_j - 1,  j\Bigr)$
is added to $R_{n}^{(i)}$ to reflect this allocation decision.

In what follows, we show that if Proposition~\ref{lemma:kRental:VD:loca:availability} fails, then there must exist two elements $(s_1,e_1,n_1)$ and $(s_2,e_2,n_2)$ in either $R_{n}^{(m_n)}$ or $R_{n}^{(m_{n+1})}$ whose intervals $[s_1,e_1)$ and $[s_2,e_2)$ intersect. We consider the following two cases, reflecting those discussed in Proposition~\ref{lemma:kRental:VD:loca:availability}:
\begin{itemize}
    \item \textbf{Case 1: $p_n + \hat x_n < 1$.} If Proposition~\ref{lemma:kRental:VD:loca:availability} holds, then whenever $r \in [p_n, p_n+\hat x_n)$, ball $m_n$ remains available at the arrival of player $n$. Suppose the proposition fails in this case. Then, for some $r \in [p_n, p_n+\hat x_n)$, ball $m_n$ is unavailable. Hence, there must exist a player $j < n$ such that if $r \in [s_1,e_1)$, with $[s_1,e_1)\cap [p_n, p_n+\hat x_n)\neq\varnothing$, ball $m_n$ was allocated to $j$. Thus, in this case Proposition~\ref{lemma:kRental:VD:loca:availability} does not hold and by the definitions of $R_{n}^{(m_n)}$, the set $R_{n}^{(m_n)}$ contains two triplets $(s_1,e_1,j)$ and $(p_n,p_n+\hat x_n,n)$ such that range $[s_1,e_1)$ and $[p_n,p_n+\hat x_n)$ overlap.

    \item \textbf{Case 2: $p_n + \hat x_n \ge 1$.} A similar argument shows that if Proposition~\ref{lemma:kRental:VD:loca:availability} fails in this case, there must exist two elements $(s_1,e_1,n_1)$ and $(s_2,e_2,n_2)$ in either $R_{n}^{(m_n)}$ or $R_{n}^{(m_{n+1})}$ where the intervals $[s_1,e_1)$ and $[s_2,e_2)$ intersect.
\end{itemize}

Next, we show that no set $R_{n}^{(i)}$ (for any $n \in [N]$ and $i \in [k]$) can contain two elements whose corresponding intervals overlap. Indeed, Proposition~\ref{prop:forms-R-ocs} establishes that $R_{n}^{(i)}$ can only be in one of four specific forms, none of which allows two elements $(s_1,e_1,n_1)$ and $(s_2,e_2,n_2)$ with overlapping intervals $[s_1,e_1)$ and $[s_2,e_2)$. 
Let us define $l_{n}^{(i)} 
= \sum_{(s,e,t) \in R_{n}^{(i)}} (e - s)$.
\begin{proposition}
\label{prop:forms-R-ocs}
At the arrival time of each player $n \in [N]$ and for each ball $i \in [k]$, if the set $R_{n}^{(i)}$ is non-empty, then upon sorting $R_{n}^{(i)}$ by the first element of each triplet, one of the following four forms arises:

\noindent
\textbf{(i)} If $i = m_n$ and $\hat x_n + p_n < 1$, then for some $j \in \mathbb{N}$, $R_{n}^{(i)}$ takes the form
\begin{align*}
 R_{n}^{(i)} 
= \biggl\{\bigl(s_{1}, s_{2}, t_{1}\bigr), \dots, (s_{j} = p_n, s_{j+1}=p_{n} + \hat x_n, t_{j}),  (s_{j+1}+1 - l_{n}^{(i)},  s_{j+2}, t_{j+1}), \dots, (s_{q}, s_{q+1}=1, t_{q})\biggr\},
\end{align*}
where $ t_{j+1} < t_{j+2} < \cdots < t_{q} < t_{1} < t_{2} < \cdots < t_{j} $ and $ s_{1} \le s_{2} \le \cdots \le s_{j+1} \le s_{j+2} \le \cdots \le s_{q+1}=1 $.

\noindent
\textbf{(ii)} If $i = m_n$ and $\hat x_n + p_n \ge 1$, $R_{n}^{(i)}$ takes the form
\begin{align*}
R_{n}^{(i)} = \biggl\{\bigl(s_{1}, s_{2}, t_{1}\bigr),  \bigl(s_{2}, s_{3}, t_{2}\bigr) \dots, \bigl(s_{q} = p_n, s_{q+1}=1, t_{q}\bigr)\biggr\},
\end{align*}
where $ t_{1} < t_{2} < \cdots < t_{q} $ and $
s_{1} \le s_{2} \le \cdots s_q = p_n \le s_{q+1} = 1 $.

\noindent
\textbf{(iii)} If $i = m_n + 1 $ and $\hat x_n + p_n > 1$, then $R_{n}^{(i)}$ takes the form
\begin{align*}
R_{n}^{(i)} = \biggl\{(s_1 = 0, s_2 = p_n+\hat x_n-1,t_1 = n), \bigl(s_2,  s_{3}, t_{2}\bigr),  (s_{3}, s_{4}, t_{3}), \dots, (s_{q}, s_{q+1}=1, t_{q})\biggr\},
\end{align*}
where $ t_{2} < \cdots < t_{q} < t_{1} = n $ and $
s_{1} = 0 < s_{2} < \cdots < s_{q+1} $.

\noindent
\textbf{(iv)} If $ i = m_n + 1 $ and $\hat x_n + p_n \leq 1$,  or if $i \in [k] - \{m_n,m_n+1\}$, then $R_{n}^{(i)}$ takes the form
\begin{align*}
R_{n}^{(i)} = \biggl\{\bigl(s_{1}=1 - l_{n}^{(i)},  s_{2}, t_{1}\bigr),  (s_{2}, s_{3}, t_{2}), \dots, (s_{q}, s_{q+1}=1, t_{q})\biggr\},
\end{align*}
where $ t_{1} < t_{2} < \cdots < t_{q} $ and $
1 - l_{n}^{(i)} = s_{1} \le s_{2} \le \cdots \le s_{q+1}=1 $.
\end{proposition}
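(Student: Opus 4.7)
The plan is to establish Proposition~\ref{prop:forms-R-ocs} by strong induction on the player index $n$. The induction hypothesis asserts that for every $n' < n$ and every ball $i \in [k]$, the set $R_{n'}^{(i)}$ conforms to one of the four stated forms. For the base case $n = 1$, the algorithm initializes $m_1 = 1$, $p_1 = 0$, and every $R_1^{(i)}$ is empty prior to processing player~1; after processing, $R_1^{(1)}$ contains at most the triplet $(0, \hat{x}_1, 1)$ (plus a seed triplet in $R_1^{(2)}$ in the degenerate case $\hat{x}_1 = 1$), which matches form~(i) or~(ii) by direct inspection, while all other sets are vacuously compliant.

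For the inductive step, I decompose the transition from $R_{n-1}^{(i)}$ to $R_n^{(i)}$ into two phases: (a) removal of expired triplets, i.e., those $(s, e, t)$ with $a_t + d \le a_n$, and (b) addition of the triplet(s) generated by player~$n$'s allocation. The key structural fact for phase~(a) is that, because rental durations are fixed at $d$, expiration times are monotonic in the player-index ordering, so the triplets removed at step~$n$ are precisely those carrying the smallest time labels in $R_{n-1}^{(i)}$. In forms~(ii) and~(iv), these smallest labels sit at the leftmost positions (by $s$-value), so removal simply shrinks the support from the left while preserving the form. In forms~(i) and~(iii), the cyclic split places the smallest labels in the ``right block'', and their removal either collapses the set into form~(iv) (when the right block disappears entirely) or preserves~(i)/(iii) with a reduced number of triplets.

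For phase~(b), I case-split on whether $p_n + \hat{x}_n < 1$ or $p_n + \hat{x}_n \ge 1$. In the first case, a single triplet $(p_n, p_n + \hat{x}_n, n)$ is inserted into $R_n^{(m_n)}$ immediately after the active block ending at $p_n$; depending on whether $R_{n-1}^{(m_n)}$ already carried residual triplets from a prior wrap-around cycle, this yields form~(i) or form~(ii). In the second case, $(p_n, 1, n)$ completes $R_n^{(m_n)}$ in form~(ii), while the seed triplet $(0, p_n + \hat{x}_n - 1, n)$ is inserted as the leftmost element of $R_n^{(m_n + 1)}$, producing form~(iii) given that $R_{n-1}^{(m_n + 1)}$ was empty or in form~(iv) by the inductive hypothesis. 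For any ball $i \notin \{m_n, m_n + 1\}$, only expiration acts on $R_n^{(i)}$, and form~(iv) is inherited directly from $R_{n-1}^{(i)}$.

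The principal technical obstacle lies in the wrap-around subcase, where I must verify that the new seed triplet $(0, p_n + \hat{x}_n - 1, n)$ does not overlap with any pre-existing triplet in $R_{n-1}^{(m_n + 1)}$; equivalently, the leftmost surviving $s$-value of $R_{n-1}^{(m_n + 1)}$ must be at least $p_n + \hat{x}_n - 1$, which amounts to $l_{n-1}^{(m_n + 1)} \le 2 - p_n - \hat{x}_n$. I would prove this as an auxiliary invariant linking the pointer state to the cumulative mass on each ball: using the feasibility condition~\eqref{eq:ocr_conditions} together with the mass-conservation identity $\sum_{i \in [k]} l_n^{(i)} = \sum_{j \in [n-1]} \hat{x}_j \cdot \indicator{a_j + d > a_n}$ maintained by the pointer dynamics, one obtains the required upper bound on $l_{n-1}^{(m_n + 1)}$. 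Once this invariant is in hand, the remaining verification for each subcase of the inductive step is a mechanical bookkeeping check on the $s$- and $t$-orderings stipulated by the four forms.
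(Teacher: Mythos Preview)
Your plan is essentially the paper's own proof: induction on $n$, split each step into an expiration phase and an insertion phase, and argue non-overlap via the feasibility constraint~\eqref{eq:ocr_conditions} combined with the pointer bookkeeping. Two remarks on where your write-up is thinner than it needs to be. First, the ``principal technical obstacle'' you single out---non-overlap of the seed triplet $(0,p_n+\hat x_n-1,n)$ on ball $m_n+1$---has a twin you do not mention: when $p_n+\hat x_n<1$ and $R_{n-1}^{(m_n)}$ is already in form~(i), the new triplet $(p_n,p_n+\hat x_n,n)$ must also not intrude into the right block of residual triplets from the previous cycle; this is in fact the case the paper isolates as its Lemma~\ref{lemma:kRental:locr:proof}. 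Second, the mass-conservation identity $\sum_i l_n^{(i)}=\sum_{j<n}\hat x_j\indicator{a_j+d>a_n}$ only bounds the \emph{total} mass by $k-\hat x_n$; to bound a single $l_{n-1}^{(m_n+1)}$ you need the additional pointer-cycle observation (which you gesture at but do not spell out): if any triplet on ball $m_n+1$ is still alive, then by fixed-duration monotonicity every player processed since then is also alive, and those players collectively deposited at least $k-2+p_n+l_{n-1}^{(m_n+1)}$ units of mass (one full loop of the pointer), so feasibility forces $l_{n-1}^{(m_n+1)}\le 2-p_n-\hat x_n$. That is exactly the content of the paper's Lemma~\ref{lemma:kRental:locr:proof}, just phrased for the other ball.
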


\begin{proof}
We prove the proposition by induction on $N$, the number of arriving players.

\textbf{Base case ($N=1$).}
In this scenario, we have $p_1=0$ and $m_1 = 1$, leading to $R_{1}^{(1)} = \{(0, x_1, 1)\}$. If $\hat x_n + p_n < 1$ then $R_{1}^{(1)}$ takes the first form $j = 1$, otherwise it is in the second form and all the other sets $R_{1}^{(i)}$ are empty. Therefore, the proposition holds for $N=1$.

\textbf{Induction step:} Assume that for $N = M$ the proposition holds, i.e., for each $i \in [k]$, the set $R_{M}^{(i)}$ is in one of the above forms in Proposition~\ref{prop:forms-R-ocs}. We now prove that $R_{M+1}^{(i)}$ is also in one of these forms after processing the $(M+1)$-th player. After the arrival of the $(M+1)$-th player, two types of changes may occur in $R_{M+1}^{(i)}$ compared to $R_{M}^{(i)}$: existing elements may be removed, or new elements may be added.
An element in $R_{M}^{(i)}$ is removed from $R_{M+1}^{(i)}$ if the rental period for the corresponding player ends by time $a_{M+1}$. In other words, for an element corresponding to a player $j$ satisfying $a_{M} < a_{j} + d \le a_{M+1}$, that element is removed from $R_{M+1}^{(i)}$. In the following discussion, we examine the two scenarios that can arise depending on whether an element is added or removed from the set $R_{M}^{(i)}$.

\subsubsection{\textsc{Scenario 1 (element removed)}} 
Suppose an element is removed from $R_{M}^{(i)}$ upon the arrival of player $M+1$. We consider four subcases, depending on which form $R_{M}^{(i)}$ takes.

\medskip
\noindent
$\bullet$ \textbf{Subcase-1a: $R_{M}^{(i)}$ is in the first form.} Then we have 
\begin{align*}
R_{M}^{(i)} = \{(s_{1}=0,  s_{2},  t_{1}),  \dots,  (s_{j}=p_n,  s_{j+1}=p_n + \hat x_n,  t_{j}),  (s_{j+1}+1 - l_{n}^{(i)},  s_{j+2},  t_{j+1}),  \dots,  (s_{q},  s_{q+1}=1,  t_{q})\}.
\end{align*}
for some values $\{s_i\}$ and $\{t_i\}$ satisfying the conditions of the first form in Proposition~\ref{prop:forms-R-ocs}.

In the set $R_{M}^{(i)}$, the first elements to be removed—based on the arrival times of these players and the fact that they appear in $R_{M}^{(i)}$ only for a fixed duration—are those corresponding to players $j+1$ up to $q$ and $1$ to $j$. It is easy to see that the removal of the elements corresponding to players $j+1$ up to $q$ leads $R_{M+1}^{(i)}$ to remain in the first form. The removal of elements corresponding to players $1$ to $j$ causes $R_{M+1}^{(i)}$ to have either the first or second form, depending on the values of $x_{m+1}$ and $p_{m+1}$.

\medskip
\noindent
$\bullet$  \textbf{Subcase-1b: $R_{M}^{(i)}$ is in the second form.} Then we have 
\begin{align*}
R_{M}^{(i)} = \{(s_{1},  s_{2},  t_{1}),  (s_{2},  s_{3},  t_{2}),  \dots,  (s_{q}=p_{M},  s_{q+1}=p_{M+1},  t_{q})\},
\end{align*}
for some values $\{s_i\}$ and $\{t_i\}$ satisfying the conditions of the second form in Proposition~\ref{prop:forms-R-ocs}.

Since $t_{1} \le t_{i}$ for all $i \in [q]$, $(s_{1},  s_{2},  t_{1})$ is the oldest element and thus the first to be removed. Consequently, if one element is removed upon the arrival of player $M+1$, it must be $(s_{1},  s_{2},  t_{1})$. A straightforward check shows that after this removal, $R_{M+1}^{(i)}$ continues to be in the fourth form. If more than one element is removed, a similar argument implies that $R_{M+1}^{(i)}$ will be in the fourth form.

\medskip
\noindent
$\bullet$ \textbf{Subcase-1c: $R_{M}^{(i)}$ is in the third form.} If an element is removed from $R_{M}^{(i)}$ upon the arrival of player $M+1$, it must be the oldest element in that set. By an argument analogous to the subcases above, $R_{M+1}^{(i)}$ then transitions to either the first or second form depending on the values of $x_{m+1}$ and $p_{m+1}$.

\medskip
\noindent
$\bullet$ \textbf{Subcase-1d: $R_{M}^{(i)}$ is in the fourth form.} Here, removing an element from $R_{M}^{(i)}$ at the arrival of player $M+1$ causes $R_{M+1}^{(i)}$ to remain in the same form.

\subsubsection{\textsc{Scenario 2 (element is added)}} 
Suppose a new element corresponding to the decision made for player $M+1$ is added to  the set $R_{M+1}^{(i)}$.
Again, we consider the following cases:

\medskip
\noindent
$\bullet$ \textbf{Subcase-2a: $R_{M}^{(i)}$ in the first form.} In this case, we have 
\begin{align*}
R_{M}^{(i)} = \{(s_{1}=0,  s_{2},  t_{1}),  \dots,  (s_{j}=p_{M},  s_{j+1}=p_{M+1},  t_{j}),  (s_{j+1}+1-l_M^{(i)},  s_{j+2},  t_{j+1}),  \dots,  (s_{q},  s_{q+1}=1,  t_{q})\}.
\end{align*}
Suppose a new element $(s, e, M+1)$ is added at the arrival time of the $(M+1)$-th player. By the design of Algorithm \ref{alg:locr}, we have $s = p_{M+1}$ and $e = p_{M+1} + x_{M+1}$. Thus, this new element is inserted after the element $(s_{j},  s_{j+1}=p_{M+1},  t_{j})$. To ensure that $R_{M+1}^{(i)}$ remains in the proper form (i.e., either in the first or second form), we must have 
\begin{align*}
e \le s_{j+1}+1-l_{M}^{(i)}.
\end{align*}

We now formalize and prove this requirement.

\begin{lemma}
\label{lemma:kRental:locr:proof}
Consider the case where the set $R_{M}^{(i)}$ is in the first form as described above, and suppose that upon the arrival of player $M+1$, a new element $(s, e, M+1)$ is added with $s = p_{M+1}$ and $e = p_{M+1} + x_{M+1}$. Then, it holds that 
\begin{align*}
e \le s_{j+1} + 1 - l_{M}^{(i)},
\end{align*}
where $l_{M}^{(i)} = \sum_{(s,e,t) \in R_{M}^{(i)}} (e - s)$, as previously defined, represents the total sum of the range sizes corresponding to each element $(s,e,t)$ in $R_{M}^{(i)}$.
\end{lemma}

\begin{proof}
Consider the stage at which the algorithm has completed processing the $t_q$-th player and has added the element $(s_q, s_{q+1} = 1, t_q)$ to $R_{t_q}^{(i)}$. By construction, at this point the pointer $p_{t_q+1}$ reaches $1$, and the pointer $m_{t_q}$ is advanced. Between the processing of the $t_q$-th and $t_1$-th players, the pointer $m_n$ is advanced $k - 1$ times until it once again points to ball $i$, at which point a new element $(s_1, s_2, t_1)$ is added to $R_{t_1}^{(i)}$. For the pointer $m_n$ to be advanced $k - 1$ times, there must exist a sequence of players from the $t_q$-th to the $t_1$-th such that the following condition holds:
\begin{align*}
\sum_{j=t_{q}+1}^{t_1-1} \hat x_j + \Bigl(x_{t_q} + p_{t_q} - 1\Bigr) + \Bigl(1 - p_{t_1}\Bigr) \cdot \mathbf{1}_{\{m_{t_1} = i -1 \}} = k-1.
\end{align*}
Next, it follows that if 
\begin{align*}
e > s_{j+1} + 1 - l_{M}^{(i)},
\end{align*}
then 
\begin{align*}
\sum_{j=1}^{M+1} \hat x_j \cdot \mathbf{1}_{\{a_{j}+d > a_{M+1}\}} > k,
\end{align*}
which contradicts the constraint in Eq.~\eqref{eq:ocr_conditions}. Therefore, we must have 
\begin{align*}
e \le s_{j+1} + 1 - l_{M}^{(i)}.
\end{align*}
Thus, we complete the proof of Lemma \ref{lemma:kRental:locr:proof}.
\end{proof}

Thus, by Lemma \ref{lemma:kRental:locr:proof}, if a new element is added to $R_{M+1}^{(i)}$ when $R_{M}^{(i)}$ is in the first form, no two elements will have overlapping intervals, and the set remains in either the first or second form. This completes the argument.

\medskip
\noindent
$\bullet$ \textbf{Subcase-2b: $R_{M}^{(i)}$ in the second form.} \quad
Since at the arrival time of player $M$ we have $i = m_n$, if a new element is added to $R_{M+1}^{(i)}$ at the arrival time of player $M+1$, it must be that $k = 1$. In this case, the newly added element takes the form 
$(p_{M+1}, p_{M+1} + x_{M+1}, M+1)$. 
By the same reasoning, if the interval 
$(p_{M+1}, p_{M+1} + x_{M+1}]$ 
overlaps with any other element in $R_{M+1}^{(i)}$, this would contradict the constraint specified in Eq.~\eqref{eq:ocr_conditions}.

\medskip
\noindent
$\bullet$ \textbf{Subcase-2c: $R_{M}^{(i)}$ in the third form.} \quad
In this case, we have
\begin{align*}
R_{M}^{(i)}
&= \Bigl\{\bigl(s_1=0, s_2=p_M+x_M-1, t_1=M\bigr), \bigl(s_2,  s_{3},  t_{2}\bigr), \dots, \bigl(s_{q},  s_{q+1}=1, t_{q}\bigr)\Bigr\}.
\end{align*}
Suppose a new element $(s,e,t)$ is added at the arrival time of player $M+1$. Then, by the design of Algorithm \ref{alg:locr}, 
\begin{align*}
s &= p_M+x_M-1 = p_{M+1},\\[2mm]
e &= \max\{1, p_{M+1}+x_{M+1}\}.
\end{align*}
Thus, this new element is appended after the element $(s_{1}, s_{2}, M)$. Following the same reasoning as in Subcase-1a, we must have $e < s_2$, otherwise a contradiction arises with the constraint in Eq.~\eqref{eq:ocr_conditions}.

\medskip
\noindent
$\bullet$ \textbf{Subcase-2d: $R_{M}^{(i)}$ in the fourth form.} \quad
Here, we have
\begin{align*}
R_{M}^{(i)}
&= \Bigl\{\bigl(s_1=1-l_{M}^{(i)},  s_{2},  t_{1}\bigr), (s_{2},  s_{3},  t_{2}), \dots, (s_{q},  s_{q+1}=1,  t_{q})\Bigr\}.
\end{align*}
Suppose a new element $(s,e,t)$ is added at time $M+1$. By the design of Algorithm~\ref{alg:locr}, we have
\begin{align*}
s &= 0 = p_{M+1},\\[2mm]
e &= \max\{1, p_{M+1}+x_{M+1}\}.
\end{align*}
Hence, this new element is inserted before $(s_{1}, s_{2}, t_{1})$. Again, by analogous reasoning to Subcase-1a, we must have $e < s_{2}$; otherwise, we reach a contradiction with the constraint in Eq.~\eqref{eq:ocr_conditions}.

Putting everything together, we complete the proof of Proposition \ref{prop:forms-R-ocs}.
\end{proof}

Our above proof of Proposition~\ref{prop:forms-R-ocs} shows that $R_{n}^{(i)}$ can only take one of four specific forms, none of which permits two elements $(s_1,e_1,n_1)$ and $(s_2,e_2,n_2)$ with overlapping intervals $[s_1,e_1)$ and $[s_2,e_2)$. Thus, Proposition~\ref{lemma:kRental:VD:loca:availability} follows.

\subsection{Proof of Theorem~\ref{thm:correctness-OCS-algorithm}}
\label{apx:thm:correctness-OCS-algorithm}
We consider two cases based on the relation between $1-p_n$ and $\hat x_n$.

\textbf{Case I:} $1-p_n \geq \hat x_n$. In this case, the algorithm assigns ball $m_n$ to player $n$ if ball $m_n$ is available and $r \in [p_n, p_n+\hat x_n)$. Let $E_{m_n}$ denote the event that ball $m_n$ is available upon the arrival of the $n$-th player at time $a_n$, and let $E$ denote the event $r \in [p_n, p_n+\hat x_n)$. Then the probability that ball $m_n$ is allocated to player $n$ is 
\begin{align*}
\Pr[E_{m_n} \cap E].
\end{align*}

Based on the definition of the set $R^{(m_n)}_n$, it contains all the ranges that have been used to allocate ball $m_n$ to previous players, including those whose rental players overlap with player $n$. Proposition~\ref{prop:forms-R-ocs}  ensures that the range $[p_n, p_n+\hat x_n)$ does not overlap with any other range in $R^{(m_n)}_n$. Thus, if the random sample $r \in [p_n, p_n+\hat x_n)$, then ball $m_n$ is guaranteed to be available at the arrival of player $n$. So if $E$ occurs, ball $m_n$ is available. Consequently,
\begin{align*}
   & \Pr[E_{m_n} \cap E] \\
=\ & \Pr[E] \cdot \Pr[E_{m_n}\mid E] \\
=\ & \Pr[E] \\
=\ & \Pr[r \in [p_n, p_n+\hat x_n)] \\
=\ & \hat x_n.
\end{align*}
Hence, when $\hat x_n < 1-p_n$, player $n$ gets ball $m_n$ with probability $\hat x_n$.

\textbf{Case II:} $1-p_n < \hat x_n$. In this case, the algorithm first attempts to allocate ball $m_n$ if it is available and if $r \in [p_n, 1]$. If this allocation does not occur, it then attempts to allocate ball $m_n+1$ provided that it is available and $r \in [0, \hat x_n-1+p_n)$. Let $E_{m_n}$ denote the event that ball $m_n$ is available at time $a_n$, and let $E_1$ denote the event $r \in [p_n, 1]$. Similarly, let $E_{m_n+1}$ denote the event that ball $m_n+1$ is available, and let $E_2$ denote the event $r \in [0, \hat x_n-1+p_n)$. Then the total probability that a ball is allocated to player $n$ is given by
\begin{align*}
\Pr[E_{m_n} \cap E_1] + \Pr\Bigl[E_{m_n+1} \cap E_2 \cap (E_{m_n} \cap E_1)'\Bigr].
\end{align*}
Based on Proposition~\ref{prop:forms-R-ocs} and the fact that $p_n \neq 0$, the set $R_{n}^{(m_n)}$ is in the first form and the range $[p_n, 1]$ does not overlap with any previous ranges in $R_{n}^{(m_n)}$. Therefore, if $E_1$ occurs, ball $m_n$ is available, and
\begin{align*}
\Pr[E_{m_n} \cap E_1] = \Pr[E_1] = 1-p_n.
\end{align*}
Similarly, the event $E_{m_n+1} \cap E_2$ occurs with probability
\begin{align*}
\Pr[E_{m_n+1} \cap E_2] = \Pr[E_2] = \hat x_n - 1 + p_n.
\end{align*}
Since the ranges $[p_n, 1]$ and $[0, \hat x_n-1+p_n)$ do not overlap (because $\hat x_n \leq 1$), we have
\begin{align*}
   & \Pr\Bigl[E_{m_n+1} \cap E_2 \cap (E_{m_n} \cap E_1)'\Bigr] \\
=\ & \Pr\Bigl[r \in \bigl([0, \hat x_n-1+p_n) \cap [0, p_n)  \bigr)\Bigr] \\
=\ & \Pr\Bigl[r \in [0, \hat x_n-1+p_n)\Bigr] \\
=\ &  \hat x_n - 1 + p_n.
\end{align*}

Thus, the total probability that a ball is allocated to player $n$ is
\begin{align*}
   & \Pr[\text{ball allocated to player } n] \\
=\ &  \Pr[E_{m_n} \cap E_1] + \Pr\Bigl[E_{m_n+1} \cap E_2 \cap (E_{m_n} \cap E_1)'\Bigr] \\
=\ & (1-p_n) + (\hat x_n-1+p_n) \\
=\ & \hat x_n.
\end{align*}

Thus, in both cases, each player $n$ gets a ball with the desired probability $\hat x_n$, which proves that Algorithm~\ref{alg:locr} is a lossless online scheme.

\subsection{Example: How \locr Works in Algorithm~\ref{alg:locr}}

\label{apx:example:locr}

To illustrate how Algorithm \ref{alg:locr} works and demonstrate how it outperforms a rounding procedure that makes independent decisions at each time step, consider the following instance with $k = 2$, $d = 5$, and $N = 4$. The target probability values and arrival time of players are given by:
\begin{align*}
\Big \{ (a_1=1, \hat{x}_1=0.4), \quad  (a_2=2, \hat{x}_2=0.5), \quad (a_3=3,  \hat{x}_3=0.6), \quad  (a_4=6, \hat{x}_4=0.6) \Big \}.
\end{align*}
It can be verified that for each player $n \in \{1, \cdots, 4\} $, the inequality in Eq.~\eqref{eq:ocr_conditions} is satisfied for the above instance.  Initially, before any player arrives, a random seed $r$ is drawn from the uniform distribution $U(0,1)$. This random seed is used throughout the horizon to correlate the algorithm's decisions.

\begin{figure*}[htb]
\centering
\includegraphics[scale=0.3]{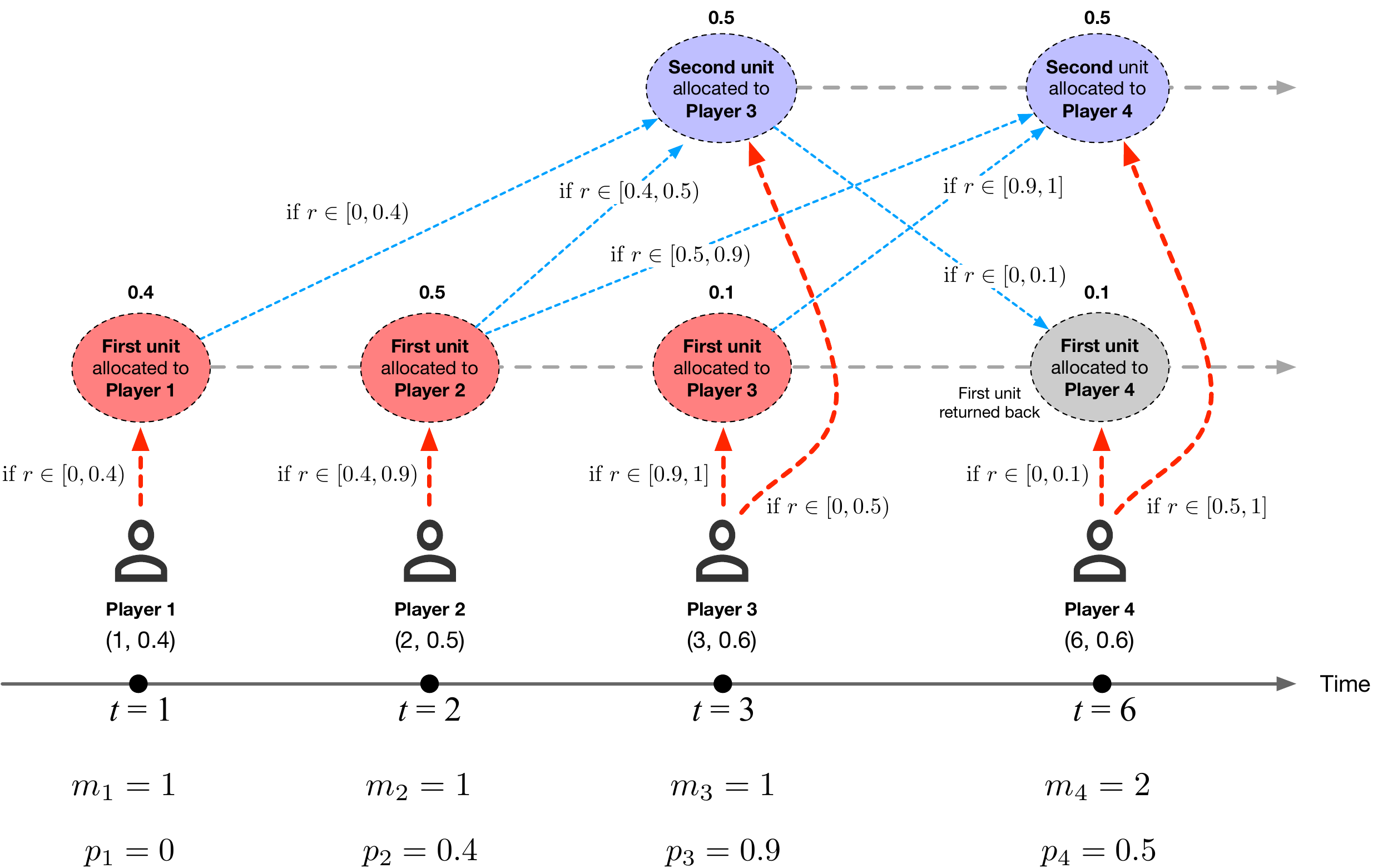}
\caption{Illustration of the online correlated assignment process in Algorithm \ref{alg:locr}. The number above each circle represents the probability of the corresponding event occurring. In this example, player 3 receives the first ball with probability 0.1 and the second ball with probability 0.5. Notably, the assignment of the second ball to player 3 (with probability 0.5) is correlated with the decisions made for players 1 and 2, as illustrated by the two blue dashed arrows on the left. For instance, if the realization of the random seed is $ r = 0.45 $, then player 1 is rejected, player 2 receives the first ball, player 3 receives the second ball, and player 4 is rejected.}
\label{figure:oca_example}
\end{figure*}

\begin{itemize}
    \item When player 1 arrives at time $t=1$ with $\hat{x}_1=0.4$, Algorithm \ref{alg:locr} assigns the first ball to player 1 if $r \in [0,\, 0.4)$. Thus, \textbf{player 1 will receive one ball with probability 0.4}.
    
    \item When player 2 arrives at time $t=2$ with $ \hat{x}_2=0.5$, Algorithm \ref{alg:locr} assigns the first ball provided that $r \in [0.4, 0.9)$. Thus, player 2 receives the first ball with probability $0.5$. Note that if $r \in [0.4, 0.9)$, the first ball is guaranteed to be available because it was rented at $t=1$ only if $r \in [0,0.4)$. Thus, \textbf{player 2 will receive one ball with probability 0.5}.
    
    \item When player 3 arrives at time $t=3$ with $\hat{x}_3=0.6$, Algorithm \ref{alg:locr} assigns the first ball if it is available and $r \in [0.9, 1]$, which happens with probability 0.1. If the first ball is not allocated, Algorithm \ref{alg:locr} assigns the second ball if $r \in [0, 0.5)$. Consequently, \textbf{the total probability of allocating one ball to player 3 is 0.6}.
    
    \item Finally, when player 4 arrives at time $t=6$ with $ \hat{x}_4=0.6$, Algorithm \ref{alg:locr} assigns the second ball if $r \in [0.5, 1]$, an event that occurs with probability 0.5; assigns the first ball if $ r\in [0, 0.1) $, an event that occurs with probability 0.1. Note that for the case when $ r\in [0, 0.1) $, the first ball will be returned by player 1 at $ t = 6 $. Thus, \textbf{player 4 will receive one ball with probability 0.6}.
\end{itemize}
We can see from the above example that Algorithm \ref{alg:locr} always maintains a \locr scheme in that it assigns one ball to every player $n \in [N]$ with  probability exactly $\hat x_n$.

\section{Proof of Theorem~\ref{them:kRentalFD}}
\label{apx:proof:them:kRentalFD}
We use an online primal-dual approach to establish the competitive ratio of Algorithm~\ref{alg:kRentalFD}. Consider the dual LP corresponding to the primal LP in Eq.~\eqref{lp:kRentalFD:primal}:
\begin{align}
\label{lp:kRentalFD:dual}
\min_{\mathbf{\lambda,u}} \quad & \sum_{n \in [N]} u_{n} + k \cdot \sum_{n \in [N]} \lambda_n,\\ 
\label{eq:kRentalFD:dual:cons}
\text{s.t.} \quad & u_{n} + \sum_{j= n}^{N} \lambda_{j}\mathbf{1}_{\{a_n + d > a_{j}\}} \ge v_{n}, \quad \forall n \in [N],\\ 
& \lambda_n \ge 0, \quad \forall n \in [N].
\end{align}

Let $\alg(\mathcal{I})$ represent the expected objective value of Algorithm~\ref{alg:kRentalFD} on an instance $\mathcal{I}$. Moving forward, we first propose a feasible solution for the primal problem given in Eq.~\eqref{lp:kRentalFD:primal}, denoted as $\{x^{\alg}_{n}\}_{n \in [N]}$. We then show that $\alg(\mathcal{I})$ equals the objective value of the primal LP solution.

In the second step, we design a feasible solution for the dual problem, denoted as $\{u^{\alg}_{n},\lambda^{\alg}_{n}\}_{n \in [N]}$, which corresponds to the dual linear program, and define the dual objective
\begin{align*}
   D^{\alg} = \sum_{n \in [N]} u^{\alg}_{n} + k \cdot \sum_{n \in [N]} \lambda^{\alg}_{n}.
\end{align*}
We then establish that the dual solution $\{u^{\alg}_{n},\lambda^{\alg}_{n}\}_{n \in [N]}$ is feasible for the dual LP in Eq.~\eqref{lp:kRentalFD:dual}.

In the final step,
we establish that 
\begin{align*}
   D^{\alg} \le \underbrace{\Bigl(1 + \ln\Bigl(\frac{v_{\max}}{v_{\min}}\Bigr)\Bigr)}_{F} \cdot P^{\alg}.
\end{align*}

After proving these steps, by weak duality we have
\begin{align*}
   \alg(\mathcal{I}) = P^{\alg} \ge \frac{1}{F} \cdot D^{\alg} \ge \frac{1}{F} \cdot \opt(\mathcal{I}),
\end{align*}
where $\opt(\mathcal{I})$ denotes the objective value of the optimal clairvoyant algorithm on instance $\mathcal{I}$. Thus, the $1 + \ln\Bigl(\frac{v_{\max}}{v_{\min}}\Bigr)$-competitiveness of Algorithm~\ref{alg:kRentalFD} follows.

\paragraph{Step-I: Design of primal solution and $\alg(\mathcal{I}) = P^{\alg}$.} 
We set the primal variables as $x^{\alg}_{n} = \hat x_{n}$ for all $n \in [N]$. Let $\Delta_{P^{\alg}}^{(n)}$ denote the increase in the primal LP objective value resulting from updating $x^{\alg}_{n}$, and let $\Delta^{\alg(\mathcal{I})}_{n}$ denote the corresponding increase in the expected objective value of Algorithm~\ref{alg:kRentalFD} when a unit of resource is allocated to request $n$. By our update, we have
\begin{align*}
   \Delta_{P^{\alg}}^{(n)} = v_n \cdot x^{\alg}_{n}.
\end{align*}
Furthermore, 
\begin{align*}
   \Delta^{\alg(\mathcal{I})}_{n} &= v_n \cdot \Pr[\text{a unit of resource is allocated to request } n] \\
   &= v_n \cdot \hat x_{n}.
\end{align*}
The final equality follows because the \locr rounding scheme allocates a unit to request $n$ with probability exactly $\hat x_{n}$. Summing over requests for all $ n \in [N]$ establishes that $\alg(\mathcal{I}) = P^{\alg}$.

\paragraph{Step-II: Design of dual solution,  $\{u^{\alg}_{n}, \lambda^{\alg}_{n}\}_{n \in [N]}$, and feasibility of the dual solution.} Let us initialize all dual variables to zero. For each rental request~$n$, let $\hat x_{n}$ be the fractional allocation chosen by Algorithm~\ref{alg:kRentalFD}. We then perform the following updates:
\begin{align}
\label{eq:kRentalFD:dual:update1}
\lambda^{\alg}_{\nu^*_n} & = \lambda^{\alg}_{\nu^*_n} + F \cdot \int_{\eta = \frac{y_n}{k}}^{\frac{y_n + \hat x_{n}}{k}} \phi(\eta) d\eta,\\
\label{eq:kRentalFD:dual:update2}
u^{\alg}_n & = F \cdot \hat x_{n} \cdot \Bigl( v_n - \phi\Bigl(\frac{y_n + \hat x_{n}}{k}\Bigr)\Bigr),
\end{align}
where $ \nu^*_n = \max\{j \ge n | a_{j} < a_n + d\} $ and $ F = 1 + \ln\Bigl(\frac{v_{\max}}{v_{\min}}\Bigr)$. To prove the feasibility of the above dual solution, we must show that the dual constraint in Eq.~\eqref{eq:kRentalFD:dual:cons} corresponding to each request $n$ is satisfied; that is,
\begin{align*}
    u^{\alg}_{n} + \sum_{j = n}^{N} \lambda^{\alg}_{j} \cdot \indicator{a_n + d > a_{j}} \ge v_n.
\end{align*}

For the $n$-th request, the dual variable $u^{\alg}_{n}$ is updated according to Eq.~\eqref{eq:kRentalFD:dual:update2}, so that
\begin{align*}
    u^{\alg}_{n} \ge F \cdot \hat x_{n} \cdot \Bigl( v_n - \phi\Bigl(\frac{y_n+\hat x_{n}}{k}\Bigr)\Bigr).
\end{align*}
Next, we aim to prove that 
\begin{align}
\label{eq:kRental:FD:proof:theorem}
\sum_{j = n}^{N} \lambda^{\alg}_{j} \cdot \indicator{a_n + d > a_{j}} \ge F \cdot \int_{0}^{\frac{y_n+\hat x_{n}}{k}} \phi(\eta)  d\eta.
\end{align}
Assuming this inequality holds, we first show that the dual constraint in Eq.~\eqref{eq:kRentalFD:dual:cons} is satisfied for each request $n$ and then get back to the proof of the above inequality. Let us consider the following two cases.

\textbf{Case 1:} $\phi\Bigl(\frac{y_n+\hat x_{n}}{k}\Bigr) < v_n$. It can be verified that following from Eq.~\eqref{eq:kRental:FD:fraction:allocation}, we have
\begin{align*}
\hat x_{n} = \max\Bigl\{0, \min\Bigl\{1, k\cdot\phi^{-1}(v_n) - y_n\Bigr\}\Bigr\}.
\end{align*}
Since we have $\phi\Bigl(\frac{y_n+\hat x_{n}}{k}\Bigr) < v_n$, then it follows that $\hat x_{n} = 1$. 
Therefore,
\begin{align*}
u^{\alg}_{n} + \sum_{j = n}^{N} \lambda^{\alg}_{j} \cdot \indicator{a_n + d > a_j}
& \ge F \cdot \Bigl( v_n - \phi\Bigl(\frac{y_n+\hat x_{n}}{k}\Bigr)\Bigr) 
+ F \cdot \int_{0}^{\frac{y_n+\hat x_{n}}{k}} \phi(\eta)  d\eta \\
& \ge F \cdot \left( \int_{\frac{y_n+\hat x_{n}}{k}}^{\phi^{-1}(v_n)} \phi(\eta)  d\eta 
+ \int_{0}^{\frac{y_n+\hat x_{n}}{k}} \phi(\eta)  d\eta \right) \\
& = F \cdot \int_{0}^{\phi^{-1}(v_n)} \phi(\eta)  d\eta \\
& \ge v_n,
\end{align*}
where the second inequality follows from the monotonicity of $\phi$, and the last inequality follows from the design of $\phi$ in Theorem~\ref{them:kRentalFD}.

\textbf{Case 2:} $\phi\Bigl(\frac{y_n+\hat x_{n}}{k}\Bigr) \ge v_n$. We can lower bound the LHS of Eq.~\eqref{eq:kRentalFD:dual:cons} as follows:
\begin{align*}
u^{\alg}_{n} + \sum_{j = n}^{N} \lambda^{\alg}_{j} \cdot \indicator{a_n + d > a_j}
& \ge F \cdot x_{n} \cdot \Bigl( v_n - \phi\Bigl(\frac{y_n+\hat x_{n}}{k}\Bigr)\Bigr)
+ F \cdot \int_{0}^{\frac{y_n+\hat x_{n}}{k}} \phi(\eta)  d\eta \\
& \ge F \cdot \int_{0}^{\frac{y_n+\hat x_{n}}{k}} \phi(\eta)  d\eta \\
& \ge F \cdot \int_{0}^{\phi^{-1}(v_n)} \phi(\eta)  d\eta \\
& \ge v_n,
\end{align*}
where the second inequality follows from the monotonicity of $\phi$ and the last inequality is as in Case 1.

Thus, assuming Eq.~\eqref{eq:kRental:FD:proof:theorem} holds, the dual constraint in Eq.~\eqref{eq:kRentalFD:dual:cons} is satisfied for each $n \in [N]$, proving the feasibility of the dual solution $\{u^{\alg}_n, \lambda^{\alg}_n\}$.

Let us now return to the proof of inequality Eq. \eqref{eq:kRental:FD:proof:theorem}. Define the set of requests $O_{n}$ as
\begin{align*}
   O_{n} = \{ n \leq j \leq N \mid a_{n} + d > a_j, \hat x_j > 0 \},
\end{align*}
i.e., the set of requests arriving after request $n$ whose rental request intervals overlap with that of request $n$. Furthermore, define the set of requests $B_n$ as
\begin{align*}
   B_n = \{ 1 \leq j \le n \mid a_{j} + d >a_n, \hat x_j > 0 \}.
\end{align*}
Note that for each request $j \in B_n$, we have $\nu^*_{j} \in O_n$. 
Then, it follows that:
\begin{align*}
     \sum_{j = n}^{N} \lambda^{\alg}_{j} \cdot \indicator{a_n + d > a_j}
     & \ge \sum_{j \in O_{n}} \lambda^{\alg}_{j} \\
     &= \sum_{j \in O_{n}} \sum_{m \in [n]} \Delta_{m}^{\lambda^{\alg}_{j}} \\
     &\ge \sum_{m \in B_n} \Delta_{m}^{\lambda^{\alg}_{\nu^*_m}},
\end{align*}
where $\Delta_{m}^{\lambda^{\alg}_{j}}$ denotes the increase in the value of $\lambda^{\alg}_{j}$ due to the dual update in Eq.~\eqref{eq:kRentalFD:dual:update2} performed for request $m$.
In the above inequalities, the first inequality follows from the definition of the set $O_n$ and the second inequality follows from the fact that the request $\nu^*_m$ is in the set $O_n$ for each request $m \in B_n$. 

Moving forward, let us sort the requests in $B_n$ in increasing order of their arrival times so that
\begin{align*}
   B_n = \{n_1, n_2, \dots, n_{|B_n|}\}.
\end{align*} 

We have:
\begin{align*}
\sum_{m \in B_n} \Delta_{m}^{\lambda^{\alg}_{\nu^*_m}} 
&\ge F  \sum_{i=1}^{|B_n|} \int_{\eta = y_{n_i}}^{y_{n_i}+ \hat x_{n_i}} \phi(\eta) d\eta \\
&\ge F \sum_{i=1}^{|B_n|} \int_{\eta = \sum_{j=1}^{i-1} x_{n_j}}^{\sum_{j=1}^{i} \hat  x_{n_j}} \phi(\eta) d\eta \\
&= F \int_{\eta = 0}^{\sum_{i=1}^{|B_n|} \hat x_{n_i}} \phi(\eta) d\eta \\
&=F  \int_{\eta = 0}^{y_{n}+ \hat x_{n}} \phi(\eta) d\eta.
\end{align*}
Here, the first inequality follows from the update for $\lambda^{\alg}_{\nu^*_m}$ in Eq.~\eqref{eq:kRentalFD:dual:update2} for each request $m$, and the second inequality holds because, by the definition of the set $B_n$, all requests in $B_n$ arriving prior to request $n_i$ have rental requests overlapping request $n_i$; hence, the variable $y_{n_i}$ is at least $\sum_{j=1}^{i-1} x_{n_{j}}$ at the arrival of request $n_i$.

Putting together the results obtained in the above two cases, the inequality in Eq.~\eqref{eq:kRental:FD:proof:theorem} follows.

\paragraph{Step-III: Proof of $D^{\alg} \leq F \cdot P^{\alg}$.}
Instead of directly proving the overall inequality $D^{\alg} \leq F \cdot P^{\alg}$, we show that for each request $n$:
\begin{align*}
\Delta^{D^{\alg}}_{n} \leq F  \Delta^{P^{\alg}}_{n},
\end{align*}
where $\Delta^{D^{\alg}}_{n}$ denotes the increase in the dual objective value after updating the dual variables for request $n$ via Eqs.~\eqref{eq:kRentalFD:dual:update1} and \eqref{eq:kRentalFD:dual:update2}, and $\Delta^{P^{\alg}}_{n}$ denotes the increase in the primal objective value after updating the primal LP solution for request $n$ by setting $x^{\alg}_{n} = \hat x_{n}$.

We now consider the following two cases in order to prove the above inequality.
\begin{itemize}
    \item \textbf{Case 1: $\hat x_{n} = 1$.}
In this case, we have
\begin{align*}
    \Delta^{D^{\alg}}_{n}  & = \Delta_n^{u^{\alg}_n} + k  \Delta_n^{\lambda^{\alg}_{\nu^*_n}} \\
    & = F \cdot \hat x_{n} \cdot \Bigl( v_n - \phi\Bigl(\frac{y_n+\hat x_{n}}{k}\Bigr)\Bigr) + k F \cdot \int_{\eta=\frac{y_n}{k}}^{\frac{y_n+\hat x_{n}}{k}} \phi(\eta)d\eta \\
    & = F \cdot \Bigl( v_n - \phi\Bigl(\frac{y_n+1}{k}\Bigr)\Bigr) + k F \cdot \int_{\eta=\frac{y_n}{k}}^{\frac{y_n+1}{k}} \phi(\eta)d\eta \\
    & \leq F \cdot \Bigl( v_n - \phi\Bigl(\frac{y_n+1}{k}\Bigr)\Bigr) + k F \cdot \frac{1}{k}\phi\Bigl(\frac{y_n+1}{k}\Bigr) \\
    & = F \cdot v_n = F  \Delta^{P^{\alg}}_{n},
\end{align*}
where the first equality follows from the objective value of the dual LP in Eq.~\eqref{lp:kRentalFD:dual}. The second equality follows from the dual updates done in Eq.~\eqref{eq:kRentalFD:dual:update1} and Eq.~\eqref{eq:kRentalFD:dual:update2}. In addition, the first inequality follows from the fact that $\phi$ is an increasing function.

\item \textbf{Case 2: $\hat x_{n} < 1$.} 
In this case, if $\hat x_{n} = 0$, then we have $\Delta^{D^{\alg}}_{n} = \Delta^{P^{\alg}}_{n} = 0$. Otherwise, if $\hat x_{n} \neq 0$, then by Eq.~\eqref{eq:kRental:FD:fraction:allocation} we have 
\begin{align*}
\hat x_{n} = \max\Bigl\{0, \min\Bigl\{1, k\cdot\phi^{-1}(v_n) - y_n\Bigr\}\Bigr\}.
\end{align*}
Since $\hat x_{n} < 1$, we must have
\begin{align*}
v_n = \phi\Bigl(\frac{y_n+\hat x_{n}}{k}\Bigr).
\end{align*}
It follows that
\begin{align*}
    \Delta^{D^{\alg}}_{n}  
    & = \Delta_n^{u^{\alg}_n} + k  \Delta_n^{\lambda^{\alg}_{\nu^*_n}} \\
    &= F \cdot \hat x_{n} \cdot \Bigl( v_n - \phi\Bigl(\frac{y_n+\hat x_{n}}{k}\Bigr)\Bigr) + k F \cdot \int_{\eta=\frac{y_n}{k}}^{\frac{y_n+\hat x_{n}}{k}} \phi(\eta)d\eta \\
    &= k F \cdot \int_{\eta=\frac{y_n}{k}}^{\frac{y_n+\hat x_{n}}{k}} \phi(\eta)d\eta, \\
    &\le F k \cdot \frac{\hat x_{n}}{k} \cdot \phi\Bigl(\frac{y_n+\hat x_{n}}{k}\Bigr) \\
    &= F \cdot v_n \cdot \hat x_{n} \\
    &= F \Delta^{P^{\alg}}_{n}.
\end{align*}
\end{itemize}
Here, the inequality follows from the fact that the function $\phi$ is increasing. Combining the results from both cases, we obtain 
\begin{align*}
\Delta^{D^{\alg}}_{n} \le F  \Delta^{P^{\alg}}_{n},
\end{align*}
which completes the proof of the third step and thus the proof for Theorem~\ref{them:kRentalFD}.

\section{Proof of Proposition~\ref{lemma:kRental:FD:lb}}
\label{apx:lemma:kRental:FD:lb}
Following the same proof structure as in \cite{sun2024static} for the online selection problem, we can prove the lower bound $1+\ln\bigl(\tfrac{v_{\max}}{v_{\min}}\bigr)$ on the competitiveness of every online algorithm for \problemkRentalFD.

We design a set of hard instances for the \problemkRentalFD problem similar to \cite{sun2024static}. Let $\cala(k,v)$ denote a batch of $k$ identical requests, each with valuation $v$ (with $v\in[v_{\min},v_{\max}]$). Divide the uncertainty range $[v_{\min},v_{\max}]$ into $m-1$ sub-ranges of equal length 
\begin{align*}
\Delta_v = \frac{v_{\max}-v_{\min}}{m-1}.
\end{align*}
Let 
\begin{align*}
\mathcal{V} := \{v_i\}_{i\in[m]},
\end{align*}
where $v_i = v_{\min} + (i-1)\Delta_v$ for $i \in [m]$. Define an instance 
\begin{align*}
I_{v_i} := \cala(k,v_1) \oplus \cala(k,v_2) \oplus \cdots \oplus \cala(k,v_i),
\end{align*}
which consists of a sequence of request batches with increasing valuations that arrive consecutively within an arbitrarily short time interval. (Here, $\cala(k,v_i) \oplus \cala(k,v_j)$ denotes a batch $\cala(k,v_i)$ immediately followed by a batch $\cala(k,v_j)$.) In this construction, all requests, including those from the first batch $\cala(k,v_1)$ to the last request in the final batch $\cala(k,v_m)$, arrive during the short interval $[0,\epsilon]$, where $\epsilon$ is a small value satisfying $\epsilon < d$. This setup guarantees that if a unit is allocated to a request in batch $\cala(k,v_i)$, it cannot be reallocated to any request arriving afterwards, because the unit will only become available after $d$ time steps, while subsequent requests arrive within a much shorter time span.

We consider the collection $\{I_{v_i}\}_{i\in[m]}$ as the set of hard instances for the \problemkRentalFD problem.
Following the same proof structure as in the proof of Lemma~2.3 in \cite{sun2024static}, to obtain the optimal online algorithm on the set of hard instances  $\{I_{v_i}\}_{i\in[m]}$, one can show that the lower bound $1+\ln\!\bigl(\tfrac{v_{\max}}{v_{\min}}\bigr)$ holds for the competitiveness of every online algorithm on the set of hard instances $\{I_{v_i}\}_{i\in[m]}$ and thus prove the lower bound $1+\ln\!\bigl(\tfrac{v_{\max}}{v_{\min}}\bigr)$.

\section{Impossibility Result for the Existence of a Lossless Rounding in the Variable Duration Setting}
\label{apx:proof:impos:rounding}

Consider the \gocr problem introduced in Section~\ref{sec:gocr}, where each player rents a ball for a fixed duration~$d$ that is identical across all players and independent of their identity. In this setting, the rental period is uniform and known in advance.

In the following, we study a variant of this problem in which the rental duration is player-dependent and may vary across players. Specifically, each player requests a rental duration that is revealed only upon their arrival. Thus, the algorithm receives information about a player's rental duration in an online fashion. A more formal definition of the \ocr problem under this variable-duration setting is as follows.

\begin{definition}[\gocrv]
Consider a set of $k$ identical balls, each uniquely labeled from the set $\{1, 2, \ldots, k\}$. Each ball can be assigned to a player for a variable duration, determined by the player’s requested rental period, after which the ball becomes available for reassignment. A sequence of $N$ players arrives one by one, with each player $n$ characterized by a triple $(a_n, \hat x_n, d_n)$, where $a_n$ is the arrival time, $\hat x_n \in [0,1]$ is the target probability with which a ball should be assigned to player $n$, and $d_n$ is the requested rental duration. For any $\gamma \in [0,1]$, a \gocrv\ algorithm, given the input sequence $\{(\hat x_n, a_n, d_n)\}_{n \in [N]}$, assigns a ball to each player $n \in [N]$ with probability at least $\gamma \hat x_n$. Similar to the definition of \gocr, for any given input instance of \gocrv, $\{(\hat x_n, a_n, d_n)\}_{n \in [N]}$, we impose the following condition:
\begin{align}\label{eq:ocrv_condition1}
\hat x_n \leq \min \!\left( 1, k - \sum_{j \in [n-1]} \hat x_j \cdot \mathbb{I}_{\{a_{j} + d_j > a_n\}} \right), \qquad \forall n \in [N].
\end{align}
\end{definition}

\paragraph{Proof of Theorem~\ref{prop:rounding:impossiblity}}
We proceed by contradiction. Suppose, for the sake of contradiction, that a (possibly randomized) online algorithm \alg is \emph{lossless} for the \gocrv problem; that is, it achieves $\gamma = 1$ on every instance.  Consider the following instance with inventory of $k=2$ balls and:
\begin{align*}
\mathcal{I} =
\Biggl\{ &
\underbrace{(a_1 = 1,    x_1 = 0.5,       d_1 = 5)}_{P_1},
\underbrace{(a_2 = 2,    x_2 = 0.5,       d_2 = 7)}_{P_2},
\underbrace{(a_3 = 5.5,  x_3 = \tfrac23, d_3 = 9)}_{P_3}, \\ &
\qquad \underbrace{(a_4 = 6,    x_4 = \tfrac13, d_4 = 8)}_{P_4},
\underbrace{(a_5 = 8,    x_5 = \tfrac12, d_5 = 10)}_{P_5},
\underbrace{(a_6 = 14,   x_6 = \tfrac56, d_6 = 10)}_{P_6}
\Biggr\}.
\end{align*}

The above instance clearly satisfies the constraint in Eq. \eqref{eq:ocrv_condition1}.
Let $A_n^{(i)}$ ($i = 1,2$) be the event that the $i$-th ball is allocated to player~$n$ by the online algorithm \alg\ upon processing the player’s request.  Likewise, let $E_t^{(i)}$ ($i = 1,2$) denote the event that the $i$-th ball is still available at time $t$.

Next, we prove the following claim:
    \begin{lemma}
    \label{lemma:player12}
        For the algorithm $\alg$ to be lossless, we must have:
        \begin{align*}
            \pr[A_{1}^{(1)} \cap A_{2}^{(2)}]\ + \  \pr[A_{1}^{(2)} \cap A_{2}^{(1)}] = 0
        \end{align*}
    \end{lemma}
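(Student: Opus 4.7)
The plan is to exploit two features simultaneously: the hypothesis that $\alg$ is lossless on \emph{every} valid \gocrv\ instance (not merely on the displayed instance $\mathcal{I}$), and the online structure of $\alg$, whose decisions on $P_1,P_2$ cannot depend on anything revealed from $P_3$ onward. Write $A_n := A_n^{(1)} \cup A_n^{(2)}$ for the event that player $n$ receives some ball; then since the rental intervals of $P_1$ and $P_2$ overlap, the two events in the lemma are disjoint with union $A_1 \cap A_2$, so the target identity is just $\pr[A_1 \cap A_2] = 0$. I would prove this by pairing $\mathcal{I}$ with a sister instance whose third player forces $A_1 \cap A_2$ to be null, and then pulling that conclusion back to $\mathcal{I}$ by a coupling argument.

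Concretely, I would introduce an auxiliary instance $\mathcal{I}'$ that coincides with $\mathcal{I}$ on $P_1, P_2$, replaces $P_3$ with a player $P_3'$ carrying $\hat x_3' = 1$ (keeping $a_3 = 5.5$ and $d_3 = 9$), and terminates there. To verify that $\mathcal{I}'$ is a valid \gocrv\ instance, the only nontrivial case of Eq.~\eqref{eq:ocrv_condition1} is at $n=3$, where the right-hand side equals $k - \hat x_1 - \hat x_2 = 2 - 0.5 - 0.5 = 1 \ge \hat x_3'$. Losslessness of $\alg$ on $\mathcal{I}'$ forces $\pr[A_3' = 1] \ge \hat x_3' = 1$, so $P_3'$ is allocated almost surely. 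On the other hand, at time $a_3 = 5.5$ both $P_1$ (released at $6$) and $P_2$ (released at $9$) are still holding whatever balls they were given, so an allocation to $P_3'$ is physically possible only when at least one ball is free; in indicator form, $\mathbb{I}_{A_3'} \le 1 - \mathbb{I}_{A_1 \cap A_2}$ on every sample path. Taking expectations gives $1 \le \pr[A_3'] \le 1 - \pr[A_1 \cap A_2]$, so $\pr[A_1 \cap A_2] = 0$ under $\mathcal{I}'$.

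The last step is to transfer this conclusion back to $\mathcal{I}$ via an online/measurability argument: the decisions $A_1, A_2$ of $\alg$ are functions of $(a_1,\hat x_1,d_1), (a_2,\hat x_2,d_2)$ and $\alg$'s internal randomness alone, produced before $P_3$ (or $P_3'$) is disclosed, so the joint law of $(A_1, A_2)$ is identical under $\mathcal{I}$ and $\mathcal{I}'$. Combining this with the disjoint decomposition $A_1 \cap A_2 = (A_1^{(1)} \cap A_2^{(2)}) \cup (A_1^{(2)} \cap A_2^{(1)})$ yields the claim. The main (and essentially only) obstacle is recognising that Lemma~\ref{lemma:player12} cannot be proved from $\mathcal{I}$ alone: one can check that the losslessness constraints $\pr[A_i] \ge \hat x_i$ for $P_1,\dots,P_6$ are jointly satisfiable even when $\pr[A_1 \cap A_2] = 1/3$, so a direct capacity calculation inside $\mathcal{I}$ only gives the weaker bound $\pr[A_1 \cap A_2] \le 1/3$. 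Strengthening this to $0$ genuinely requires the sister-instance coupling above, whose correctness rests on the online/measurability step being stated carefully.
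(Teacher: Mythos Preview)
Your proof is correct and follows essentially the same approach as the paper: both construct an auxiliary instance sharing $P_1,P_2$ with $\mathcal{I}$ but replacing the third player by one with target probability $1$, then use the online property to transfer the resulting constraint back to $\mathcal{I}$. The only cosmetic differences are that the paper chooses $a_3=3$ for the auxiliary third player (you keep $a_3=5.5$) and frames the argument as a proof by contradiction rather than a direct derivation; your added remark about why the sister instance is genuinely needed is a nice clarification not present in the paper.
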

    \begin{proof}
We prove the claim by contradiction.  Suppose the lemma fails, i.e.,
\[
\pr[A_{1}^{(1)}\cap A_{2}^{(2)}\bigr] 
+\pr[A_{1}^{(2)}\cap A_{2}^{(1)}\bigr] >0 .
\]
Then the probability that \emph{at least one} of the two balls is still
available immediately after the decision for player~2 satisfies
\begin{align*}
\pr[E_{2}^{(1)} \cup E_{2}^{(2)}\bigr]
  &= 1 - \pr[(A_{1}^{(1)} \cup A_{2}^{(1)}) \cap (A_{1}^{(2)} \cup A_{2}^{(2)})\bigr] \\
  &= 1 - \pr[(A_{1}^{(1)}\cap A_{1}^{(2)}) \cup (A_{2}^{(1)}\cap A_{2}^{(2)})\bigr] \\
  &= 1 - \pr[A_{1}^{(1)}\cap A_{2}^{(2)}\bigr]
      - \pr[A_{1}^{(2)}\cap A_{2}^{(1)}\bigr]  < 1,
\end{align*}
where the strict inequality follows from the contradictory assumption.

\smallskip
\noindent
Next, let us create a new instance~$\mathcal I'$ that is identical to $\mathcal I$ for
players~$P_1$ and $P_2$, but replaces the remaining sequence by a single
player~$P_3'$ with
\[
(a_3 = 3, x_3 = 1, d_3 = 10).
\]
The feasibility condition~\eqref{eq:ocrv_condition1} is easily verified for $\mathcal I'$.

Because the first two players in $\mathcal I$ and $\mathcal I'$ are identical,
algorithm~\alg\ follows the same random path on both instances for the
first two players.  Thus, at $a_3=3$ in $\mathcal I'$ we still have
$
\pr[E_{2}^{(1)} \cup E_{2}^{(2)}\bigr] < 1,
$
meaning that with probability less than one either of balls are available at arrival of request $P_3'$.  Yet a
lossless algorithm must allocate a ball to $P_3'$ with probability~$x_3
= 1$, an impossibility.  Hence our initial assumption is false, and the lemma follows.
\end{proof}

Next, consider the arrival of the third player $P_{3}$ in instance~$\mathcal I$,
with parameters $a_{3}=5$, $x_{3}= \tfrac23$, and $d_{3}=9$.
We prove the following statement.

\begin{lemma}\label{lem:P3prob}
For \alg\ to remain lossless on~$\mathcal I$, it must hold that
\[
\pr[A^{(1)}_{2}\cap A^{(2)}_{3}\bigr] +
\pr[A^{(2)}_{2}\cap A^{(1)}_{3}\bigr]
=
\frac16.
\]
\end{lemma}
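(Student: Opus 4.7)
My plan is to sandwich the quantity $\pr[A_2^{(1)} \cap A_3^{(2)}] + \pr[A_2^{(2)} \cap A_3^{(1)}]$ between $1/6$ and $1/6$, using the observation that this sum equals $\pr[A_2 \cap A_3]$: since $P_2$ and $P_3$ remain simultaneously active throughout $[5.5, 9)$, if both are allocated balls these must be distinct, so $A_2^{(1)} \cap A_3^{(1)}$ and $A_2^{(2)} \cap A_3^{(2)}$ are impossible. Denote this common value by $\beta$.

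\textbf{Upper bound via modification.} For $\beta \le 1/6$ I follow the template of Lemma~\ref{lemma:player12}. Replace players $P_4, P_5, P_6$ in $\mathcal{I}$ by a single new player $P'$ with $(a' = 6,\, x' = 5/6,\, d' \text{ large})$. Feasibility under Eq.~\eqref{eq:ocrv_condition1} at $a' = 6$ requires $x' \le \min\{1,\, 2 - x_2 - x_3\} = 5/6$, which is met with equality (the rental of $P_1$ has just expired, so only $P_2$ and $P_3$ contribute). Because the prefix $P_1, P_2, P_3$ is identical in $\mathcal{I}$ and the modified instance $\mathcal{I}^{\star}$, the online algorithm generates the same random history on the two instances up to time $6$. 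At time $6^+$, ball $i$ is still held iff $A_2^{(i)} \cup A_3^{(i)}$, and exactly as in the proof of Lemma~\ref{lemma:player12}, expanding $(A_2^{(1)} \cup A_3^{(1)}) \cap (A_2^{(2)} \cup A_3^{(2)})$ and using $A_2^{(i)} \cap A_2^{(i)}$ $= A_3^{(i)} \cap A_3^{(i)} = \emptyset$ yields
\[
\pr[E_{a'}^{(1)} \cup E_{a'}^{(2)}] = 1 - \pr[A_2^{(1)} \cap A_3^{(2)}] - \pr[A_2^{(2)} \cap A_3^{(1)}] = 1 - \beta.
\]
Losslessness on $\mathcal{I}^{\star}$ forces $\pr[A_{P'}] \ge 5/6$, hence $\beta \le 1/6$.

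\textbf{Lower bound from $\mathcal{I}$.} For $\beta \ge 1/6$ I stay inside $\mathcal{I}$ and chain the losslessness conditions for $P_5$ and $P_6$. At time $14$ only $P_3$ and $P_5$ can still hold balls, so ``both balls held at $14$'' equals $A_3 \cap A_5$; the lossless requirement $\pr[A_6] \ge 5/6$ gives $\pr[A_3 \cap A_5] \le 1/6$, while the union bound $\pr[A_3 \cap A_5] \ge \pr[A_3] + \pr[A_5] - 1 = 1/6$ gives the reverse, so $\pr[A_3 \cap A_5] = 1/6$. At time $8$, ``both balls held'' equals ``at least two of $A_2, A_3, A_4$ occur'' (the three-way intersection is empty since only two balls exist), and losslessness for $P_5$ pins down $\pr[A_2 \cap A_3] + \pr[A_2 \cap A_4] + \pr[A_3 \cap A_4] = 1/2$ and \emph{forces $P_5$ to accept deterministically whenever a ball is free at time $8$}. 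Using this determinism, the capacity argument gives $A_3 \cap A_5 = A_1 \cap A_3 \cap \overline{A_4}$, so $\pr[A_1 \cap A_3] - \pr[A_3 \cap A_4] = 1/6$. Combining with $\pr[A_1 \cap A_3] + \pr[A_2 \cap A_3] = \pr[A_3] = 2/3$ and the pairwise identity yields $\pr[A_2 \cap A_4] = 0$, and decomposing $\pr[A_4] = 1/3$ along $A_3$ vs.\ $\overline{A_3}$ finally produces $\pr[A_1 \cap \overline{A_3} \cap A_4] = \beta - 1/6 \ge 0$, i.e.\ $\beta \ge 1/6$.

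\textbf{Anticipated obstacle.} The delicate step is the identity $A_3 \cap A_5 = A_1 \cap A_3 \cap \overline{A_4}$ rather than a mere containment: one direction follows from the capacity bound (if $A_2 \cap A_3$ or $A_3 \cap A_4$, both balls are already held at time $8$, ruling out $A_5$), but the reverse containment needs that $P_5$ must accept deterministically when any ball is free, which requires the two inequalities $\pr[A_5] \ge 1/2$ and $\pr[A_2 \cup A_3 \cup A_4] \le 1$ to be tight \emph{simultaneously}. Keeping careful track of which losslessness constraints hold with equality versus strict inequality is where the bookkeeping is most error-prone.
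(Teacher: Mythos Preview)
Your upper bound is essentially the paper's argument (the paper places the test player at $a'=7$ rather than $6$, which is immaterial). Your lower bound is correct but takes a very different and much longer route. The paper simply invokes Lemma~\ref{lemma:player12}: since $\pr[A_1\cap A_2]=0$ and $\pr[A_1]+\pr[A_2]\ge 1$, the events $A_1,A_2$ partition the probability space, whence $\pr[A_2\cap A_3]\ge\pr[A_3]-\pr[A_1]\ge \tfrac23-\tfrac12=\tfrac16$ in one line.

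Your chain through $P_4,P_5,P_6$ does work, but watch the bookkeeping: the identity you write as $A_3\cap A_5=A_1\cap A_3\cap\overline{A_4}$ should really be $A_3\cap A_5=\overline{A_2}\cap A_3\cap\overline{A_4}$, since the capacity constraint at time~$8$ involves only $A_2,A_3,A_4$. Every appearance of $A_1$ in your lower-bound argument (including the decomposition $\pr[A_1\cap A_3]+\pr[A_2\cap A_3]=\pr[A_3]$) is silently using $A_1=\overline{A_2}$, which is precisely Lemma~\ref{lemma:player12}; once you are using that lemma, the paper's two-line route is already available. If instead you replace every $A_1$ by $\overline{A_2}$, your chain becomes a self-contained proof of $\beta\ge\tfrac16$ that avoids Lemma~\ref{lemma:player12} entirely --- a genuine alternative that extracts the bound purely from the losslessness constraints of $P_5$ and $P_6$ inside $\mathcal I$ rather than from an auxiliary instance, at the cost of considerably more work.
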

\begin{proof}
First, let us prove that
\begin{align}
\label{eq:varaible-immposibility-proof-temp1}
\pr[A^{(1)}_{2} \cap A^{(2)}_{3}] + \pr[A^{(2)}_{2} \cap A^{(1)}_{3}]
\le \frac{1}{6}.
\end{align}
The proof is again by contradiction.  Assume that the above inequality does not hold.

Consider the instance $\mathcal{I}'$, in which the first three players are identical to those of instance $\mathcal{I}$, but the fourth
player $P'_{4}$ is characterised as
$
(a_{4}=7, d_{4}=10, x_{4}= \tfrac56).
$
It can be verified that the feasibility constraint in Eq.~\eqref{eq:ocr_conditions} is satisfied for every player. Because the first three arrivals in $\mathcal{I}$ and $\mathcal{I}'$ are identical, the online algorithm $\alg$ behaves the same on both instances up to arrival of $P'_{4}$. Consequently, when $P'_{4}$ arrives in $\mathcal{I}'$,
\begin{align*}
\pr[E^{(1)}_{7} \cup E^{(2)}_{7}]
  & = 1 - \pr[(A^{(1)}_{2} \cup A^{(1)}_{3}) \cap (A^{(2)}_{2} \cup A^{(2)}_{3})] \\[4pt]
  &= 1 - \pr[(A^{(1)}_{2} \cap A^{(2)}_{3}) \cup (A^{(1)}_{3} \cap A^{(2)}_{2})] \\[4pt]
  &= 1 - \pr[A^{(1)}_{2} \cap A^{(2)}_{3}] - \pr[A^{(2)}_{2} \cap A^{(1)}_{3}] < \frac{5}{6},
\end{align*}
where the last strict inequality is by the contradictory assumption to have $\pr[A^{(1)}_{2} \cap A^{(2)}_{3}] + \pr[A^{(2)}_{2} \cap A^{(1)}_{3}] > \frac{1}{6}$.
The allocation of $P_{1}$ does not influence the calculation above, because by $a_{4}=7$ any ball assigned to $P_{1}$ has already returned to the system.  Hence, at the arrival of player~$P'_4$ in $\mathcal{I}'$ the probability that either ball is still available,
$\pr[E^{(1)}_{7} \cup E^{(2)}_{7}]$, is strictly less than $\tfrac56$.  Therefore, \alg cannot allocate a ball to $P_{4}$ with the required target probability $\tfrac56$, contradicting the lossless property.  Thus we must have
\[
\pr[A^{(1)}_{2} \cap A^{(2)}_{3}] + \pr[A^{(2)}_{2} \cap A^{(1)}_{3}]
\le \frac{1}{6}.
\]

Next, let us prove that
\[
\pr[A^{(1)}_2 \cap A^{(2)}_3] + \pr[A^{(2)}_2 \cap A^{(1)}_3] \ge \frac{1}{6},
\]
after which we can conclude that the sum equals $\tfrac16$.

First, we compute the following probability:
\begin{align*}
\pr[(A^{(1)}_1 \cup A^{(2)}_1) \cup (A^{(1)}_2 \cup A^{(2)}_2)]
  &= \pr[(A^{(1)}_1 \cup A^{(2)}_1)] + \pr[(A^{(1)}_2 \cup A^{(2)}_2)] \\[-2pt]
  &\quad- \pr[(A^{(1)}_1 \cup A^{(2)}_1) \cap (A^{(1)}_2 \cup A^{(2)}_2)] \\[4pt]
  &= x_1 + x_2 - \pr[(A^{(1)}_1 \cap A^{(2)}_2) \cup (A^{(2)}_1 \cap A^{(1)}_2)] \\[4pt]
  &= 1 - \pr[(A^{(1)}_1 \cap A^{(2)}_2)] - \pr[(A^{(2)}_1 \cap A^{(1)}_2)] \\[4pt]
  &= 1,
\end{align*}
where the last equality follows from Lemma~\ref{lemma:player12}.  
By the same reasoning we have
\[
\pr[(A^{(1)}_1 \cup A^{(2)}_1) \cap (A^{(1)}_2 \cup A^{(2)}_2)] = 0 .
\]

Since $\pr[A^{(1)}_3 \cup A^{(2)}_3] = x_3 = \tfrac23$,  
$\pr[(A^{(1)}_1 \cup A^{(2)}_1)] = \tfrac12$,  
and
$\pr[(A^{(1)}_1 \cup A^{(2)}_1) \cup (A^{(1)}_2 \cup A^{(2)}_2)] = 1,$
we must have
\[
\pr[(A^{(1)}_3 \cup A^{(2)}_3) \cap (A^{(1)}_2 \cup A^{(2)}_2)] \ge \frac{1}{6}.
\]
Combining this with the earlier inequality establishes
\[
\pr[A^{(1)}_2 \cap A^{(2)}_3] + \pr[A^{(2)}_2 \cap A^{(1)}_3] = \frac{1}{6}.
\]
Thus, the proof is completed. 
\end{proof}

Next, consider the arrival of player $P_{4}$ in instance $\mathcal I$,
with parameters $a_{4}=6$, $x_{4}= \tfrac13$, and $d_{4}=7$.
\begin{lemma}
\label{lem:p5}
For \alg to be lossless, after the arrival of player $P_4$ we must have
\[
\pr[A^{(1)}_3 \cap A^{(2)}_4] + \pr[A^{(2)}_3 \cap A^{(1)}_4] = 0 .
\]
\end{lemma}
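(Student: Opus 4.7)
The plan is to mirror the contradiction argument used in Lemma~\ref{lemma:player12}: assume
$p := \pr[A^{(1)}_3 \cap A^{(2)}_4] + \pr[A^{(2)}_3 \cap A^{(1)}_4] > 0$,
and construct an auxiliary instance $\mathcal{I}''$ that agrees with $\mathcal{I}$ on the first four arrivals but appends a single new player on which a putatively lossless \alg must fail. Because \alg sees only past arrivals together with its private random tape, its joint distribution over $(A^{(i)}_n)_{n\le 4,\,i\in\{1,2\}}$ is identical on $\mathcal{I}$ and $\mathcal{I}''$, so the hypothesized positive quantity $p$ transfers verbatim to $\mathcal{I}''$.

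The choice of the auxiliary player is guided by a timing observation. I would pick a time $t\in(9,13)$: the balls rented to $P_1$ (returning at $a_1+d_1=6$) and to $P_2$ (returning at $a_2+d_2=9$) have both left the system, whereas the balls rented to $P_3$ and $P_4$, if any, are still in use since $a_3+d_3=14.5$ and $a_4+d_4\ge 13$. Moreover, because $P_3$'s ball is already occupied when $P_4$ arrives, if $P_4$ is served it must receive a different ball from $P_3$'s, so the hypothesized sum in fact equals $\pr[\text{both } P_3 \text{ and } P_4 \text{ are served}]$. Concretely, I take $\mathcal{I}'' = (P_1, P_2, P_3, P_4, P_5'')$ with $P_5''$ defined by $a_5''=10$, $x_5''=1$, and any admissible duration $d_5''$; the only prior requests still active at time~$10$ are $P_3$ and $P_4$, whose targets sum to $\tfrac{2}{3}+\tfrac{1}{3}=1$, so~\eqref{eq:ocrv_condition1} reduces to $x_5''\le k-1=1$, which holds with equality.

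To finish, I observe that on $\mathcal{I}''$ the event ``at least one ball is free at time~$10$'' is exactly the complement of ``both $P_3$ and $P_4$ are served,'' since balls at $t=10$ are occupied only through these two players. This free-ball event therefore has probability $1-p<1$, which prevents \alg from serving $P_5''$ with its required probability~$1$ and contradicts losslessness; hence $p=0$. The main subtlety I anticipate is pinning down $a_5''$: within the narrow window $(9,13)$, the occupancy from $\{P_3,P_4\}$ attains the tight sum $x_3+x_4=1$ without being inflated by any other active rental, which is what lets the feasibility check in~\eqref{eq:ocrv_condition1} and the contradiction both go through; an earlier $a_5''$ would add $P_2$'s target to the feasibility bound, while a later one risks $P_4$'s ball returning and diluting the contradiction.
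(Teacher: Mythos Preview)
Your proof is correct and follows exactly the same approach as the paper: assume the sum is positive, append a fifth player at time $10$ with target probability $1$, verify feasibility via~\eqref{eq:ocrv_condition1}, and derive the contradiction $\pr[E^{(1)}_{10}\cup E^{(2)}_{10}]<1$. Your write-up is in fact more careful than the paper's one-line proof, explicitly justifying why the sum equals the probability that both $P_3$ and $P_4$ are served and why only these two players' allocations are relevant at $t=10$; the only (harmless) slip is that the admissible window is $(9,14)$ rather than $(9,13)$, since $a_4+d_4=14$.
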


\begin{proof}
In case the above equality is not satisfied, a request $P_5$ could arrive at time $a_5 = 10$ with target probability $x_5 = 1$, and $\alg$ would not be lossless since $\pr[E^{(1)}_{10} \cup E^{(2)}_{10}] < 1$.
\end{proof}

Next, consider the arrival of player $P_5$ at time $a_5 = 8$, with target probability $x_5 = \tfrac12$ and $d_5 = 10$. Let us refer to the time right before the arrival of fifth player as $8^{-}$.

\begin{lemma}
\label{lem:p6}
After the arrival of player $P_5$,
\[
\pr[A^{(1)}_3 \cap A^{(2)}_5] + \pr[A^{(2)}_3 \cap A^{(1)}_5] \ge \frac12 .
\]
\end{lemma}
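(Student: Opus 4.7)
The plan is to pin down the joint allocation state of the two balls at time $8^{-}$ by combining the constraints imposed by Lemmas~\ref{lemma:player12}, \ref{lem:P3prob}, and \ref{lem:p5}, and then use the losslessness of $\alg$ applied to $P_{5}$ to force the allocation behaviour on the relevant event.

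\textbf{Step 1 (combinatorial consequences of the prior lemmas).} I will first show that $\pr[A_{3}\cap A_{4}] = 0$ and $\pr[A_{2}\cap A_{3}] = \tfrac{1}{6}$. For the first, note that at $a_{4}=6$ any ball held by $P_{3}$ is still occupied (its rental runs until $14.5$), so $P_{3}$ and $P_{4}$ cannot receive the same ball; Lemma~\ref{lem:p5} rules out different balls, so $\pr[A_{3}\cap A_{4}] = 0$. Combined with $\pr[A_{3}]+\pr[A_{4}] = \tfrac{2}{3}+\tfrac{1}{3}=1$, this yields $\pr[A_{3}\cup A_{4}] = 1$. For the second, at $a_{3}=5.5$ any ball held by $P_{2}$ is still occupied (until $9$), so $P_{2}$ and $P_{3}$ cannot share a ball; hence $\pr[A_{2}\cap A_{3}]$ equals the ``different-ball'' probability $\pr[A^{(1)}_{2}\cap A^{(2)}_{3}] + \pr[A^{(2)}_{2}\cap A^{(1)}_{3}] = \tfrac{1}{6}$ given by Lemma~\ref{lem:P3prob}.

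\textbf{Step 2 (state at $8^{-}$ and losslessness for $P_{5}$).} At time $8^{-}$ only the rentals of $P_{2}$, $P_{3}$, $P_{4}$ can still be active, since the ball assigned to $P_{1}$ (if any) was returned at time~$6$. Both balls are occupied iff $P_{2}$ and one of $\{P_{3}, P_{4}\}$ each hold a ball, i.e., iff $A_{2}\cap(A_{3}\cup A_{4})$ occurs. Because $\pr[A_{3}\cup A_{4}]=1$ by Step~1, this event has probability $\pr[A_{2}]=\tfrac{1}{2}$. Hence $\pr[A_{5}]\le \tfrac{1}{2}$, and losslessness forces $\pr[A_{5}] = \tfrac{1}{2}$. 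Consequently $P_{5}$ must receive a ball on every sample path in $\overline{A_{2}}$, and receives none on $A_{2}$; in particular, on $\overline{A_{2}}$ it is allocated the unique free ball.

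\textbf{Step 3 (conclusion).} Restrict to $\overline{A_{2}}\cap A_{3}$. On this event $P_{3}$ holds one ball, neither $P_{2}$ nor $P_{4}$ holds anything (the latter by disjointness from Step~1), and by Step~2 the player $P_{5}$ is allocated the unique free ball, which necessarily differs from $P_{3}$'s. Therefore
\[
\pr[A^{(1)}_{3}\cap A^{(2)}_{5}] + \pr[A^{(2)}_{3}\cap A^{(1)}_{5}]
\;\ge\; \pr[\overline{A_{2}}\cap A_{3}]
\;=\; \pr[A_{3}] - \pr[A_{2}\cap A_{3}]
\;=\; \tfrac{2}{3}-\tfrac{1}{6}
\;=\; \tfrac{1}{2}.
\]
The main obstacle I anticipate is the bookkeeping in Step~2: carefully justifying that the physical event ``both balls held at $8^{-}$'' coincides almost surely with the purely algebraic event $A_{2}\cap(A_{3}\cup A_{4})$, and that the disjointness identities established in Step~1 can be chained without losing probability mass on a measure-zero set.
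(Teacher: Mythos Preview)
Your proof is correct and follows essentially the same approach as the paper: both use $\pr[A_2\cap A_3]=\tfrac16$ and $\pr[A_3\cap A_4]=0$ to determine the occupancy state at time~$8^-$, deduce that $A_5=\overline{A_2}$ almost surely (the paper organizes this via a partition $C_1,C_2,C_3$ of $A_2\cup A_3$ and shows the ``at least one ball free'' event equals $C_1=\overline{A_2}\cap A_3$), and then conclude that on $\overline{A_2}\cap A_3$ player~$P_5$ must receive the ball not held by~$P_3$. Your shortcut of using $\pr[A_3\cup A_4]=1$ to collapse the ``both balls occupied'' event directly to $A_2$ is a slightly cleaner bookkeeping device than the paper's case analysis, but the underlying argument is identical.
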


\begin{proof}
From the previous two lemmas \ref{lem:P3prob} and \ref{lem:p5}, we have
\[
\pr[(A^{(1)}_2 \cup A^{(2)}_2)\cap(A^{(1)}_3 \cup A^{(2)}_3)] = \frac16,
\qquad
\pr[(A^{(1)}_3 \cup A^{(2)}_3)\cap(A^{(2)}_4 \cup A^{(1)}_4)] = 0.
\]

Let us define $ C_1, C_2 $, and $ C_3 $ as follows:
\begin{align*}
    & C_1 = \left(A^{(1)}_3 \cup A^{(2)}_3 \right) 
\cap \left(A^{(2)}_2 \cup A^{(1)}_2 \right)',\\
    & C_2 = \left(A^{(2)}_2 \cup A^{(1)}_2\right) \cap \left(A^{(1)}_3 \cup A^{(2)}_3 \right)',\\
    & C_3 = \left(A^{(1)}_3 \cup A^{(2)}_3\right) \cap \left(A^{(2)}_2 \cup A^{(1)}_2 \right).
\end{align*}
These events are pairwise disjoint and  
$\pr[C_1 \cup C_2 \cup C_3] = 1$.

\begin{claim}
$ \pr[E_{8^{-}}^{(1)} \cup E_{8^{-}}^{(2)} \mid C_2] = 0  $.
\end{claim}

\begin{proof}
From Lemma~\ref{lem:p5} we have
\[
\pr[(A^{(1)}_3 \cup A^{(2)}_3)\cap(A^{(2)}_4 \cup A^{(1)}_4)] = 0,
\] 
thus it follows that $\pr[C_1 \cap (A^{(2)}_4 \cup A^{(1)}_4)] = 0$ and  
$\pr[C_3 \cap (A^{(2)}_4 \cup A^{(1)}_4)] = 0$.  
Note that we have $\pr[A^{(2)}_4 \cup A^{(1)}_4] = \pr[C_2] = \tfrac13$, so
$A^{(2)}_4 \cup A^{(1)}_4 = C_2$. Next,
\begin{align*}
C_2 &= C_2 \cap C_2 \\[2pt]
    &= \bigl((A^{(2)}_4 \cup A^{(1)}_4)\cap( (A^{(2)}_2 \cup A^{(1)}_2) \cap (A^{(1)}_3 \cup A^{(2)}_3)')\bigr) \\[2pt]
    &= \bigl((A^{(2)}_4 \cup A^{(1)}_4)\cap(A^{(2)}_2 \cup A^{(1)}_2)\bigr)
       \cap\bigl((A^{(2)}_4 \cup A^{(1)}_4)\cap(A^{(1)}_3 \cup A^{(2)}_3)'\bigr) \\[2pt]
  &  = \bigl((A^{(2)}_4 \cup A^{(1)}_4)\cap(A^{(2)}_2 \cup A^{(1)}_2)\bigr)
       \cap(A^{(2)}_4 \cup A^{(1)}_4) \\
       & = \bigl((A^{(2)}_4 \cup A^{(1)}_4)\cap(A^{(2)}_2 \cup A^{(1)}_2)\bigr) \\
    &= (A^{(2)}_2 \cap A^{(1)}_4) \cup (A^{(1)}_2 \cap A^{(2)}_4),
\end{align*}
where the third equality uses $\pr[(A^{(1)}_3 \cup A^{(2)}_3) \cap (A^{(2)}_4 \cup A^{(1)}_4)] = 0$.
Based on the above equations, under the event $C_2$, either $A^{(2)}_2 \cap A^{(1)}_4$ or $A^{(1)}_2 \cap A^{(2)}_4$ occurs, so neither ball~1 nor ball~2 is available. Therefore
\[
\pr[E_{8^-}^{(1)} \cup E_{8^-}^{(2)} \mid C_2] = 0.
\]
This completes the proof. 
\end{proof}

Next, we can verify that
\[
\pr[E_{8^-}^{(1)} \cup E_{8^-}^{(2)} \mid C_3] = 0.
\]
In fact, $C_3 = (A^{(1)}_3 \cup A^{(2)}_3) \cap (A^{(2)}_4 \cup A^{(1)}_4)
         = (A^{(1)}_3 \cap A^{(2)}_4) \cup (A^{(2)}_3 \cap A^{(1)}_4)$.
Thus, under $C_3$, either $(A^{(1)}_3 \cap A^{(2)}_4)$ or $(A^{(2)}_3 \cap A^{(1)}_4)$ occurs, so neither ball~1 nor ball~2 is available, and
$\pr[E_4^{(1)} \cup E_4^{(2)} \mid C_3] = 0$.

Since at the arrival of $ P_5 $ we have
$
\sum_{i \in [4]} x_i \cdot \mathbb{I}_{\{a_i + d_i \ge a_5\}} = \frac32,
$
with probability $\frac12$ one of the two balls is available, and
\begin{align*}
\frac12 &= \pr[E_{8^-}^{(1)} \cup E_{8^-}^{(2)}] \\
   & = \pr[E_{8^-}^{(1)} \cup E_{8^-}^{(2)} \mid C_1]
    + \pr[E_{8^-}^{(1)} \cup E_{8^-}^{(2)} \mid C_2]
    + \pr[E_{8^-}^{(1)} \cup E_{8^-}^{(2)} \mid C_3]
 \\ & = \pr[E_{8^-}^{(1)} \cup E_{8^-}^{(2)} \mid C_1]
  = \pr[C_1].
\end{align*}
Thus, we have $E_{8^-}^{(1)} \cup E_{8^-}^{(2)} = C_1$.

At the arrival of $ P_5 $, if either ball is available, the other ball is already allocated to $ P_3 $, since
$C_1 = (A^{(1)}_3 \cup A^{(2)}_3) \cap (A^{(2)}_2 \cup A^{(1)}_2)' $.
Therefore, a lossless \alg\ must allocate one of the two balls to $ P_5 $ with probability $\tfrac12$, and
\begin{align*}
    & \pr[(A^{(1)}_5 \cup A^{(2)}_5) \cap (A^{(1)}_3 \cup A^{(2)}_3)] \\
=\ & \pr[(A^{(1)}_3 \cap A^{(2)}_5) \cup (A^{(2)}_3 \cup A^{(1)}_5)] \\
=\ & \pr[A^{(1)}_3 \cap A^{(2)}_5]
  + \pr[A^{(2)}_3 \cup A^{(1)}_5] \\
=\ & \frac12.
\end{align*}
This completes the proof. 
\end{proof}

Now consider the arrival of the sixth player with $(a_6 = 13.5, x_6 = \tfrac56, d_6 = 10)$.  It can be verified that
\begin{align*}
\pr[E_{6}^{(1)} \cup E_{6}^{(2)}]
  &= 1 - \pr[(A_{3}^{(1)} \cup A_{5}^{(1)}) \cap (A_{3}^{(2)} \cup A_{5}^{(2)})] \\[2pt]
  &= 1 - \pr[(A_{3}^{(1)} \cap A_{5}^{(2)}) \cup (A_{3}^{(2)} \cap A_{5}^{(1)})] \\[2pt]
  &= \frac12,
\end{align*}
where the last equality follows from Lemma~\ref{lem:p6}.  Thus, at the arrival of player $ P_6 $, with probability equal to $\tfrac12$ either ball is available, and the algorithm cannot allocate a ball with target probability $\tfrac56$ losslessly to player $ P_6 $. We thus complete the proof.

\section{Proof of Theorem \ref{prop:kRentalVD-Dynamic-CR}}
\label{apx:prop:kRental:VD:CR}
To establish the $\alpha$-competitiveness of Algorithm~\ref{alg:kRentalVD-dynamic}, we apply the LP-free certificate approach developed \cite{goyal2020}. We define a system of linear constraints that acts as a certificate of $\alpha$-competitiveness when feasible. 
The variables in this system include $\{ \{u_n, \lambda_n^{(i)}\}_{n \in [N]}, \theta_i\}_{i \in [k]}$. Conceptually, we employ a pricing scheme inspired by the economic interpretation of the randomized primal-dual framework in~\cite{eden2021economics}. In this setting, $u_n$ denotes the utility of request $n$, while $\lambda_n^{(i)}$ represents the price of unit $i$ at the time of request $n$'s arrival. The variable $\theta_i$ corresponds to the revenue obtained by the algorithm from selling the $i$-th unit of the resource.
 The constraints are given by:
\begin{align}
\label{eq:kRentalVD:cons:1}
& \sum_{n \in [N]} u_n + \sum_{i \in [k]} \theta_i \le \alpha \cdot \alg(I),\\ 
\label{eq:kRentalVD:cons:2}
& \theta_i + \sum_{n \in P_i} u_n \ge \sum_{n \in P_i} d_n = \opt_i,
\qquad \forall i \in [k],
\end{align}
where $\alg(I)$ denotes the expected performance of the algorithm on instance~$I$. For each $i$, the set $P_i$ consists of the requests to whom the optimal clairvoyant algorithm allocates the $i$-th unit of the resource, and $\opt_i = \sum_{n \in P_i} d_n$.
Clearly, if the above constraints hold, then:
\begin{align*}
   \opt & = \sum_{i=1}^{k} \opt_i  \\
   & \le \sum_{i=1}^{k} \Bigl(\theta_i + \sum_{n \in P_i} u_n \Bigr) \\
   & \le \sum_{i \in [k]} \theta_i + \sum_{n \in [N]} u_n \\
   & \le \alpha \cdot \alg(I),
\end{align*}
where in the derivation above, the first inequality follows from Eq.~\eqref{eq:kRentalVD:cons:1}, 
the second inequality holds because the sets $\{P_i\}$ are pairwise disjoint, 
and the final inequality follows from Eq.~\eqref{eq:kRentalVD:cons:2}. 
Hence, if there is a solution to this linear system, 
it certifies the $\alpha$-competitiveness of Algorithm~\ref{alg:kRentalVD-dynamic}. 
In what follows, we first describe how to assign values to the system’s variables, 
and then show that these assignments satisfy all the inequalities of the system.

Based on the performance of Algorithm~\ref{alg:kRentalVD-dynamic}, we now specify how to assign values to the variables in our LP-free certificate. Initially, all variables are set to zero. After the arrival of the final request in instance~$I$, we update the variables as follows. For each rental request~$n$, let $\hat x_n$ be the fractional allocation variable determined by Algorithm~\ref{alg:kRentalVD-dynamic-fracitonal}. We then perform the following updates:
\begin{align}
\label{eq:kRentalVD:dual:update:1}
u_n &= \frac{\alpha}{3} d_n \hat x_n,\\ \label{eq:kRentalVD:dual:update:2}
\lambda_{j}^{(i^*_n)} &= \lambda_{j}^{(i^*_n)} + \frac{\alpha}{3} \bigl(a_{j+1} - a_{j}\bigr) \hat x_n, \quad \forall j, n \le j < \nu^*_n, a_{j} < a_n + d_n,\\
\label{eq:kRentalVD:dual:update:3}
\lambda_{\nu^*_n}^{(i^*_n)} & = \lambda_{\nu^*_n}^{(i^*_n)} + \frac{\alpha}{3} \Bigl(2 d_n - \bigl(a_{\nu^*_n} - a_n\bigr)\Bigr) \hat x_n,
\end{align}
where $\nu^*_n = \max \{j \ge n | a_{j} < a_n + d_n\}$. 
After updating the variables above for all the $N$ requests, let us set the value of variables $\theta_{i} = \sum_{n \in [N]} \lambda_n^{(i^*_n)}$ for all $i \in [k]$. 

\begin{remark}
The intuition behind the design of the variables in the system is as follows. We implement a pricing scheme in which the variable $u_n$ denotes the utility accrued by request $n$, and the set of variables $\lambda^{(i)}_j$ represents the price of unit $i$ at the arrival time of request $j$. Depending on the fractional allocation $\hat x_n$, which is allocated from unit $i^*_n$ to request $n$, the utility $u_n$ increases accordingly. Furthermore, we raise the price of unit $i^*_n$ at the arrival time of every request $j < \nu^*_n$ whose request interval overlaps with that of request $n$. This price adjustment is intended for any request arriving after request $n$ whose rental duration finishes before that of request $n$ and who might be rejected because a fraction of unit $i^*_n$ has already been allocated to request $n$. In addition, we also increase the price of unit $i^*_n$ for request $\nu^*_n$, which is the last arriving request in the horizon whose request interval overlaps with that of request $n$. This adjustment accounts for requests who might arrive very close to the end of request $n$'s request interval and whose requests may be rejected because the algorithm has already allocated a fraction of unit $i^*_n$ to request $n$.
\end{remark}

\paragraph{First constraint of the system.}
Based on the construction of the variables detailed above, we now verify the  constraint Eq.~\eqref{eq:kRentalVD:cons:1}.

Let $\Delta_{n}^{\alg}$ denote the increase in the expected objective value of Algorithm~\ref{alg:kRentalVD-dynamic} after processing request $n$. We show that $\Delta_{n}^{\alg} = d_n \cdot \hat x_n$. Algorithm~\ref{alg:kRentalVD-dynamic} first computes $y_{n}^{(i)}$, the probabilistic utilization level of item $i$, which sums the probabilities that unit $i$ has been allocated to previous requests. Consequently, at the arrival of request $n$, resource $i$ is available with probability $1 - y_{n}^{(i)}$. Therefore,
\begin{align*}
   \Delta_{n}^{\alg}
   &= d_{n} \cdot \mathbb{P}[\text{unit }i_n^*\text{ is available}] \cdot \mathbb{P}[\text{the unit is allocated}] \\
   &= d_n  \bigl(1 - y_{n}^{(i_n^*)}\bigr) \cdot \frac{ \hat x_n}{1 - y_{n}^{(i_n^*)}} \\
   &= d_n \hat x_n.
\end{align*}
Summing over all requests $n \in [N]$, $\alg(I) = \sum_{n \in [N]} \Delta_{n}^{\alg} = \sum_{n \in [N]} d_n \hat x_n$.

Next, let $\Delta_{n}^{\textsc{rhs}}$ denote the increase in the right-hand side of Eq.~\eqref{eq:kRentalVD:cons:1} after updating the variables of the system according to Eqs.~\eqref{eq:kRentalVD:dual:update:1}--\eqref{eq:kRentalVD:dual:update:3} for the $n$-th rental request. It follows directly that
\begin{align*}
   \Delta_{n}^{\textsc{rhs}} = \alpha  d_n \hat x_n.
\end{align*}
Summing these increments for all requests $n \in [N]$ gives
\begin{align*}
   \sum_{n \in [N]} u_n + \sum_{i \in [k]} \theta_{i} = \sum_{n \in [N]} \Delta_{n}^{\textsc{rhs}} = \sum_{n \in [N]} \alpha \cdot d_n \cdot \hat x_n.
\end{align*}
Combining this with our previous result on the expected increase in the objective of the algorithm, we conclude that the first constraint, Eq.~\eqref{eq:kRentalVD:cons:1}, is satisfied.

\paragraph{Second set of constraints.}
The remaining set of constraints are as follows:
\begin{align*}
    \theta_{i} + \sum_{n \in P_{i}} u_n \ge \alpha \cdot \sum_{n \in P_i}d_n = \alpha \cdot \opt_{i},
\end{align*}
where the value of $\theta_{i}$ can be lower bounded by:
\begin{align*}
    \theta_{i}  = \sum_{n \in [N]} \lambda_n^{(i)} \ge 
    \sum_{n \in P_{i}} \sum_{j = n}^{N}  \lambda_{j}^{(i)} \indicator{a_n + d_n > a_{j}}.
\end{align*}
The inequality holds because, for each request $n \in P_i$, the set 
$\{j \in [N] \mid a_{n} + d_n > a_{j}\}$ cannot intersect with any other request’s set.
Otherwise, the optimal offline algorithm would have allocated unit $i$ to two requests at the same time, which is infeasible.
Hence, for the left-hand side of Eq.~\eqref{eq:kRentalVD:cons:2}, we have
\begin{align*}
    \theta_i + \sum_{n \in P_i} u_n  \ge  \sum_{n \in P_i} 
    \sum_{j = n}^{N}  \lambda_{j}^{(i)} \indicator{a_n + d_n > a_{j}} + \sum_{n \in P_i} u_n \ge \sum_{n \in P_i}  \Bigl(\sum_{j = n}^{N}  \lambda_{j}^{(i)} \indicator{a_n + d_n > a_{j}} + u_n \Bigr).
\end{align*}
Thus, it suffices to show that for each request $n \in P_i$,
\begin{align*}
   \sum_{j = n}^{N} \lambda_{j}^{(i)} \indicator{a_n + d_n > a_{j}} + u_n \ge \alpha \cdot d_n.
\end{align*}
In what follows, we first derive a lower bound on the left-hand side of the above inequality by separately bounding
$\sum_{j = n}^{N} \lambda_{j}^{(i)} \cdot \indicator{a_n + d_n > a_{j}}$ and $u_n$. Then, invoking the constraints placed on the price function $\phi$ in Theorem~\ref{prop:kRentalVD-Dynamic-CR}, we conclude that this bound is at least $\alpha \cdot d_n$, thus ensuring that Eq.~\eqref{eq:kRentalVD:cons:2} is satisfied.

Let us define
\begin{align*}
    \Lambda_{n}^{(i)} = \sum_{j = n}^{N} \lambda_{j}^{(i)} \indicator{a_n + d_n > a_{j}}.
\end{align*}

Let us further define the set of requests $B_n^{(i)}$ by
\begin{align}
\label{subsec:proof:kRental:VD:lambada:term}
    B_n^{(i)}  = \bigl\{ 1 \leq j < n \mid a_{j} + d_{j} > a_n, \hat x_j  > 0, i^*_{j} = i \bigr\}.
\end{align}
In other words, $B_n^{(i)}$ is the set of all requests who arrive before request~$n$, request rental intervals overlapping the interval of request~$n$, and have a non-zero probability of receiving unit~$i$.

Furthermore, define $C_n^{(i)} \subseteq B_n^{(i)}$ by
\begin{align*}
    C_n^{(i)} = \bigl\{ j \in B_n^{(i)} \mid a_{j} + d_{j} < a_n + d_n  \bigr\}.
\end{align*}
Next, let the set of requests $\{c_l\}_{l \in [L]}$ be defined recursively such that
\begin{align*}
    c_1  & =  \argmin_{j \in C_n^{(i)}}\{a_{j} + d_{j}\},\\
    c_{l} &= \argmin_{\substack{j \in C_n^{(i)} \\ a_{j} > a_{c_{l-1}}}} \{a_{j} + d_{j}\}, \quad\quad 1 < l \le L.
\end{align*}
The request $c_L \in C_n^{(i)}$ is the one for which
$\{j \in C_n^{(i)} \mid a_{j} > a_{c_L}\}$ is empty.
Using the sequence of requests $\{c_l\}_{l \in [L]}$, we partition $C_n^{(i)}$ into $L$ sets:
\begin{align*}
    \mathcal{C}_{l} =  \bigl\{ j \in C_n^{(i)} \mid a_{c_{l-1}} \le a_{j} < a_{c_{l}} \bigr\}, \quad \forall l \in [L].
\end{align*}

Furthermore, define the set of values $\{z_l\}_{l \in [L]}$ by
\begin{align*}
    z_l  = \sum_{\substack{j \in C_n^{(i)} \\ a_{j} < a_{c_{l}}}} \hat  x_j.
\end{align*}

Let $\Delta_{\Lambda_{n}^{(i)}}^{(j)}$ denote the increase in $\Lambda_{n}^{(i)}$ after processing the rental request of request~$j$ and updating the variables according to Eqs.~\eqref{eq:kRentalVD:dual:update:1}--\eqref{eq:kRentalVD:dual:update:3}. From the way we perform the updates, it follows that
\begin{align*}
    \sum_{j \in C_n^{(i)}} \Delta_{\Lambda_{n}^{(i)}}^{(j)} 
    &\ge
    \sum_{l=1}^{L} 
    \sum_{j \in \mathcal{C}_l} 
    \frac{\alpha}{3}\hat x_j
    \Bigl( 
       d_{j} 
       + 
       \bigl(a_{j} + d_{j} - a_{n}\bigr) 
    \Bigr)
    \\[6pt]
    &\ge
    \sum_{l=1}^{L} 
    \sum_{j \in \mathcal{C}_l} 
    \frac{\alpha}{3}\hat  x_j
    \Bigl(
       d_{c_l} 
       + 
       \bigl(a_{c_l} + d_{c_l} - a_{n}\bigr)
    \Bigr)
    \\[6pt]
    &\ge
    \sum_{l=1}^{L} 
    \frac{\alpha}{3}\bigl(z_l - z_{l-1}\bigr)
    \Bigl(
       2 
       d_{c_l} 
       + 
       a_{c_l} 
       - 
       a_{c_1} 
    \Bigr),
\end{align*}
where the first inequality follows from the dual updates in 
Eqs.~\eqref{eq:kRentalVD:dual:update:2}--\eqref{eq:kRentalVD:dual:update:3}, 
the second inequality relies on the definition of $c_l$ and $\mathcal{C}_l$ (under which $d_{j} \le d_{c_l}$ and $a_{j} + d_{j} \ge a_{c_l} + d_{c_l}$ for each $j \in \mathcal{C}_l$), 
and the last inequality follows from the definition of the values $\{z_l\}_{l \in [L]}$, taking $z_0 = 0$.

Define the sequence $\{z'_l\}_{l \in [L]}$ recursively as follows:
\begin{align*}
    & z'_L = z_L,\\[6pt]
    & z'_l =
    \begin{cases}
        z_l, 
         \hspace{+2cm} \text{if } 
          \bigl(d_{c_{l+1}} + a_{c_{l+1}}\bigr) - \bigl(d_{c_l} + a_{c_l}\bigr) \ge \phi\bigl(z'_{l+1}\bigr) - \phi\bigl(z_{l}\bigr),
        \\[6pt]
        \phi^{*}\!\Bigl(
          \phi\bigl(z'_{l+1}\bigr) - \bigl[
            \bigl(d_{c_{l+1}} + a_{c_{l+1}}\bigr) - \bigl(d_{c_l} + a_{c_l}\bigr)
          \bigr] \Bigr),
        \quad \text{otherwise},
    \end{cases}
    \quad \forall l \in [L-1].
\end{align*}
Then, we have the following lemma.

\begin{lemma}
\label{lem:kRental:VD:z':frac}
From the definition of the sequence $\{z'_l\}_{l \in [L]}$, for each $l \in \{1, \dots, L\}$, we have
\begin{align*}
    d_{c_l} \ge \phi\bigl(z'_l\bigr).
\end{align*}

Moreover, 
\begin{align}
    & \sum_{l=1}^{L} \frac{\alpha}{3}\bigl(z_l - z_{l-1}\bigr) \Bigl( 2 d_{c_l} + a_{c_l} - a_{c_1} - d_{c_1} \Bigr) \nonumber \\
\ge \ & \frac{\alpha}{3} \cdot z'_1 \phi(z'_1) + \sum_{l=2}^{L}   
    \bigl(z'_{l}- z'_{l-1}\bigr)  \bigl(d_{c_{l}} + a_{c_l} + d_{c_l} - a_{c_1} - d_{c_1} \bigr). \label{eq:kRental:VD:random:inequality}
\end{align}
\end{lemma}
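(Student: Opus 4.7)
The plan is to prove the pointwise bound $d_{c_l} \ge \phi(z'_l)$ by reverse induction on $l$ using the first-order optimality of the pseudo-utility maximization in Eq.~\eqref{eq:kRental:VD:fractional}, and then to derive the summation inequality by a telescoping argument that combines this pointwise bound with the defining equation of $z'_l$.

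For the pointwise bound, at the base case $l = L$ we have $z'_L = z_L$, so it suffices to show $d_{c_L} \ge \phi(z_L)$. Since $\hat x_{c_L} > 0$, the first-order condition for \eqref{eq:kRental:VD:fractional} at the arrival of $c_L$ gives $d_{c_L} \ge \phi(y^{(i)}_{c_L} + \hat x_{c_L})$. The key observation is that every $j \in C_n^{(i)}$ with $a_j < a_{c_L}$ satisfies $i^*_j = i$ and $a_j + d_j > a_n \ge a_{c_L}$ (the last step uses $c_L \in B_n^{(i)}$ and $c_L < n$), hence contributes to the probabilistic utilization $y^{(i)}_{c_L}$. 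Therefore $y^{(i)}_{c_L} + \hat x_{c_L} \ge z_L$, and the monotonicity of $\phi$ closes the base case. For the inductive step, assume $d_{c_{l+1}} \ge \phi(z'_{l+1})$. In the first branch of the recursion, $z'_l = z_l$, and the same pseudo-utility argument applied at $c_l$ yields $d_{c_l} \ge \phi(z_l) = \phi(z'_l)$. In the second branch, the definition of $\phi^*$ gives $\phi(z'_l) \le \phi(z'_{l+1}) - [(d_{c_{l+1}} + a_{c_{l+1}}) - (d_{c_l} + a_{c_l})]$, and combining with the inductive hypothesis and $a_{c_{l+1}} \ge a_{c_l}$ yields $\phi(z'_l) \le d_{c_l} - (a_{c_{l+1}} - a_{c_l}) \le d_{c_l}$.

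For the summation inequality, I would divide out the common factor $\alpha/3$, set $B_l := 2 d_{c_l} + a_{c_l} - a_{c_1} - d_{c_1}$ (so that $B_1 = d_{c_1}$), and let $\delta_l := z_l - z'_l \ge 0$, noting that $\delta_L = 0$ by construction. An Abel summation rewrites the gap between the two sides as
\begin{align*}
z_1 d_{c_1} - z'_1 \phi(z'_1) - \delta_1 B_2 + \sum_{l=2}^{L-1} \delta_l\,(B_l - B_{l+1}),
\end{align*}
with $B_l - B_{l+1} = -(p_{l+1} - p_l) - (d_{c_{l+1}} - d_{c_l})$ and $p_l := a_{c_l} + d_{c_l}$ nondecreasing in $l$. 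The boundary piece rearranges as $z_1(d_{c_1} - B_2) + z'_1(B_2 - \phi(z'_1))$, whose second summand is nonnegative by the pointwise bound at $l = 1$. For the telescoping tail, every index that contributes $\delta_l > 0$ must come from the second branch of the recursion, where the defining equation forces $\phi(z'_{l+1}) - \phi(z'_l) = p_{l+1} - p_l$. Substituting this identity to re-express $B_l - B_{l+1}$ in terms of $\phi$-increments and then recollecting with the boundary term cancels the $\phi$-contributions against the $(p_{l+1}-p_l)$-jumps, leaving only residuals that are controlled by monotonicity of $\phi$ and the pointwise bound.

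The main obstacle is this second step. Because the sign of $B_l - B_{l+1}$ is indeterminate, the inequality cannot be obtained by a term-by-term comparison; one really has to exploit the fact that the recursion for $z'_l$ is engineered so that any correction $\delta_l > 0$ carries a $\phi$-increment of exactly the size required to balance the jump in $a_{c_l} + d_{c_l}$. Once this alignment is made explicit, the telescoping identity collapses and only nonnegative residuals controlled by the pointwise bound from the first part remain.
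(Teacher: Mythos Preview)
Your treatment of Part~I is correct and matches the paper's: both split on the two branches of the recursion and, in the ``else'' branch, combine the identity $\phi(z'_l)=\phi(z'_{l+1})-(p_{l+1}-p_l)$ with $d_{c_{l+1}}\ge\phi(z'_{l+1})$ and $a_{c_{l+1}}>a_{c_l}$ to obtain $\phi(z'_l)\le d_{c_l}$. Your reverse induction on~$l$ is marginally cleaner than the paper's induction on~$L$, which achieves the same effect by reapplying the hypothesis to the modified sequence $\{z_1,\dots,z_{L-2},z'_{L-1}\}$. One slip: the ``else'' branch fires only when $\phi(z'_{l+1})-(p_{l+1}-p_l)>\phi(z_l)$, so $z'_l\ge z_l$ always and your $\delta_l=z_l-z'_l$ is $\le 0$, not $\ge 0$; your Abel identity is unaffected.

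Part~II, however, is only an outline, and the outline does not close. Writing $\Delta_l=z'_l-z_l\ge 0$ and $B_l=2d_{c_l}+a_{c_l}-a_{c_1}-d_{c_1}$, the gap you must show nonnegative is $z_1 d_{c_1}-z'_1\phi(z'_1)+\Delta_1 B_2+\sum_{l=2}^{L-1}\Delta_l(B_{l+1}-B_l)$. Your claim that $z'_1(B_2-\phi(z'_1))\ge 0$ ``by the pointwise bound at $l=1$'' is misattributed: that bound gives $B_1=d_{c_1}\ge\phi(z'_1)$, not $B_2\ge\phi(z'_1)$. Even after repairing this via the recursion at $l=1$ and the bound at $l=2$, the companion term $z_1(d_{c_1}-B_2)$ and each tail term $\Delta_l(B_{l+1}-B_l)$ can be negative, and your closing sentence (``the telescoping identity collapses'') is an assertion, not an argument. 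The paper proceeds differently, by induction on~$L$: it applies the hypothesis to $\{z_1,\dots,z_{L-2},z'_{L-1}\}$, adds the $L$-th term, and then argues that the resulting expression is dominated by the true left-hand side. That last step reduces precisely to $(z'_{L-1}-z_{L-1})(B_L-B_{L-1})\ge 0$, hence to $B_L\ge B_{L-1}$ under the case assumption $z'_{L-1}>z_{L-1}$; the paper asserts this without justification, and since $B_{l+1}-B_l=(d_{c_{l+1}}-d_{c_l})+(p_{l+1}-p_l)$ has indeterminate sign (e.g.\ $c_{L-1}$ with $(a,d)=(0,10)$ and $c_L$ with $(a,d)=(5,6)$ give $p_L-p_{L-1}=1>0$ but $B_L-B_{L-1}=-3$), the assertion does not follow from the stated hypotheses. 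In short, the obstacle you flag is real, and neither your sketch nor the paper's inductive step resolves it as written.
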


\begin{proof}
We prove the two parts of this lemma separately.

\textbf{Part-I (Proof of $d_{c_l} \ge \phi(z'_l)$).} We show $d_{c_l} \ge \phi\bigl(z'_l\bigr)$ by induction on $L$. For the base case $L = 1$:
\begin{align*}
   z'_1 = z_1,
\end{align*}
and by the definition of $z_1$, the probabilistic utilization level of item $i$ at the arrival of request $c_1$ is at least $z_1$. Since $x_{c_1} \neq 0$, it follows that
\begin{align*}
   d_{c_1} \ge \phi\bigl(z_1\bigr) 
   = \phi\bigl(z'_1\bigr).
\end{align*}

Assume the statement holds for any number of requests in the set $\{c_{l}\}_{l \in [L]}$ up to $M-1$. We prove it for $L = M$. By the same argument as in the base case, we have 
\begin{align*}
   d_{c_M} \ge \phi\bigl(z_M\bigr) 
   = \phi\bigl(z'_M\bigr).
\end{align*}
If $z'_{M-1} = z_{M-1}$, then by the definition of $z_{M-1}$, we immediately get
\begin{align*}
   \phi\bigl(z'_{M-1}\bigr) \le d_{c_{M-1}}.
\end{align*}
Otherwise, suppose
\begin{align*}
   \bigl(d_{c_{M}} + a_{c_{M}}\bigr) 
   - 
   \bigl(d_{c_{M-1}} + a_{c_{M-1}}\bigr) 
   <
   \phi\bigl(z'_{M}\bigr) 
   -  
   \phi\bigl(z_{M-1}\bigr).
\end{align*}
From the definition of $z'_{M-1}$, we have
\begin{align*}
   \phi\bigl(z'_{M-1}\bigr) 
   = 
   \phi\bigl(z'_M\bigr) 
   - 
   \Bigl[
      \bigl(d_{c_{M}} + a_{c_{M}}\bigr) 
      - 
      \bigl(d_{c_{M-1}} + a_{c_{M-1}}\bigr)
   \Bigr].
\end{align*}
Since $\phi\bigl(z'_M\bigr) \le d_{c_M}$ and $a_{c_{M-1}} < a_{c_{M}}$ (by the definition of $c_l$), it follows that
\begin{align*}
   \phi\bigl(z'_{M-1}\bigr)
   \le
   \bigl[\phi(z'_M) - d_{c_M}\bigr]
   +
   \bigl[a_{c_{M-1}} - a_{c_M}\bigr]
   +
   d_{c_{M-1}}
   \le
   d_{c_{M-1}}.
\end{align*}
For the remaining inequalities, for each $l \in \{1,\dots,M-2\}$, the induction hypothesis applies to the set of values 
$\{z_1,\dots,z_{M-2}\} \cup \{z'_{M-1}\}$, ensuring that 
$\phi\bigl(z'_l\bigr) \le d_{c_l}$ for all $l \in \{1,\dots,M-1\}.$

\medskip

\textbf{Part-II (Proof of Eq.\eqref{eq:kRental:VD:random:inequality})}. We again use induction on $L$ to prove
\begin{align*}
   & \sum_{l=1}^{L} 
   \frac{\alpha}{3}\bigl(z_l - z_{l-1}\bigr)
   \Bigl(
      2d_{c_l} + a_{c_l} - a_{c_1} - d_{c_1}
   \Bigr) \\
   \ge\ & 
   \frac{\alpha}{3} z'_1 \phi(z'_1) 
   + 
   \sum_{l=2}^{L}   
   \bigl(z'_{l}- z'_{l-1}\bigr) 
   \Bigl(
       d_{c_{l}} + a_{c_l}+d_{c_l} - a_{c_1} - d_{c_1}
   \Bigr).
\end{align*}
For $L=1$, the statement is trivial since $z_1 = z'_1$. Assume it holds for all sequences $\{z'_l\}$ of length $L-1$. We prove it for $L$.  

From the definition of $z'_{L-1}$, we have  $z'_{L-1} \ge z_{L-1}$. Thus, we will have the following two cases:

\textit{Case 1:} $z'_{L-1} > z_{L-1}$.  
 By the induction hypothesis, for the sequence 
$\{z_1,\dots,z_{L-2}\} \cup \{z'_{L-1}\}$, we have
\begin{align*}
    & \sum_{l=1}^{L-2} 
    \frac{\alpha}{3}\bigl(z_l - z_{l-1}\bigr)
    \Bigl(
       2d_{c_l} + a_{c_l} - a_{c_1}
    \Bigr)
    +
    \frac{\alpha}{3}\bigl(z'_{L-1} - z_{L-2}\bigr)
    \Bigl(
       2d_{c_{L-1}} + a_{c_{L-1}} - a_{c_1} - d_{c_1}
    \Bigr)  \\
    \ge &
    \frac{\alpha}{3}z'_1 d_{c_1}
    +
    \sum_{l=2}^{L-1} 
    \bigl(z'_l - z'_{l-1}\bigr)
    \Bigl(
       2d_{c_l} + a_{c_l} - a_{c_1} - d_{c_1}
    \Bigr).
\end{align*}
Next, we add the term $\tfrac{\alpha}{3}\bigl(z'_{L-1} - z_{L-2}\bigr)\bigl(2d_{c_{L-1}} + a_{c_{L-1}} - a_n \bigr)$ to both sides of above inequality.
  \begin{align*}
        & \sum_{l=1}^{L-2} 
    \frac{\alpha}{3}\bigl(z_l - z_{l-1}\bigr)
    \Bigl(2 d_{c_l} + a_{c_l} - a_{c_1} \Bigr) + \frac{\alpha}{3}\bigl(z'_{L-1} - z_{L-2}\bigr)
    \Bigl(
       2 d_{c_l} + a_{c_l} - a_{c_1} - d_{c_1}
    \Bigr) + \\
    & \qquad \frac{\alpha}{3}\bigl(z_L - z'_{L-1}\bigr)
    \Bigl(
       2 d_{c_l} + a_{c_l} - a_{c_1} - d_{c_1}
    \Bigr) \nonumber\\
    \ge & \frac{\alpha}{3} \cdot z'_1 \cdot  d_{c_1} + \sum_{l=2}^{L}   (z'_{l}- z'_{l-1}) \cdot  \left ( 2 d_{c_{l}} + a_{c_l} - a_{c_1} - d_{c_1} \right).
\end{align*}  
Next, we will upper-bound the left-hand-size of the above inequality as follows:
\begin{align*}
    & \sum_{l=1}^{L-2} 
    \frac{\alpha}{3}\bigl(z_l - z_{l-1}\bigr)
    \Bigl(2 d_{c_l} + a_{c_l} - a_{c_1} - d_{c_1} \Bigr) + \frac{\alpha}{3}\bigl(z'_{L-1} - z_{L-2}\bigr)
    \Bigl(
       2 d_{c_{L-1}} + a_{c_{L-1}} - a_{c_1} - d_{c_1}
    \Bigr) +\\
    & \qquad \frac{\alpha}{3}\bigl(z_L - z'_{L-1}\bigr)
    \Bigl(
       2 d_{c_{L}} + a_{c_{L}} - a_{c_1} - d_{c_1}
    \Bigr))  \\
    \leq & \sum_{l=1}^{L-2} 
    \frac{\alpha}{3}\bigl(z_l - z_{l-1}\bigr)
    \Bigl(2 d_{c_l} + a_{c_l} - a_{c_1} - d_{c_1} \Bigr) + \frac{\alpha}{3}\bigl(z_{L-1} - z_{L-2}\bigr)
    \Bigl(
       2 d_{c_{L-1}} + a_{c_{L-1}} - a_{c_1} - d_{c_1}
    \Bigr) +  \\
    & \qquad \frac{\alpha}{3}\bigl(z_L - z_{L-1}\bigr)
    \Bigl(
       2 d_{c_L} + a_{c_L} - a_{c_1} - d_{c_1}
    \Bigr)) \\
    = & \sum_{l=1}^{L} 
    \frac{\alpha}{3}\bigl(z_l - z_{l-1}\bigr)
    \Bigl(2 d_{c_l} + a_{c_l} - a_{c_1} - d_{c_1} \Bigr),
\end{align*}
where the inequality follows since $\Bigl(
       2 d_{c_{L-1}} + a_{c_{L-1}} - a_{c_1} - d_{c_1}
    \Bigr) \leq \Bigl(
       2 d_{c_L} + a_{c_L} - a_{c_1} - d_{c_1}
    \Bigr)$.
    Thus in this case the inequality in Eq.~\eqref{eq:kRental:VD:random:inequality} follows.

\textit{Case 2:} $z'_{L-1} = z_{L-1}$. In this scenario, the inequality follows by the same reasoning as previous steps.

This completes the proof of Eq. 
\eqref{eq:kRental:VD:random:inequality}. Thus, we complete the proof of the lemma.
\end{proof}

Returning to the point where we left off, we have:
\begin{align*}
    \sum_{j \in C_n^{(i)}} \Delta_{\Lambda_{n}^{(i)}}^{(j)} 
    &\ge
    \sum_{l=1}^{L} 
    \frac{\alpha}{3}\bigl(z_l - z_{l-1}\bigr)
    \bigl(
       d_{c_{l}} + a_{c_l} + d_{c_l} - a_{c_1} - d_{c_1}
    \bigr)
    \\[6pt]
    &\ge
    \frac{\alpha}{3}
    \Bigl[
       z'_1\phi(z'_1) 
       +
       \sum_{l=2}^{L} 
          \bigl(z'_{l} - z'_{l-1}\bigr)
          \bigl(
             d_{c_{l}} + a_{c_l} + d_{c_l} - a_{c_1} - d_{c_1}
          \bigr)
    \Bigr]
    \\[6pt]
    &\ge
    \frac{\alpha}{3}
    \Bigl[
       z'_1\phi(z'_1) 
       +
       \sum_{l=2}^{L} 
          \bigl(z'_{l} - z'_{l-1}\bigr)
          \bigl(
             2\phi(z'_l) - \phi(z'_1)
          \bigr)
    \Bigr]
    \\[6pt]
    &\ge
    \frac{\alpha}{3}
    \Bigl[
       z'_1\phi(z'_1) 
       +
       \int_{\eta = z'_1}^{z'_L}
          \bigl(2\phi(\eta) - \phi(z'_1)\bigr)
       d\eta
    \Bigr]
    \\[6pt]
    &=
    \frac{2\alpha}{3}z'_1\phi(z'_1) 
    -
    \frac{\alpha}{3}z'_L\phi(z'_1) 
    +
    \int_{z'_1}^{z'_L}
       \frac{2\alpha}{3}\phi(\eta)
    d\eta,
\end{align*}
where the second and third inequalities follow from 
Lemma~\ref{lem:kRental:VD:z':frac}, 
and the fourth inequality holds because $\phi$ is an increasing function. 

Next, we upper-bound 
$\sum_{j \in B_n^{(i)} \setminus C_n^{(i)}} \Delta_{\Lambda_{n}^{(i)}}^{(j)}$
as follows:
\begin{align*}
    \sum_{j \in B_n^{(i)} \setminus C_n^{(i)}} 
    \Delta_{\Lambda_{n}^{(i)}}^{(j)} 
    \ge 
    \frac{\alpha}{3}d_n
    \max \Bigl\{0,y_n^{(i^*_n)} - z'_L\Bigr\}.
\end{align*}
The above inequality holds because, for each request $j$ in 
$B_n^{(i)} \setminus C_n^{(i)}$, we have $a_{j} + d_{j} > a_n + d_n$ by the definitions of 
$B_n^{(i)}$ and $C_n^{(i)}$. Then, from 
Eqs.~\eqref{eq:kRentalVD:dual:update:2}--\eqref{eq:kRentalVD:dual:update:3}, 
for each such $j$, $\Delta_{\Lambda_{n}^{(i)}}^{(j)}$ is at least 
$\tfrac{\alpha}{3}d_n \hat x_j$. Furthermore,
\begin{align*}
    \sum_{j \in B_n^{(i)} \setminus C_n^{(i)}}  \hat x_j 
    \ge 
    \max \Bigl\{0,y_n^{(i^*_n)} - z'_L\Bigr\}.
\end{align*}

We move on to lower-bound the term $u_n$. From the update rule in Eq.~\eqref{eq:kRentalVD:dual:update:1}, for each request~$n$ we have:
\begin{align*}
    u_n 
    &\ge  
    \frac{\alpha}{3}d_n \hat x_n
    \\[4pt]
    &\ge
    \frac{\alpha}{3}d_n
    \max\Bigl\{0,\phi^{*}\bigl(d_n\bigr)-y_n^{(i^*_n)}\Bigr\},
\end{align*}
where the second inequality follows from the way Algorithm~\ref{alg:kRentalVD-dynamic} sets the value of $\hat x_n$ in Eq.~\eqref{eq:kRental:VD:fractional}.

Combining bounds obtained for the LHS of Eq.~\eqref{eq:kRentalVD:cons:2}, we obtain
\begin{align*}
    u_n 
    +
    \sum_{j = n}^{N}
        \lambda_{j}^{(i)} \indicator{a_n + d_n > a_{j}}
    &\ge
    \frac{\alpha}{3}d_n
    \max\Bigl\{0,\phi^{*}\bigl(d_n\bigr)-y_n^{(i^*_n)}\Bigr\}
    +
    \sum_{j \in B_n^{(i)}} 
       \Delta_{\Lambda_{n}^{(i)}}^{(j)}
    \\[4pt]
    &=
    \frac{\alpha}{3}d_n
    \max\Bigl\{0,\phi^{*}\bigl(d_n\bigr)-y_n^{(i^*_n)}\Bigr\}
    +
    \sum_{j \in C_n^{(i)}} 
        \Delta_{\Lambda_{n}^{(i)}}^{(j)}
    +
    \sum_{j \in B_n^{(i)} \setminus C_n^{(i)}} 
        \Delta_{\Lambda_{n}^{(i)}}^{(j)}
    \\[4pt]
    &\ge
    \frac{\alpha}{3}d_n
    \max\Bigl\{0,\phi^{*}\bigl(d_n\bigr)-y_n^{(i^*_n)}\Bigr\}
    +
    \frac{2\alpha}{3}z'_1\phi\bigl(z'_1\bigr)
    -
    \frac{\alpha}{3}z'_L\phi\bigl(z'_1\bigr) \\
    & \qquad +
    \int_{z'_1}^{z'_L}
        \frac{2\alpha}{3}\phi(\eta)d\eta
    +
    \frac{\alpha}{3}d_n
    \max\Bigl\{0,y_n^{(i^*_n)} - z'_L\Bigr\}
    \\
    &=
    \frac{2\alpha}{3}z'_1\phi\bigl(z'_1\bigr)
    -
    \frac{\alpha}{3}z'_L\phi\bigl(z'_1\bigr)
    +
    \int_{z'_1}^{z'_L}
       \frac{2\alpha}{3}\phi(\eta)d\eta \\
    & \qquad 
    +
    \frac{\alpha}{3}d_n
    \max\Bigl\{0,y_n^{(i^*_n)} - z'_L\Bigr\}
    +
    \frac{\alpha}{3}d_n
    \max\Bigl\{0,\phi^{*}\bigl(d_n\bigr)-y_n^{(i^*_n)}\Bigr\}. \\
    & \ge 
        \frac{2\alpha}{3}z'_1\phi\bigl(z'_1\bigr)
    -
    \frac{\alpha}{3}z'_L\phi\bigl(z'_1\bigr)
    +
    \int_{z'_1}^{z'_L}
       \frac{2\alpha}{3}\phi(\eta)d\eta \\
    &\qquad +
    \frac{\alpha}{3}d_n
    \max\Bigl\{0,\phi^{*}\bigl(d_n\bigr)- z'_L\Bigr\}\\
    & \ge d_n,
\end{align*}
where the last inequality follows from the lemma given below. Consequently, we conclude that the second set of constraints in Eq.~\eqref{eq:kRentalVD:cons:2} is satisfied by the design of system variables in Eqs.~\eqref{eq:kRentalVD:dual:update:1}--\eqref{eq:kRentalVD:dual:update:3}. Therefore, \textbf{if the increasing pricing function $\phi$ satisfies the system of constraints in Theorem~\ref{prop:kRentalVD-Dynamic-CR}, it establishes the $\alpha$-competitiveness of Algorithm~\ref{alg:kRentalVD-dynamic}. Thus, to complete the proof of Theorem~\ref{prop:kRentalVD-Dynamic-CR}, it suffices to prove the following lemma.}

\begin{lemma}
If the $\phi$ function satisfy the constraints in Eqs.~\eqref{eq:kRental:VD:price:design:1}, 
then 
    \begin{align*}
    & \frac{2\alpha}{3}z'_1\phi\bigl(z'_1\bigr)
    -
    \frac{\alpha}{3}z'_L\phi\bigl(z'_1\bigr)
    +
    \int_{z'_1}^{z'_L}
       \frac{2\alpha}{3}\phi(\eta)d\eta 
    +
    \frac{\alpha}{3}d_n
    \max\Bigl\{0,\phi^{*}\bigl(d_n\bigr)-z'_L\Bigr\}  \ge  d_n,
    \\
    & 
    \hspace{+5cm}
    \forall d_{n} \in [d_{\min},d_{\max}],  
    z'_1 \in \bigl(0,z'_L\bigr], z'_L \in \bigl(0,\phi^{*}(d_{min})\bigr].
    \end{align*}
\end{lemma}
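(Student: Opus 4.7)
The plan is to reduce the inequality to the first constraint in Eq.~\eqref{eq:kRental:VD:price:design:1} by minimizing the left-hand side over $z'_1$. First, I would observe that because $\phi$ is increasing, the function $\phi^{*}$ is non-decreasing, so $d_n\ge d_{\min}$ together with $z'_L\le\phi^{*}(d_{\min})$ forces $z'_L\le\phi^{*}(d_n)$; hence $\max\{0,\phi^{*}(d_n)-z'_L\}=\phi^{*}(d_n)-z'_L$. Rewriting the first two terms using $\tfrac{2\alpha}{3}z'_1\phi(z'_1)-\tfrac{\alpha}{3}z'_L\phi(z'_1)=\tfrac{\alpha}{3}\phi(z'_1)(2z'_1-z'_L)$, the left-hand side becomes
\begin{align*}
f(z'_1,z'_L) \;=\; \tfrac{\alpha}{3}\phi(z'_1)\bigl(2z'_1-z'_L\bigr) \;+\; \int_{z'_1}^{z'_L}\tfrac{2\alpha}{3}\phi(\eta)\,d\eta \;+\; \tfrac{\alpha}{3}d_n\bigl(\phi^{*}(d_n)-z'_L\bigr).
\end{align*}

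Next, with $z'_L$ fixed, I would show that $z'_1\mapsto f(z'_1,z'_L)$ attains its minimum over $(0,z'_L]$ at $z'_1=z'_L/2$. Since $\phi$ is assumed only to be increasing (not necessarily differentiable), I would avoid derivatives and argue directly by splitting into two cases. If $z'_1\le z'_L/2$, then monotonicity of $\phi$ gives $\int_{z'_1}^{z'_L/2}\phi(\eta)\,d\eta\ge \phi(z'_1)(z'_L/2-z'_1)$, which after multiplication by $\tfrac{2\alpha}{3}$ and rearrangement yields $f(z'_1,z'_L)\ge f(z'_L/2,z'_L)$. Symmetrically, if $z'_1\ge z'_L/2$, then $\int_{z'_L/2}^{z'_1}\phi(\eta)\,d\eta\le \phi(z'_1)(z'_1-z'_L/2)$, again yielding $f(z'_1,z'_L)\ge f(z'_L/2,z'_L)$.

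Finally, I would set $y_1:=z'_L/2$, which lies in $(0,\phi^{*}(d_n)/2]$ since $z'_L\le\phi^{*}(d_n)$. At $z'_1=z'_L/2$ the factor $(2z'_1-z'_L)$ vanishes, so
\begin{align*}
f(z'_L/2,z'_L) \;=\; \int_{y_1}^{2y_1}\tfrac{2\alpha}{3}\phi(\eta)\,d\eta \;+\; \tfrac{\alpha}{3}d_n\bigl(\phi^{*}(d_n)-2y_1\bigr) \;\ge\; d_n,
\end{align*}
where the final inequality is precisely the first constraint of Eq.~\eqref{eq:kRental:VD:price:design:1}. Combining with the minimization in the previous step completes the proof.

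The main obstacle is the monotonicity-based reduction to the midpoint $z'_1=z'_L/2$, since the argument must work without assuming differentiability of $\phi$; the two-case comparison above circumvents this cleanly by matching a linear lower (resp.\ upper) bound on the integral against $\tfrac{\alpha}{3}\phi(z'_1)|2z'_1-z'_L|$. It is worth noting that only the first of the two inequalities in Eq.~\eqref{eq:kRental:VD:price:design:1} is invoked here; the second one would be needed if the minimum were attained at the boundary $z'_1=z'_L$ (corresponding to $y_2=z'_L$), but the interior minimizer $z'_L/2$ always dominates under the stated range of $z'_L$.
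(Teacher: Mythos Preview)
Your proof is correct and follows the same overarching strategy as the paper---minimize the left-hand side over $z'_1$ with $z'_L$ fixed and reduce to the design constraint at the minimizer---but your execution is cleaner in two respects. First, the paper differentiates in $z'_1$ (implicitly assuming $\phi'$ exists) to locate the critical point $z'_L/2$, then separately evaluates the two boundary values $z'_1\to 0$ and $z'_1=z'_L$; it ends up taking $\min\{\mathfrak{L}(z'_L/2,z'_L,d_n),\mathfrak{L}(z'_L,z'_L,d_n)\}$ and must invoke \emph{both} inequalities in Eq.~\eqref{eq:kRental:VD:price:design:1}. Your two-case comparison via the monotone bounds $\int_{z'_1}^{z'_L/2}\phi\ge\phi(z'_1)(z'_L/2-z'_1)$ and $\int_{z'_L/2}^{z'_1}\phi\le\phi(z'_1)(z'_1-z'_L/2)$ shows directly that $z'_L/2$ is a \emph{global} minimizer on $(0,z'_L]$ (in particular it dominates $z'_1=z'_L$), so only the first constraint is needed. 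Second, your argument requires only that $\phi$ is increasing, matching the hypothesis in Theorem~\ref{prop:kRentalVD-Dynamic-CR}, whereas the paper's derivative step tacitly asks for more regularity. The trade-off is that the paper's route makes explicit why the second inequality in Eq.~\eqref{eq:kRental:VD:price:design:1} is natural to impose (it is exactly the boundary value at $z'_1=z'_L$), even though, as you observe, it is not strictly needed for this particular lemma under the stated range $z'_L\le\phi^{*}(d_{\min})$.
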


\begin{proof}
Let us denote the left-hand side of the above inequality by 
\begin{align*}
\mathfrak{L}(z'_1,z'_L,d_n),
\end{align*}
which is a function of $z'_1$, $z'_L$, and $d_n$. We are going to prove that
\begin{align*}
\mathfrak{L}(z'_1,z'_L,d_n)
\ge
\min\Bigl\{
\mathfrak{L}\Bigl(\frac{z'_L}{2},z'_L,d_n\Bigr),\quad
\mathfrak{L}(z'_L,z'_L,d_n)
\Bigr\}.
\end{align*}

\emph{Step 1: Find the critical point in the interior.} Fix $z'_L$ and $d_n$ and differentiate $L$ with respect to $z'_1$. A straightforward calculation shows
\begin{align*}
\frac{\partial \mathfrak{L}}{\partial z'_1}
=
\frac{\alpha}{3}\phi'(z'_1)\Bigl(2z'_1 - z'_L\Bigr).
\end{align*}
Since $\phi'(z'_1) \neq 0$, setting this derivative to zero forces 
\begin{align*}
z'_1 = \frac{z'_L}{2}.
\end{align*}

A direct substitution $z'_1 = \frac{z'_L}{2}$ simplifies the first two terms and yields
\begin{align*}
\mathfrak{L}\Bigl(\frac{z'_L}{2},z'_L,d_n\Bigr) = \int_{z'_L/2}^{z'_L} \frac{2\alpha}{3}\phi(\eta)d\eta + \frac{\alpha}{3}d_n\max\{0,\phi^{*}(d_n)-z'_L\}.
\end{align*}

\emph{Step 2: Evaluate $\mathfrak{L}$ on the boundary plane where $z'_1 = z'_L$.}  When $z'_1 = z'_L$, the integral term vanishes. One obtains
\begin{align*}
\mathfrak{L}(z'_L,z'_L,d_n) = \frac{\alpha}{3}z'_L\phi(z'_L) + \frac{\alpha}{3}d_n\max\{0,\phi^{*}(d_n)-z'_L\}.
\end{align*}
This can be written as 
\begin{align*}
\frac{\alpha}{3}d_n\phi^*(d_n) - \frac{\alpha}{3}z'_L\Bigl(d_n-\phi(z'_L)\Bigr)
\end{align*}
for values of $z'_L \le \phi^{*}(d_n)$.

\emph{Step 3: Evaluate $\mathfrak{L}$ on the boundary plane where $z'_1$ converges to zero.} It can be verified that $\mathfrak{L}(z'
_1,z'_L,d_n)$ is lower-bounded by $L(z'_L/2,z'_L,d_n)$, as $z'_1$ converges to zero, as follows:
\begin{align*}
     \mathfrak{L}(0,z'_L,d_n)  & =
    \int_{0}^{z'_L}
       \frac{2\alpha}{3}\phi(\eta)d\eta 
    +
    \frac{\alpha}{3}d_n
    \max\Bigl\{0,\phi^{*}\bigl(d_n\bigr)-z'_L\Bigr\}  -     \frac{\alpha}{3}z'_L\phi\bigl(z'_1\bigr) \\
    & \ge  \int_{z'_L/2}^{z'_L}
       \frac{2\alpha}{3}\phi(\eta)d\eta 
    +
    \frac{\alpha}{3}d_n
    \max\Bigl\{0,\phi^{*}\bigl(d_n\bigr)-z'_L\Bigr\}  = L(z'_L/2,z'_L,d_n),
\end{align*}
where the inequality follows from the fact that $\phi$ function is increasing.

Putting together the above results, in the above three steps, since a continuous function on a closed interval attains its minimum either at a critical point or on the boundary, for each fixed $z'_L$ and $d_n$ we have
\begin{align*}
\mathfrak{L}(z'_1,z'_L,d_n) & \ge \min\Bigl\{
\mathfrak{L}\Bigl(\frac{z'_L}{2},z'_L,d_n\Bigr),\quad
\mathfrak{L}(z'_L,z'_L,d_n)
\Bigr\} \\
&=\min\Biggl\{
\frac{\alpha}{3}d_n\phi^*(d_n)
-\frac{\alpha}{3}z'_L\Bigl(d_n-\phi(z'_L)\Bigr), \\
&\quad\quad\quad \int_{z'_L/2}^{z'_L} \frac{2\alpha}{3}\phi(\eta)d\eta
+\frac{\alpha}{3}d_n\max\{0,\phi^{*}(d_n)-z'_L\}
\Biggr\} \\
&\ge d_n,
\end{align*}
where the last inequality follows from the constraints in Eqs.~\eqref{eq:kRental:VD:price:design:1}. 
This completes the proof.
\end{proof}

\section{Proof of Order-Optimality of Algorithm~\ref{alg:kRentalVD-dynamic}}

\label{apx:order:optimal:alg:variable}
In the following corollary, we present an analytical design of $\phi$ that satisfies the system of constraints in Theorem~\ref{prop:kRentalVD-Dynamic-CR}. Although this design is not the optimal design that satisfies the set of constraints in Theorem~\ref{prop:kRentalVD-Dynamic-CR}, it is sufficient to obtain the order optimality of Algorithm~\ref{alg:kRentalVD-dynamic}.

\begin{corollary}
\label{corollary:kRental:VD:design}
\algorithmVD is 
$3\cdot \Big(1+\ln\big(\frac{d_{\max}}{d_{\min}} \big) \Big)$-competitive for the \problemkRentalVD problem if the pricing function $\phi$ is designed as:
\begin{align*}
\phi(y) = d_{\min} \cdot \exp\Bigl(\Bigl[1+\ln\Bigl(\frac{d_{\max}}{d_{\min}}\Bigr)\Bigr]\cdot y - 1\Bigr), \quad  \forall y \in [0,1].
\end{align*}
\end{corollary}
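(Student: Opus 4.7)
The plan is to verify that the proposed pricing function satisfies the sufficient conditions of Theorem~\ref{prop:kRentalVD-Dynamic-CR} with $\alpha = 3\beta$, where $\beta := 1 + \ln(d_{\max}/d_{\min})$. Monotonicity of $\phi$ is immediate since $\phi(y) = d_{\min}\,e^{\beta y - 1}$ is an exponential with positive growth rate $\beta$. I then compute $\phi^*(d_n)$ by inverting $\phi$: for any $d_n \in [d_{\min}, d_{\max}]$, solving $\phi(y) = d_n$ yields $\phi^*(d_n) = (1 + \ln(d_n/d_{\min}))/\beta$, which ranges from $1/\beta$ at $d_n = d_{\min}$ to $1$ at $d_n = d_{\max}$.

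The key structural identity I would exploit throughout is $\phi'(y) = \beta\,\phi(y)$, giving the closed form $\int_a^b \phi(\eta)\,d\eta = (\phi(b) - \phi(a))/\beta$. Substituting into the first inequality of Eq.~\eqref{eq:kRental:VD:price:design:1} with $\alpha = 3\beta$, the LHS simplifies to $2[\phi(2y_1) - \phi(y_1)] + d_n(1 + \ln(d_n/d_{\min})) - 2\beta d_n y_1$, so the requirement collapses to showing $2[\phi(2y_1) - \phi(y_1)] + d_n \ln(d_n/d_{\min}) \ge 2\beta d_n y_1$ for all $y_1 \in [0, \phi^*(d_n)/2]$. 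I would differentiate the LHS in $y_1$, locate the interior critical point characterized by $2\phi(2y_1) - \phi(y_1) = d_n$ (using $\phi' = \beta\phi$), and then verify the bound at this critical point as well as the endpoints $y_1 = 0$ and $y_1 = \phi^*(d_n)/2$.

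An analogous simplification handles the second inequality: using $\alpha = 3\beta$ together with $\beta d_n \phi^*(d_n) = d_n(1 + \ln(d_n/d_{\min}))$, the requirement reduces to $d_n \ln(d_n/d_{\min}) \ge \beta y_2\,(d_n - \phi(y_2))$ for all $y_2 \in [0, \phi^*(d_n)]$. Differentiating in $y_2$ yields the critical-point condition $d_n = \phi(y_2)(1 + \beta y_2)$, after which I would substitute back to obtain a single-variable inequality and compare against $d_n \ln(d_n/d_{\min})$ using the monotonicity of $\phi$ to relate $\phi(y_2)$ to $\phi(0) = d_{\min}/e$ and $\phi(\phi^*(d_n)) = d_n$.

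The main obstacle will be bounding the subtracted terms $2\beta d_n y_1$ and $\beta y_2\,(d_n - \phi(y_2))$ tightly enough against the logarithmic slack $d_n \ln(d_n/d_{\min})$, particularly in the regime $d_n \to d_{\min}$, where the slack becomes vanishingly small. The factor of $3$ multiplying $\beta$ in the claimed competitive ratio is precisely what provides the room to absorb these worst-case losses for the simple exponential $\phi$; a more refined pricing function (such as the numerically derived one in Appendix~\ref{apx:numerical:phi:computation}) would shave down this constant but leave the logarithmic order unchanged. Together with the matching lower bound of $1 + \ln(d_{\max}/d_{\min})$ mentioned in the main text, verifying the two inequalities suffices to establish the claimed order-optimal competitive ratio $3(1 + \ln(d_{\max}/d_{\min}))$ for \algorithmVD.
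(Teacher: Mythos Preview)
Your overall strategy---verify the two inequalities of Theorem~\ref{prop:kRentalVD-Dynamic-CR} for the specific exponential $\phi$ with $\alpha = 3\beta$, where $\beta = 1+\ln(d_{\max}/d_{\min})$---matches the paper's, and your preliminary computations (the formula for $\phi^*(d_n)$, the identity $\phi'=\beta\phi$, the closed-form integrals) are correct. Where you diverge is in how to eliminate the $y_1$- and $y_2$-dependence. The paper bypasses your differentiate-and-locate-critical-points program entirely: it uses only monotonicity of $\phi$ to collapse \emph{both} inequalities to the single condition $\tfrac{\alpha}{3}\int_0^{\phi^*(d_n)}\phi(\eta)\,d\eta \ge d_n$. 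For the first inequality, $\int_{y_1}^{2y_1}2\phi \ge \int_0^{2y_1}\phi$ because $\phi$ is increasing on two intervals of equal length, and then $\int_0^{2y_1}\phi + d_n\bigl(\phi^*(d_n)-2y_1\bigr) \ge \int_0^{\phi^*(d_n)}\phi$ because $\phi \le d_n$ on $[2y_1,\phi^*(d_n)]$. For the second, one rewrites the left side as $\tfrac{\alpha}{3}\bigl[y_2\phi(y_2) + d_n(\phi^*(d_n) - y_2)\bigr]$ and lower-bounds each summand by the corresponding segment of $\int_0^{\phi^*(d_n)}\phi$ using $\phi\le\phi(y_2)$ on $[0,y_2]$ and $\phi\le d_n$ on $[y_2,\phi^*(d_n)]$. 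Thus your transcendental critical-point equations $2\phi(2y_1)-\phi(y_1)=d_n$ and $d_n = \phi(y_2)(1+\beta y_2)$ never need to be solved, and the ``main obstacle'' you anticipate near $d_n \to d_{\min}$ is sidestepped in one stroke. Your calculus route is a legitimate alternative, but the paper's monotonicity reduction is shorter and avoids all of the endpoint and interior-minimum casework you would otherwise have to carry out.
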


\begin{proof}
We now prove that for the given design of the $\phi$ function, the constraints in Theorem~\ref{prop:kRentalVD-Dynamic-CR} (see Eqs.~\eqref{eq:kRental:VD:price:design:1}) are satisfied when
\begin{align*}
\alpha = 3\cdot\Bigl(1+\ln\frac{d_{\max}}{d_{\min}}\Bigr).
\end{align*}

\medskip

\noindent
\textbf{Proof for the first inequality of Eq.~\eqref{eq:kRental:VD:price:design:1}.} 
The left-hand side of Eq.~\eqref{eq:kRental:VD:price:design:1} can be lower-bounded as follows:
\begin{align*}
    &\int_{\eta=y_1}^{2y_1} \frac{2\alpha}{3}\phi(\eta)d\eta 
    + \frac{\alpha}{3}d_n\Bigl(\phi^*(d_n) - 2y_1\Bigr) \\[1mm]
    \ge & \int_{\eta=0}^{2y_1} \frac{\alpha}{3}\phi(\eta)d\eta 
    + \frac{\alpha}{3}d_n\Bigl(\phi^*(d_n) - 2y_1\Bigr) \\[1mm]
    \ge & \int_{\eta=0}^{\phi^*(d_n)} \frac{\alpha}{3}\phi(\eta)d\eta \\[1mm]
    \ge & d_n,
\end{align*}
where the first and second inequality follows because $\phi$ is an increasing function.
The final inequality is a consequence of the specific design of the $\phi$ function stated in the corollary.

\medskip
\noindent
\textbf{Proof for the second inequality of Eq.~\eqref{eq:kRental:VD:price:design:1}.}
The left-hand side of the second inequality of Eq.~\eqref{eq:kRental:VD:price:design:1} can be lower-bounded as:
\begin{align*}
    &\frac{\alpha}{3}d_n\phi^*(d_n)
    - \frac{\alpha}{3}y_2\Bigl(d_n - \phi(y_2)\Bigr) \\[1mm]
    = & \frac{\alpha}{3}y_2\phi(y_2)
    + \frac{\alpha}{3}d_n\Bigl(\phi^*(d_n) - y_2\Bigr) \\[1mm]
    \ge & \int_{\eta=0}^{\phi^*(d_n)} \frac{\alpha}{3}\phi(\eta)d\eta \\[1mm]
    \ge & d_n,
\end{align*}
where the equality is obtained by rearranging the terms.
The first inequality follows from the increasing nature of $\phi$ and the corresponding definitions of $y_2$ and $\phi^*(d_n)$.
The final inequality is ensured by the design of the $\phi$ function provided in the corollary.

Since the proposed design for the $\phi$ function satisfies both sets of constraints, it follows that Algorithm~\ref{alg:kRentalVD-dynamic} obtains the competitive ratio of $ 3\cdot\Bigl(1+\ln\frac{d_{\max}}{d_{\min}}\Bigr)$ for the given design in the corollary.   
\end{proof}

Next, we prove the following lower bound on the competitiveness of any online algorithm for the \problemkRentalVD problem. 

\begin{proposition}[Lower Bound of \problemkRentalVD] 
Under Assumption \ref{ass:bounded-duration-continuous}, no online algorithm, deterministic or randomized, can obtain a competitive ratio better than $ 1+\ln\left(\frac{d_{\max}}{d_{\min}}\right)$ for the \problemkRentalVD problem.
\end{proposition}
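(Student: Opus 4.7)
The plan is to adapt the hard-instance construction used for Proposition~\ref{lemma:kRental:FD:lb} in the fixed-duration case, exploiting the fact that in \problemkRentalVD each request's valuation equals its requested rental duration~$d_n$. I would first fix a large integer $m$, set $\Delta_d = (d_{\max}-d_{\min})/(m-1)$, and let $d_i = d_{\min} + (i-1)\Delta_d$ for $i \in [m]$. Writing $\cala(k,d)$ for a batch of $k$ identical requests each of duration $d$, I would define the instance
\begin{align*}
I_{d_i} \;:=\; \cala(k,d_1) \oplus \cala(k,d_2) \oplus \cdots \oplus \cala(k,d_i),
\end{align*}
whose requests all arrive within a tiny window $[0,\epsilon]$ with $0 < \epsilon < d_{\min}$. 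The collection $\{I_{d_i}\}_{i\in[m]}$ serves as the family of hard instances.

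The next step is to observe that because $\epsilon < d_{\min}$, any unit rented out to any request in $I_{d_i}$ is still occupied when the last request arrives; hence, reusability is effectively disabled on this family. Consequently, on instance $I_{d_i}$ the clairvoyant optimum is $\opt(I_{d_i}) = k \cdot d_i$, achieved by accepting only the final batch $\cala(k,d_i)$. This reduction makes the problem on $\{I_{d_i}\}_{i\in[m]}$ isomorphic (after substituting $d_n$ for $v_n$) to the $k$-selection lower-bound setting analyzed in~\cite{sun2024static,jazi2025posted}, with duration range $[d_{\min}, d_{\max}]$ playing the role of the valuation range.

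I would then apply the representative-function / Yao-style argument of \cite{sun2024static} to this instance family: any randomized online algorithm induces an acceptance profile $\{q_i\}_{i \in [m]}$ describing the expected number of units committed to each batch, constrained by $\sum_{j \leq i} q_j \leq k$ for every $i$ (capacity constraint, consolidated by the $\epsilon < d_{\min}$ reduction). The algorithm's expected value on $I_{d_i}$ equals $\sum_{j \leq i} q_j \, d_j$, while $\opt(I_{d_i}) = k d_i$. Minimizing the competitive ratio over all feasible $\{q_i\}$ subject to being at least as good as $1/\alpha$ on every $I_{d_i}$ yields the standard ODE/integral inequality $\alpha \geq 1 + \ln(d_{\max}/d_{\min})$ once we pass to the limit $m \to \infty$ (so $\Delta_d \to 0$). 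This exactly mirrors the endgame of the \problemkRentalFD proof.

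The main obstacle, and the only genuinely new ingredient relative to the FD argument, is justifying the reduction to the non-reusable setting rigorously: one must argue that the adversary's freedom to stop the sequence at any batch $\cala(k,d_i)$ is preserved in the variable-duration model, and that the $\epsilon$-window construction is admissible under Assumption~\ref{ass:bounded-duration-continuous}. Both hold because $\epsilon$ can be chosen arbitrarily small (it is unconstrained by the assumption) and because the duration of each request in $\cala(k,d_i)$ is exactly $d_i \in [d_{\min},d_{\max}]$. Once this reduction is in place, the lower bound $1 + \ln(d_{\max}/d_{\min})$ follows from the same representative-function calculus used in Proposition~\ref{lemma:kRental:FD:lb}, so no new analytic machinery is required.
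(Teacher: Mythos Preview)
Your proposal is correct and essentially identical to the paper's own proof: the paper constructs the same hard-instance family $\{I_{d_i}\}_{i\in[m]}$ with batches $\cala(k,d_i)$ of increasing durations arriving within a window $[0,\epsilon]$ with $\epsilon < d_{\min}$, and then defers to the representative-function argument of \cite{sun2024static} for the $1+\ln(d_{\max}/d_{\min})$ bound. Your additional explicit remarks about why $\epsilon < d_{\min}$ disables reusability and why the reduction to $k$-selection is valid are helpful elaborations, but the approach is the same.
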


\begin{proof}
Following the same proof structure as in \cite{sun2024static} for the online selection problem, we can prove the lower bound $1+\ln\!\bigl(\tfrac{d_{\max}}{d_{\min}}\bigr)$ on the competitiveness of every online algorithm for the \problemkRentalVD problem. Based on Corollary~\ref{corollary:kRental:VD:design}, we can then claim that Algorithm~\ref{alg:kRentalVD-dynamic} is order-optimal.

We design a set of hard instances for the \problemkRentalVD problem similar to \cite{sun2024static}. Let $\cala(k,d)$ denote a batch of $k$ identical requests, each with a rental duration request $d$ (with $d\in[d_{\min},d_{\max}]$). Divide the uncertainty range $[d_{\min},d_{\max}]$ into $m-1$ sub-ranges of equal length 
\begin{align*}
\Delta_d = \frac{d_{\max}-d_{\min}}{m-1}.
\end{align*}
Let  $ \mathcal{D} := \{d_i\}_{i\in[m]} $,  where $d_i = d_{\min} + (i-1)\Delta_d$ for $i \in [m]$. Define an instance 
\begin{align*}
I_{d_i} := \cala(k,d_1) \oplus \cala(k,d_2) \oplus \cdots \oplus \cala(k,d_i),
\end{align*}
which consists of a sequence of request batches with increasing rental duration requests, arriving in an arbitrarily short interval one after the other (here, $\cala(k,d_i) \oplus \cala(k,d_j)$ denotes a batch $\cala(k,d_i)$ followed by a batch $\cala(k,d_j)$).
In this construction, all requests—from the first batch with valuation $d_1$ to the last batch with valuation $d_m$—arrive within a short time interval $[0,\epsilon]$, where $\epsilon$ is a small constant satisfying $\epsilon < d_{\min}$. 

We consider the collection $\{I_{d_i}\}_{i\in[m]}$ as a family of hard instances for the \problemkRentalVD problem. Using the same proof structure as those of \cite{sun2024static}, we derive the optimal competitive ratio that any online algorithm can achieve on these hard instances. Consequently, we establish the lower bound $1 + \ln\!\bigl(\tfrac{d_{\max}}{d_{\min}}\bigr)$ for the competitive ratio of every online algorithm in this setting.
\end{proof}

By the above lower bound result and Corollary~\ref{corollary:kRental:VD:design}, we argue that \algorithmVD is order-optimal.

\section{Numerical Computation of $\phi$ Satisfying Constraints in Theorem~\ref{prop:kRentalVD-Dynamic-CR}}
\label{apx:numerical:phi:computation}
To numerically obtain a $\phi$ function that satisfies the constraints in Theorem~\ref{prop:kRentalVD-Dynamic-CR} with lowest value of $\alpha$, we discretize the interval $[0,1]$ using a parameter $\epsilon \in (0,1)$, allowing the function $\phi$ to change only at the points $\{\epsilon \cdot i\}_{i=1}^{\lceil 1/\epsilon \rceil}$. 

By doing so, we transform the continuous constraints of Theorem~\ref{prop:kRentalVD-Dynamic-CR} into a finite system of constraints, leading to an optimization problem with a finite number of variables and constraints. The optimal solution for the set of variables $\{\pi^{(\epsilon)}_{i}\}_{i=1}^{\lceil 1/\epsilon \rceil}$ in the following discretized LP provides the desired price function design $\phi$; specifically, the values of $\phi$ at the points $\{\epsilon \cdot i\}_{i=1}^{\lceil 1/\epsilon \rceil}$ are defined according to the solution $\{\pi^{(\epsilon)}_{i}\}_{i=1}^{\lceil 1/\epsilon \rceil}$.

\begin{proposition}
    \label{lem:kRental:VD:discretized_properties}
For any given $\epsilon \in (0,1)$, \algorithmVD is $\alpha^*$-competitive for the \problemkRentalVD problem if the pricing function $\phi$ is designed as 
\begin{align*}
\phi(y) = \pi^{(\epsilon)}_{\lceil \frac{y}{\epsilon} \rceil}, \quad \forall y \in [0,1],
\end{align*}
where the set of prices $\{\pi^{(\epsilon)}_{i}\}_{i=1}^{\lceil 1/\epsilon \rceil}$  and the competitive ratio parameter $\alpha^*$ are the optimal solution to the following LP\footnote{Note that this LP can be solved by standard LP solvers. In addition, it is worth noting that the number of variables in the LP grows on the order of $\lceil 1/\epsilon \rceil$, while the number of constraints increases on the order of $(\lceil 1/\epsilon \rceil)^2$. Standard  techniques such as the interior point method can be employed to solve this optimization problem efficiently.}:
\begin{subequations}
\begin{align}
\label{lp:kRental:VD:discreete}
   \alpha^* =  \min_{\alpha,\ \{\pi^{(\epsilon)}_{i}\}_{i=1}^{\lceil 1/\epsilon \rceil}} \ &  \alpha  \\ 
   \text{s.t.} \qquad 
   & \pi^{(\epsilon)}_{i}  \le \pi^{(\epsilon)}_{i+1},
     \quad  \forall i \in \Bigl\{1,2,\dots,\lceil \tfrac{1}{\epsilon}\rceil - 1\Bigr\},  \\
   & \label{eq:kRental:discretized:cons2} \pi^{(\epsilon)}_{1} = d_{\min},\\
   & \label{eq:kRental:discretized:cons3}\pi^{(\epsilon)}_{\lceil \frac{1}{\epsilon}\rceil} \ge d_{\max},\\
   & \sum_{j=i}^{2i} 
       \frac{2\alpha}{3} \Bigl (\pi^{(\epsilon)}_{j-1}\Bigr) \epsilon + \frac{\alpha}{3}\,\Bigl(\epsilon \cdot l - 2\epsilon\cdot (i+1)\Bigr)\,\pi^{(\epsilon)}_{l-1} \ge \pi^{(\epsilon)}_{l}, \nonumber \\
   & \qquad\qquad \forall i \in \Bigl\{1,\dots,\lfloor \tfrac{l}{2}\rfloor\Bigr\}  \label{eq:kRental:discretized:cons4}  \\ 
   & \frac{\alpha}{3}\,\pi^{(\epsilon)}_{l-1} \epsilon l - \frac{\alpha}{3} \Bigl( \pi^{(\epsilon)}_{l} - \pi^{(\epsilon)}_{i+1} \Bigr)\,\epsilon \cdot (i+1) \ge \pi^{(\epsilon)}_{l}, \nonumber \\
   & \qquad\qquad  \forall i \in \{1,\dots,l\}.  \label{eq:kRental:discretized:cons5}
\end{align}
\end{subequations}
\end{proposition}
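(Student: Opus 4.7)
The plan is to invoke Theorem~\ref{prop:kRentalVD-Dynamic-CR} directly, after verifying that the piecewise-constant pricing function $\phi(y)=\pi^{(\epsilon)}_{\lceil y/\epsilon\rceil}$ obtained from the LP solution is a feasible certificate at level $\alpha=\alpha^*$. Since the monotonicity constraint together with the boundary conditions \eqref{eq:kRental:discretized:cons2}--\eqref{eq:kRental:discretized:cons3} guarantee that $\phi$ is non-decreasing and takes values in $[d_{\min},d_{\max}]$, the induced generalized inverse $\phi^*(d_n)=\sup_{x\in[0,1]}\{\phi(x)\le d_n\}$ is well defined for every $d_n\in[d_{\min},d_{\max}]$ and lives on the grid, i.e.\ $\phi^*(d_n)=\epsilon\cdot l_n$ where $l_n$ is the largest index with $\pi^{(\epsilon)}_{l_n}\le d_n$. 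Feasibility of the LP itself is ensured by Corollary~\ref{corollary:kRental:VD:design}, which provides an admissible $\alpha$ already at a closed-form $\phi$, so $\alpha^*$ is finite.

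The core of the proof is a \emph{reduction to the grid}: for the step function $\phi$, the two continuous families in Eqs.~\eqref{eq:kRental:VD:price:design:1} are tight only at a finite set of configurations, and these are precisely the ones captured by constraints \eqref{eq:kRental:discretized:cons4} and \eqref{eq:kRental:discretized:cons5}. For a fixed $d_n$ with $\phi^*(d_n)=\epsilon l$, I would treat the LHS of the first inequality of \eqref{eq:kRental:VD:price:design:1} as a function of $y_1\in[0,\epsilon l/2]$. The integral term $\int_{y_1}^{2y_1}\phi(\eta)d\eta$ is piecewise affine in $y_1$ because $\phi$ is piecewise constant, and since $\phi$ is non-decreasing, shifting $y_1$ rightward from a grid point to the next grid point only raises the integrand on the added right portion while losing a smaller-value piece on the left; combined with the fact that the algebraic term $\tfrac{\alpha}{3}d_n(\epsilon l-2y_1)$ is strictly decreasing in $y_1$, one shows by a direct monotonicity argument that the LHS, viewed as a function of $y_1$, is minimized at a grid point $y_1=\epsilon i$ with $i\in\{1,\dots,\lfloor l/2\rfloor\}$. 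For the second family, the term $y_2(d_n-\phi(y_2))$ is linear in $y_2$ on each grid interval and is therefore maximized (so the LHS is minimized) at the right endpoint of the interval, which is again a grid point. In both cases, one finally exploits that for $d_n$ varying within $[\pi^{(\epsilon)}_{l},\pi^{(\epsilon)}_{l+1})$ with $\phi^*(d_n)=\epsilon l$ held fixed, the RHS $d_n$ grows at rate $1$ in $d_n$ while the LHS grows only at rate $\tfrac{\alpha}{3}(\epsilon l-2y_1)\le 1$, so the worst case is the boundary value $d_n\uparrow\pi^{(\epsilon)}_{l+1}$, producing exactly the constants $\pi^{(\epsilon)}_{l}$ and $\pi^{(\epsilon)}_{l-1}$ used in the LP.

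The main obstacle I expect is handling this worst-case reduction in the first inequality cleanly: both the integral window $[y_1,2y_1]$ and the algebraic term $\tfrac{\alpha}{3}d_n(\phi^*(d_n)-2y_1)$ depend on $y_1$ with opposite signs, and the window length itself doubles, so a careless bound risks losing a factor on some grid interval where the integration bounds $y_1$ and $2y_1$ straddle two different step values of $\phi$. The clean way around this is to verify the inequality separately on each maximal interval of $y_1$ on which both $\lceil y_1/\epsilon\rceil$ and $\lceil 2y_1/\epsilon\rceil$ are constant; on each such interval the LHS is affine in $y_1$ with slope $-\tfrac{\alpha}{3}d_n$, hence is minimized at the right endpoint, which is a grid point in the allowable range $\{1,\dots,\lfloor l/2\rfloor\}$ of \eqref{eq:kRental:discretized:cons4}. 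The second inequality is handled analogously but with a simpler, single-piece argument per grid cell. Once both reductions are in hand, the discrete constraints \eqref{eq:kRental:discretized:cons4}--\eqref{eq:kRental:discretized:cons5} at the LP optimum $\alpha^*$ imply the continuous constraints of Theorem~\ref{prop:kRentalVD-Dynamic-CR}, and the proposition follows immediately.
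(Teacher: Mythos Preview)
Your high-level plan is the same as the paper's: verify the two continuous families in Eqs.~\eqref{eq:kRental:VD:price:design:1} for the step function $\phi$, and show that the LP constraints \eqref{eq:kRental:discretized:cons4}--\eqref{eq:kRental:discretized:cons5} are exactly what is needed. That is the right route. However, the way you carry out the reduction for the \emph{first} family has two concrete gaps.

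First, the slope you compute is wrong. On an interval where both $\phi(y_1)$ and $\phi(2y_1)$ are constant (say equal to $c_1$ and $c_2$), the derivative of
\[
\int_{y_1}^{2y_1}\tfrac{2\alpha}{3}\phi(\eta)\,d\eta+\tfrac{\alpha}{3}d_n\bigl(\phi^*(d_n)-2y_1\bigr)
\]
with respect to $y_1$ is $\tfrac{2\alpha}{3}(2c_2-c_1-d_n)$, not $-\tfrac{\alpha}{3}d_n$. Since $c_2=\phi(2y_1)$ can be as large as $d_n$ while $c_1$ can be much smaller, this slope can be \emph{positive}; the LHS need not be minimized at the right endpoint of each affine piece. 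Relatedly, the breakpoints of the piecewise-affine LHS occur where \emph{either} $\lceil y_1/\epsilon\rceil$ or $\lceil 2y_1/\epsilon\rceil$ jumps, i.e., at the half-grid points $y_1=\epsilon j/2$ as well as at $y_1=\epsilon i$. The LP constraint \eqref{eq:kRental:discretized:cons4} is indexed only by integer $i$, so even a correct ``minimum at a breakpoint'' argument would not directly match it.

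Second, your $d_n$-variation step is in the wrong direction. With $\phi^*(d_n)=\epsilon l$ fixed, the derivative of $(\text{LHS}-d_n)$ in $d_n$ is $\tfrac{\alpha}{3}(\epsilon l-2y_1)-1$; there is no reason this is $\le 0$ (indeed at $y_1=0$ the constraint itself forces $\tfrac{\alpha}{3}\epsilon l\ge 1$, so the sign flips). Hence the worst $d_n$ is not uniformly at the top of the cell, and the substitution producing the constants $\pi^{(\epsilon)}_{l-1},\pi^{(\epsilon)}_{l}$ in \eqref{eq:kRental:discretized:cons4} does not follow from your argument.

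The paper sidesteps both issues by \emph{not} locating the minimizer: for an arbitrary $y_1$ it directly lower-bounds the integral and the algebraic term via monotonicity of $\phi$ and the rounding $i=\lfloor y_1/\epsilon\rfloor$, replacing $d_n$ on the LHS by the smaller grid value and pushing $2y_1$ up to $2\epsilon(i{+}1)$. This yields exactly the discrete inequality \eqref{eq:kRental:discretized:cons4} at that $i$, and the analogous bound gives \eqref{eq:kRental:discretized:cons5}. Your second-family argument (worst $y_2$ at a right grid endpoint, since $\phi(y_2)\le d_n$) is fine and matches the paper's treatment of \eqref{eq:kRental:discretized:cons5}; it is only the first family that needs to be redone along these lines.
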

\begin{proof}
We need to prove that for each $d_n \in \bigl[d_{\min}, d_{\max}\bigr]$, for all 
\begin{align*}
   y_1 \in \Bigl[0,\frac{\phi^*(d_n)}{2}\Bigr]
   \quad\text{and}\quad
   y_2 \in \Bigl[0,\phi^*(d_n)\Bigr],
\end{align*}
the following constraints for $\phi$ are satisfied:
\begin{align}
\label{eq:kRental:VD:price:design:1:proof}
\int_{\eta=y_1}^{2y_1} \frac{2\alpha}{3}\phi(\eta)d\eta 
+ \frac{\alpha}{3}d_n\Bigl(\phi^*(d_n)-2y_1\Bigr)
&\ge d_n,
\\[6pt]
\label{eq:kRental:VD:price:design:2:proof}
\frac{\alpha}{3}d_n\phi^*(d_n)
- \frac{\alpha}{3}y_2\Bigl(d_n-\phi(y_2)\Bigr)
&\ge d_n.
\end{align}
where $\phi^*(d_n) = \sup\Bigl\{ x \in [0,1]  \mid  \phi(x) \le d_n\Bigr\}$. 

For any value of $d_n \in [d_{\min}, d_{\max}]$, following the design of the $\phi$ function, whose value changes only at the points 
\begin{align*}
\Bigl\{\epsilon, 2\epsilon, \dots, \lceil \tfrac{1}{\epsilon}\rceil \cdot \epsilon \Bigr\},
\end{align*}
we have $\phi^{*}(d_n) = \epsilon \cdot m$ for some $m \in \Bigl\{1,2, \dots, \lceil \tfrac{1}{\epsilon}\rceil\Bigr\}$ because of the constraints in Eq.~\eqref{eq:kRental:discretized:cons2} and Eq.~\eqref{eq:kRental:discretized:cons3}.

Thus, for any $y_1 \le \frac{\phi^*(d_n)}{2} = \frac{\epsilon \cdot m}{2}$, the left-hand side of the constraint in Eq.~\eqref{eq:kRental:VD:price:design:1:proof} can be lower-bounded as follows:
\begin{align*}
    & \int_{\eta=y_1}^{2y_1} \frac{2\alpha}{3}\phi(\eta)d\eta 
      +
      \frac{\alpha}{3}d_n\Bigl(\phi^*(d_n) - 2y_1\Bigr) \\
    =& \int_{\eta=y_1}^{2y_1} \frac{2\alpha}{3}\phi(\eta)d\eta 
      +
      \frac{\alpha}{3}d_n\Bigl(\epsilon \cdot m - 2y_1\Bigr) \\
    \ge& \sum_{j= \lfloor \frac{y_1}{\epsilon} \rfloor \cdot \epsilon}^{ 2\lfloor \frac{y_1}{\epsilon} \rfloor \cdot \epsilon}
           \frac{2\alpha}{3}\Bigl(\phi\bigl(\epsilon \cdot j\bigr) - \phi\bigl(\epsilon \cdot (j-1)\bigr)\Bigr)\epsilon 
      +
      \frac{\alpha}{3}  \phi\bigl(\epsilon \cdot (m-1)\bigr)\Bigl(\epsilon \cdot m - 2\epsilon \cdot \Bigl(\lfloor \frac{y_1}{\epsilon} \rfloor +1 \Bigr)\Bigr) \\
    =& \sum_{j= i \cdot \epsilon}^{ 2i \cdot \epsilon}
           \frac{2\alpha}{3}\Bigl(\phi\bigl(\epsilon \cdot j\bigr) - \phi\bigl(\epsilon \cdot (j-1)\bigr)\Bigr)\epsilon 
      +
      \frac{\alpha}{3}  \phi\bigl(\epsilon \cdot (m-1)\bigr)\Bigl(\epsilon \cdot m - 2\epsilon \cdot (i+1)\Bigr) \\
    \ge& \phi\bigl(\epsilon \cdot m\bigr) \ge d_n,
\end{align*}
where $i = \lfloor \frac{y_1}{\epsilon} \rfloor$. The second inequality follows from some simple algebra, the third inequality follows from the constraint in Eq.~\eqref{eq:kRental:discretized:cons4} (since $i \le \frac{m}{2}$), and the last inequality follows from the definition of $\phi^*$ and $m$.

Thus, the first constraint in Eq.~\eqref{eq:kRental:VD:price:design:1:proof} is satisfied for every $d_n \in [d_{\min}, d_{\max}]$ and every $y_1 \in \Bigl[0,\frac{\phi^*(d_n)}{2}\Bigr]$.

Next, let us prove the second inequality in Eq.~\eqref{eq:kRental:VD:price:design:1}. The LHS of that constraint can be lower-bounded as follows:
\begin{align*}
   & \frac{\alpha}{3}d_n\phi^*(d_n)
- \frac{\alpha}{3}y_2\Bigl(d_n-\phi(y_2)\Bigr) \\
&\ge  
    \frac{\alpha}{3}\phi(\epsilon \cdot (m-1))\epsilon \cdot m
- \frac{\alpha}{3} \epsilon \cdot \Bigl(\lfloor \frac{y_2}{\epsilon} \rfloor+1\Bigr) \Bigl(\phi(\epsilon \cdot m)-\phi(\lfloor \frac{y_2}{\epsilon} \rfloor)\Bigr) \\
& =  
    \frac{\alpha}{3}\phi(\epsilon \cdot (m-1))\epsilon \cdot m
- \frac{\alpha}{3} \epsilon \cdot (i+1) \Bigl(\phi(\epsilon \cdot m)-\phi((i+1) \cdot \epsilon)\Bigr) \\
&\ge \phi(\epsilon \cdot m) \ge d_n,
\end{align*}
where $i = \lfloor \frac{y_2}{\epsilon} \rfloor$, the first inequality follows from simple algebra, the second inequality follows from the constraint in Eq.~\eqref{eq:kRental:discretized:cons5} (since $y_2 \in [0,\phi^*(d_n)]$, thus $i \leq m$), and the final inequality follows from the definition of the $\phi$ function. Thus, the constraint in Eq.~\eqref{eq:kRental:VD:price:design:2:proof} is satisfied for every $d_n \in [d_{\min}, d_{\max}]$ and for every $y_2 \in \Bigl[0,\phi^*(d_n)\Bigr]$. We thus complete the proof.
\end{proof}

\section{Fractional \problemkRentalVD Problem}
\label{apx:krental:frac}
In this section, we study the \problemkRentalVD problem in the fractional setting, where the decision maker can allocate a fractional portion of each unit of resource to a request. We will compete against the optimal clairvoyant algorithm, whose objective value $\opt(I)$ can be computed as follows:
\begin{align} \label{lp:kRentalVD:frac:primal}
 \max_{\mathbf{x}} \quad & \sum_{n \in [N]} d_n \cdot x_n, \\
    s.t. \quad & \sum_{j\in [n]} \hat x_j \cdot 1_{\{d_j + a_j > a_n\}} \leq k,\quad \forall n \in [N], \\
    & x_n \in [0,1],\quad \forall\, n \in [N].
\end{align}

We present
\algorithmVDfrac in Algorithm \ref{alg:kRentalVD-dynamic-fracitonal}, which also adopts a utility-maximizing rule similar to Algorithm \ref{alg:kRentalFD} to produce fractional allocations; the key difference is that the decision for each request now depends on the \emph{total} utilization level of the entire item inventory, rather than on the state of a single unit. 

\begin{algorithm}[H]
    \caption{Duration Oblivious Price-based Algorithm for Fractional \problemkRentalVD  (\algorithmVDfrac)}
    \label{alg:kRentalVD-dynamic-fracitonal}
    
    \textbf{Input:} Pricing function $\phi:[0,1] \rightarrow [d_{min},d_{max}]$

    \While{a new request $n$ arrives}
    {
     Observe the request's duration request, $d_n$.
    
    Compute utilization level of the inventory at the arrival of request $n$, where
      \begin{align}
        y_{n} = \sum_{j = 1}^{n-1} \hat x_j
        \cdot \indicator{\,a_{j} + d_{j} > a_{n}\,} 
      \end{align}
      
      Compute the fractional allocation $\hat x_n$ as follows
    \begin{align}
\hat x_n = \argmax_{x \in \bigl[0,\,\min\{1,\,k - y_n\}\bigr]} \Biggl\{ x \cdot d_n - k\, \int_{\eta=y_n/k}^{(y_n+x)/k}  \phi(\eta)\,d\eta \Biggr\}.
    \end{align}
    
      Allocate the partial allocation $\hat x_n$ to request $n$.
    }
\end{algorithm}

\begin{proposition}
\label{prop:kRentalVD:frac-Dynamic-CR}
\algorithmVDfrac is $\alpha$-competitive for the fractional \problemkRentalVD problem, provided that the price function $\phi$ is increasing and is designed to satisfy the following inequalities (parameterized by $\alpha$) for all $d_n \in \bigl[d_{\min}, d_{\max}\bigr]$ and $ n \in [N] $:
\begin{align}
\label{eq:kRental:VD:frac:price:design:1}
\int_{\eta=y_1}^{2y_1}\alpha\,\phi(\eta)\,d\eta 
+ \frac{\alpha}{2}\,d_n\,\Bigl(\phi^*(d_n)-2\,y_1\Bigr)
&\ge d_n,
\quad \forall\,
y_1 \in \Bigl[0,\,\tfrac{\phi^*(d_n)}{2}\Bigr],\\[6pt]
\label{eq:kRental:VD:frac:price:design:2}
\frac{\alpha}{2}\,d_n\,\phi^*(d_n)
- \frac{\alpha}{2}\,y_2\,\Bigl(d_n-\phi(y_2)\Bigr)
&\ge d_n,
\quad \forall\,
y_2 \in \Bigl[0,\,\phi^*(d_n)\Bigr].
\end{align}
where $\phi^*(d_n) = \sup_{x \in [0,1]}\{\phi(x) \leq d_n \} $.
\end{proposition}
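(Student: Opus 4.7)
The plan is to adapt the online primal--dual argument used in the proof of Theorem~\ref{prop:kRentalVD-Dynamic-CR} to the fractional setting. The absence of rounding losses allows us to work directly with classical LP duality instead of the LP-free certificate. I would first write the dual of the fractional LP in Eq.~\eqref{lp:kRentalVD:frac:primal}, which has variables $\{u_n,\lambda_n\}_{n\in[N]}$, objective $\sum_n u_n + k\sum_n \lambda_n$, and constraints
\[
u_n + \sum_{j\ge n}\lambda_j\,\mathbb{I}_{\{a_n+d_n>a_j\}}\ \ge\ d_n,\qquad u_n,\lambda_n\ge 0.
\]
Since the fractional algorithm incurs no rounding loss we have $\alg(I)=\sum_n d_n\hat x_n$, so by weak duality it suffices to exhibit a dual-feasible solution with objective at most $\alpha\cdot\alg(I)$.

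Next, I would construct the dual solution via online updates that closely parallel Eqs.~\eqref{eq:kRentalVD:dual:update:1}--\eqref{eq:kRentalVD:dual:update:3}, with two modifications: the per-unit index $i_n^*$ is dropped (all updates target the shared $\lambda_j$), and the scaling factor $\alpha/3$ is replaced by $\alpha/2$, with an additional $1/k$ on the $\lambda$-updates to absorb the $k$ in the dual objective. Concretely, set $u_n=\tfrac{\alpha}{2}d_n\hat x_n$; for each $j$ with $n\le j<\nu_n^*$ and $a_j<a_n+d_n$ add $\tfrac{\alpha}{2k}(a_{j+1}-a_j)\hat x_n$ to $\lambda_j$; and add $\tfrac{\alpha}{2k}(d_n-(a_{\nu_n^*}-a_n))\hat x_n$ to $\lambda_{\nu_n^*}$. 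A direct calculation shows that the per-request increment in $\sum_n u_n+k\sum_n\lambda_n$ is $\alpha d_n\hat x_n$, hence $\sum_n u_n+k\sum_n\lambda_n=\alpha\,\alg(I)$.

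The remaining step is verifying dual feasibility at each request $n$. Here I would reuse, almost verbatim, the machinery from Appendix~\ref{apx:prop:kRental:VD:CR}: define $B_n=\{j<n:a_j+d_j>a_n,\,\hat x_j>0\}$ and $C_n=\{j\in B_n:a_j+d_j<a_n+d_n\}$, together with the chain $\{c_l\}_{l\in[L]}$, the partition $\{\mathcal{C}_l\}$, and the \emph{normalized} sequences $z_l=\tfrac{1}{k}\sum_{j\in C_n,\,a_j<a_{c_l}}\hat x_j$ and $\{z_l'\}$ defined by the same recursion. A direct analogue of Lemma~\ref{lem:kRental:VD:z':frac} then yields
\begin{align*}
u_n+\sum_{j\ge n}\lambda_j\,\mathbb{I}_{\{a_n+d_n>a_j\}}\ \ge\ &\alpha z_1'\phi(z_1')-\tfrac{\alpha}{2}z_L'\phi(z_1')+\int_{z_1'}^{z_L'}\alpha\,\phi(\eta)\,d\eta\\
&+\tfrac{\alpha}{2}d_n\,\max\!\bigl\{0,\phi^*(d_n)-z_L'\bigr\}.
\end{align*}
Differentiating in $z_1'$ gives $\alpha\phi'(z_1')(z_1'-z_L'/2)$, so the interior critical point is $z_1'=z_L'/2$ (and the boundary $z_1'\to 0$ is dominated using monotonicity of $\phi$). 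Substituting $z_1'=z_L'/2$ with $y_1=z_L'/2$ collapses the bound to the first inequality of the proposition, while substituting the upper boundary $z_1'=z_L'$ with $y_2=z_L'$ collapses it to the second. Hence, if $\phi$ is increasing and satisfies Eqs.~\eqref{eq:kRental:VD:frac:price:design:1}--\eqref{eq:kRental:VD:frac:price:design:2}, the dual is feasible and $\alpha$-competitiveness follows by weak duality.

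The main obstacle will be bookkeeping the $k$-rescaling: both $\phi$ and the sequences $\{z_l,z_l'\}$ must live in $[0,1]$ rather than $[0,k]$, so every aggregate fractional quantity appearing in the integral-case proof has to be divided by $k$ and the per-unit utilization $y_n^{(i_n^*)}$ there has to be replaced by the aggregate $y_n/k$ used inside the pseudo-utility maximization of Algorithm~\ref{alg:kRentalVD-dynamic-fracitonal}. Once this normalization is set up consistently, the inductive proof of the fractional analogue of Lemma~\ref{lem:kRental:VD:z':frac} proceeds essentially mechanically, with every occurrence of $\alpha/3$ in the integral proof replaced by $\alpha/2$.
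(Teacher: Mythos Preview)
Your overall architecture is right, and the reduction of the final lower bound to the two inequalities via the critical point $z_1'=z_L'/2$ and the boundary $z_1'=z_L'$ is exactly how the paper finishes. The gap is in the dual updates themselves: as written they cannot produce the lower bound you claim.

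With your updates, the per-$j$ contribution of a request $j\in C_n$ to $\Lambda_n=\sum_{m\ge n}\lambda_m\mathbb{I}_{\{a_n+d_n>a_m\}}$ telescopes to only $\tfrac{\alpha}{2}\tfrac{\hat x_j}{k}(a_j+d_j-a_n)$, not $\tfrac{\alpha}{2}\tfrac{\hat x_j}{k}\bigl(d_j+(a_j+d_j-a_n)\bigr)$. Running the same chain-of-inequalities machinery then yields
\[
\alpha z_1'\phi(z_1')-\tfrac{\alpha}{2}z_L'\phi(z_1')+\tfrac{\alpha}{2}\int_{z_1'}^{z_L'}\phi(\eta)\,d\eta,
\]
with $\tfrac{\alpha}{2}$ on the integral, not $\alpha$; substituting $z_1'=z_L'/2$ gives $\tfrac{\alpha}{2}\int_{y_1}^{2y_1}\phi+\tfrac{\alpha}{2}d_n(\phi^*(d_n)-2y_1)$, which is \emph{not} the first inequality of the proposition. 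The issue is structural: in the integral proof the factor of two comes from the ``$2d_n$'' in Eq.~\eqref{eq:kRentalVD:dual:update:3}, so blindly replacing $\alpha/3\to\alpha/2$ while also dropping one copy of $d_n$ (to keep the total increment equal to $\alpha d_n\hat x_n$) loses exactly the piece you need. Conversely, keeping $2d_n$ makes the total increment $\tfrac{3\alpha}{2}d_n\hat x_n$, which breaks the objective bound.

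The paper resolves this tension by a different split: it sets $u_n=\tfrac{\alpha}{2}\bigl(d_n-\phi((y_n+\hat x_n)/k)\bigr)\hat x_n$ and adds the compensating term $+\,\phi((y_n+\hat x_n)/k)$ inside the $\lambda_{\nu_n^*}$ update. The total increment is still $\alpha d_n\hat x_n$, but now the $\phi$-pieces in the $\lambda$-updates aggregate over $C_n$ to an extra $\tfrac{\alpha}{2}\int_0^{z_L'}\phi(\eta)\,d\eta$, which is precisely what upgrades the integral coefficient from $\tfrac{\alpha}{2}$ to $\alpha$. Once you make this change, the rest of your plan (the sets $B_n,C_n$, the chain $\{c_l\}$, the normalized $z_l,z_l'$, and the minimization in $z_1'$) goes through as you describe.
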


The proof of the above proposition is given in the following.  
As with Theorem~\ref{prop:kRentalVD-Dynamic-CR}, obtaining a closed-form expression for the pricing function $\phi$ that minimizes $\alpha$ while satisfying the constraints in the above proposition is challenging, as it requires solving a system of delayed differential inequalities involving an inverse term.  
Therefore, we adopt the same numerical approach used to construct a valid pricing function $\phi$ in Proposition~\ref{lem:kRental:VD:discretized_properties}.  
Specifically, we discretize the interval $[0,1]$ using a parameter $\epsilon \in (0,1)$, allowing $\phi$ to change only at the points $\{\epsilon \cdot i\}_{i=1}^{\lceil 1/\epsilon \rceil}$.  
Using this discretization, we obtain a pricing function $\phi$ that satisfies the constraints of Proposition~\ref{prop:kRentalVD:frac-Dynamic-CR}.  
The competitive ratio of Algorithm~\ref{alg:kRentalVD-dynamic-fracitonal} using this numerically constructed pricing function as input is shown in Figure~\ref{fig:kRental:VD:comparison:numerical}.

\paragraph{Proof of Proposition~\ref{prop:kRentalVD:frac-Dynamic-CR}.}
The proof structure is similar to the proof of Theorem~\ref{prop:kRentalVD-Dynamic-CR}. Specifically, we still adopt an online primal-dual approach to establish the competitive ratio of Algorithm~\ref{alg:kRentalVD-dynamic-fracitonal}. Consider the dual LP corresponding to the primal LP in Eq.~\eqref{lp:kRentalVD:primal}:
\begin{align}
\label{lp:kRentalVD:frac:dual}
\min_{\mathbf{\lambda,u}} \quad & \sum_{n \in [N]} u_{n} + k \cdot \sum_{n \in [N]} \lambda_n,\\ 
\label{eq:kRentalVD:frac:dual:cons}
\text{s.t.} \quad & u_{n} + \sum_{j= n}^{N} \lambda_{j} \cdot \mathbb{I}_{\{a_n + d_n > a_{j}\}} \ge d_{n}, \quad \forall n \in [N],\\ 
& \lambda_n \ge 0, \quad \forall n \in [N]. \nonumber
\end{align}

Let $\alg(\mathcal{I}) = \sum_{n \in [N]} d_n \cdot \hat x_n$ represent the expected objective value of Algorithm~\ref{alg:kRentalVD-dynamic-fracitonal} on an instance $\mathcal{I}$. 

In the first step, we design a feasible solution for the dual problem, denoted as $\{u_{n},\lambda_{n}\}_{n \in [N]}$, which corresponds to the dual linear program, and define the dual objective
\begin{align*}
   D = \sum_{n \in [N]} u_{n} + k \cdot \sum_{n \in [N]} \lambda_{n}.
\end{align*}

We first establish that for the designed dual solution we have 
$D = \alpha \cdot \alg(\mathcal{I})$ for every instance $\mathcal{I}$ of the problem. 
We then establish that the dual vector $\{u_{n},\lambda_{n}\}_{n \in [N]}$ is feasible for the dual LP in Eq.~\eqref{lp:kRentalVD:frac:dual}. 
Thus, the $\alpha$-competitiveness of Algorithm~\ref{alg:kRentalVD-dynamic-fracitonal} follows. 
Let us initialize all dual variables to zero. 
For each rental request $n$, let $\hat x_n$ be the fractional allocation determined by Algorithm~\ref{alg:kRentalVD-dynamic-fracitonal}. 
We then perform the following updates:
\begin{align}
\label{eq:kRentalVD:frac:dual:update:1}
u_n &= \frac{\alpha}{2}\,\bigl(d_n - \phi\bigl((y_n+\hat x_n)/k\bigr)\bigr)\,\hat x_n,\\[4pt]
\label{eq:kRentalVD:frac:dual:update:2}
\lambda_{j} &\leftarrow \lambda_{j} + \frac{\alpha}{2}\,
\bigl(a_{j+1}-a_{j}\bigr)\,\frac{\hat x_n}{k},
\quad\forall j, n \le j < \nu^*_n, a_{j} < a_n+d_n,\\[4pt]
\label{eq:kRentalVD:frac:dual:update:3}
\lambda_{\nu^*_n} &\leftarrow \lambda_{\nu^*_n}
     + \frac{\alpha}{2}\,
       \bigl(d_n - (a_{\nu^*_n}-a_n)
             + \phi\bigl((y_n+\hat x_n)/k\bigr)\bigr)\,\frac{\hat x_n}{k},
\end{align}
where $\displaystyle \nu^*_n = \max\{\,j\ge n : a_{j}<a_n+d_n\,\}$.

Next, let $\Delta_{n}^D$ denote the increase in the dual solution objective value (Eq. \eqref{lp:kRentalVD:frac:dual}) after updating the variables of the system according to Eqs. \eqref{eq:kRentalVD:frac:dual:update:1}--\eqref{eq:kRental:VD:fractional} for the $n$-th rental request. It follows directly that
\begin{align*}
   \Delta_{n}^{D} = \alpha \, d_n \hat x_n.
\end{align*}
Summing over all requests $n \in [N]$, we have $\alpha \cdot \alg(\mathcal{I}) = \alpha \cdot \sum_{n \in [N]} d_n \cdot \hat x_n = \alpha \cdot D$.

\paragraph{Feasibility of the dual constraints in Eq.~\eqref{eq:kRentalVD:frac:dual:cons}.}
For each request $n$ in instance $\mathcal{I}$, we show that
\begin{align*}
   \sum_{j = n}^{N} \lambda_{j} \cdot \indicator{a_n + d_n > a_{j}} + u_n \ge   d_n.
\end{align*}

Let us define
\begin{align*}
    \Lambda_{n} = \sum_{j = n}^{N} \lambda_{j} \cdot \indicator{a_n + d_n > a_{j}}.
\end{align*}

Let us further define the set of requests $B_n$ by
\begin{align}
    B_n  = \bigl\{ 1 \leq j < n \mid a_{j} + d_{j} > a_n, \hat x_j > 0\bigr\}.
\end{align}
In other words, $B_n$ is the set of all requests who arrive before request~$n$, request rental intervals overlapping the interval of request~$n$, and have a non-zero probability of receiving unit~$i$.

Furthermore, define $C_n \subseteq B_n$ by
\begin{align*}
    C_n = \bigl\{ j \in B_n \mid a_{j} + d_{j} < a_n + d_n  \bigr\}.
\end{align*}
Next, let the set of requests $\{c_l\}_{l \in [L]}$ be defined recursively such that
\begin{align*}
    c_1  & =  \argmin_{j \in C_n}\{a_{j} + d_{j}\},\\
    c_{l} &= \argmin_{\substack{j \in C_n \\ a_{j} > a_{c_{l-1}}}} \{a_{j} + d_{j}\}, \quad\quad 1 < l \le L.
\end{align*}
The request $c_L \in C_n$ is the one for which
$\{j \in C_n \mid a_{j} > a_{c_L}\}$ is empty.
Using the sequence of requests $\{c_l\}_{l \in [L]}$, we partition $C_n$ into $L$ sets:
\begin{align*}
    \mathcal{C}_{l} =  \bigl\{ j \in C_n \mid a_{c_{l-1}} \le a_{j} < a_{c_{l}} \bigr\}, \quad \forall l \in [L].
\end{align*}

Furthermore, define the set of values $\{z_l\}_{l \in [L]}$ by
\begin{align*}
    z_l  = \sum_{\substack{j \in C_n \\ a_{j} < a_{c_{l}}}} \hat x_j /k.
\end{align*}

Let $\Delta_{\Lambda_{n}}^{(j)}$ denote the increase in $\Lambda_{n}$ after processing the rental request of request~$j$ and updating the variables according to Eqs.~\eqref{eq:kRentalVD:frac:dual:update:2}--\eqref{eq:kRentalVD:frac:dual:update:3}. From the way we perform the updates, it follows that
\begin{align}
\sum_{j \in C_n} \Delta_{\Lambda_n}^{(j)}
  &\ge
    \sum_{l=1}^{L}
    \sum_{j \in \mathcal{C}_l}
    \frac{\alpha}{2}\,\frac{\hat x_j}{k}\,
    \Bigl(
      \phi\!\bigl((y_j+\hat x_j)/k\bigr)
      + a_{j}+d_{j}-a_{n}
    \Bigr) \notag \\[6pt]
  &\ge
    \sum_{l=1}^{L}
    \sum_{j \in \mathcal{C}_l}
    \frac{\alpha}{2}\,\frac{\hat x_j}{k}\,
    \bigl(a_{j}+d_{j}-a_{n}\bigr)
    + \frac{\alpha}{2}\!\int_{0}^{z_L}\!\!\phi(\eta)\,d\eta \notag \\[6pt]
  &\ge
    \sum_{l=1}^{L}
    \sum_{j \in \mathcal{C}_l}
    \frac{\alpha}{2}\,\frac{\hat x_j}{k}\,
    \bigl(a_{c_l}+d_{c_l}-a_{n}\bigr)
    + \frac{\alpha}{2}\!\int_{0}^{z_L}\!\!\phi(\eta)\,d\eta \notag \\[6pt]
  &\ge
    \sum_{l=1}^{L}
    \frac{\alpha}{2}\,(z_l - z_{l-1})\,
    \bigl(d_{c_l}+a_{c_l}-a_{c_1}\bigr)
    + \frac{\alpha}{2}\!\int_{0}^{z_L}\!\!\phi(\eta)\,d\eta.
\end{align}
where the first inequality follows from the dual updates in 
Eqs.~\eqref{eq:kRentalVD:frac:dual:update:2}--\eqref{eq:kRentalVD:frac:dual:update:3}, 
the second inequality follows from rearranging the terms and the fact that $\phi$ function is increasing. The third inequality relies on the definition of $c_l$ and $\mathcal{C}_l$ (under which $d_{j} \le d_{c_l}$ and $a_{j} + d_{j} \ge a_{c_l} + d_{c_l}$ for each $j \in \mathcal{C}_l$), 
and the last inequality follows from the definition of the values $\{z_l\}_{l \in [L]}$, taking $z_0 = 0$.

Define the sequence $\{z'_l\}_{l \in [L]}$ recursively as follows:
\begin{align*}
    & z'_L = z_L,\\[6pt]
    & z'_l =
    \begin{cases}
        z_l, 
         \hspace{+2cm} \text{if } 
          \bigl(d_{c_{l+1}} + a_{c_{l+1}}\bigr) - \bigl(d_{c_l} + a_{c_l}\bigr) \ge \phi\bigl(z'_{l+1}\bigr) - \phi\bigl(z_{l}\bigr),
        \\[6pt]
        \phi^{*}\!\Bigl(
          \phi\bigl(z'_{l+1}\bigr) - \bigl[
            \bigl(d_{c_{l+1}} + a_{c_{l+1}}\bigr) - \bigl(d_{c_l} + a_{c_l}\bigr)
          \bigr] \Bigr),
        \quad \text{otherwise},
    \end{cases}
    \quad \forall l \in [L-1].
\end{align*}

\begin{lemma}
\label{lem:kRental:VD:z'}
From the definition of the sequence $\{z'_l\}_{l \in [L]}$, for each $l \in \{1, \dots, L\}$, we have
\begin{align*}
    d_{c_l} \ge \phi\bigl(z'_l\bigr).
\end{align*}

Moreover, 
\begin{align}
    & \sum_{l=1}^{L} \frac{\alpha}{2}\bigl(z_l - z_{l-1}\bigr) \Bigl( d_{c_l} + a_{c_l} - a_{c_1} - d_{c_1} \Bigr) \nonumber \\
\ge \ & \frac{\alpha}{2} \cdot z'_1 \phi(z'_1) + \sum_{l=2}^{L}   
    \bigl(z'_{l}- z'_{l-1}\bigr)  \bigl(a_{c_l} + d_{c_l} - a_{c_1} - d_{c_1} \bigr). \label{eq:kRental:VD:random:inequality:frac}
\end{align}
\end{lemma}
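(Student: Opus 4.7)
}
The plan is to mirror the two-part inductive argument already carried out for Lemma~\ref{lem:kRental:VD:z':frac}, adjusted for the fractional dual updates in Eqs.~\eqref{eq:kRentalVD:frac:dual:update:1}--\eqref{eq:kRentalVD:frac:dual:update:3}. The coefficients here are $\alpha/2$ rather than $\alpha/3$, and the linear factor on $d_{c_l}$ is one instead of two, but the recursion defining $\{z'_l\}$ is identical, so the same monotonicity tricks will go through.

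For Part I (the pointwise bound $d_{c_l} \ge \phi(z'_l)$), I would induct on $L$. In the base case $L=1$, the fact that $\hat x_{c_1} > 0$ in Algorithm~\ref{alg:kRentalVD-dynamic-fracitonal} together with the pseudo-utility maximization forces $d_{c_1} \ge \phi((y_{c_1}+\hat x_{c_1})/k)$, and since $y_{c_1}/k \ge z_1 = z'_1$ and $\phi$ is increasing, the bound follows. For the step from $L-1$ to $L$, I would argue as in the earlier lemma: the inequality $d_{c_L}\ge \phi(z'_L)$ is again immediate from the algorithm's rule; if the recursion sets $z'_{L-1}=z_{L-1}$ the bound holds by the analogous direct argument, and otherwise one uses the identity
\[
\phi(z'_{L-1}) = \phi(z'_L) - \bigl[(d_{c_L}+a_{c_L}) - (d_{c_{L-1}}+a_{c_{L-1}})\bigr]
\]
together with $\phi(z'_L)\le d_{c_L}$ (from the just-proved case) and $a_{c_{L-1}}<a_{c_L}$ to conclude $\phi(z'_{L-1})\le d_{c_{L-1}}$. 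The remaining indices $l\in\{1,\dots,L-2\}$ are handled by applying the induction hypothesis to the truncated sequence $\{z_1,\dots,z_{L-2}\}\cup\{z'_{L-1}\}$.

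For Part II (the telescoping-style inequality in Eq.~\eqref{eq:kRental:VD:random:inequality:frac}), I would again induct on $L$. The base case $L=1$ reduces to $\tfrac{\alpha}{2}z_1(d_{c_1}-d_{c_1})\ge \tfrac{\alpha}{2}z'_1\phi(z'_1)\cdot 0$, which is vacuous since $z_1=z'_1$. For the inductive step I split into two subcases depending on whether $z'_{L-1}=z_{L-1}$ or $z'_{L-1}>z_{L-1}$. In the first subcase, the induction hypothesis applied to $\{z_1,\dots,z_{L-1}\}$ plus the new summand at index $L$ yields the claim directly. In the second subcase, I apply the induction hypothesis to the modified sequence $\{z_1,\dots,z_{L-2}\}\cup\{z'_{L-1}\}$, add the term $\tfrac{\alpha}{2}(z'_{L-1}-z_{L-2})(d_{c_{L-1}}+a_{c_{L-1}}-a_{c_1}-d_{c_1})$ to both sides to re-align the telescoping, and then upper-bound the left-hand side by replacing $d_{c_{L-1}}+a_{c_{L-1}}$ in the leftover weight $(z_L-z'_{L-1})$ with the larger $d_{c_L}+a_{c_L}$, using the ordering $a_{c_{L-1}}+d_{c_{L-1}}\le a_{c_L}+d_{c_L}$ built into the definition of $\{c_l\}$.

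The main technical obstacle, as in the earlier lemma, is the second subcase of Part II: one must carefully verify that after re-binning the increment $z_L-z_{L-2}$ into the pieces $(z'_{L-1}-z_{L-2})$ and $(z_L-z'_{L-1})$, the resulting expression still dominates the right-hand side, which only has a single increment $(z'_L-z'_{L-1})=(z_L-z'_{L-1})$ at index $L$. The argument hinges on two ingredients: (i) the monotonicity $a_{c_l}+d_{c_l}$ is non-decreasing in $l$, which lets us upgrade coefficients on the left; and (ii) the definition of $z'_{L-1}$, which exactly pegs $\phi(z'_{L-1})$ so that no slack is lost in the telescoping. Once these are in place, summing the inductive bound with the added $(z_L-z'_{L-1})$ term produces the desired inequality, completing the induction.
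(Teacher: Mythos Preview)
Your plan mirrors the paper's proof exactly: both parts are by induction on $L$, with the Part-II step splitting on whether $z'_{L-1}=z_{L-1}$ or $z'_{L-1}>z_{L-1}$ and exploiting the monotonicity of $l\mapsto a_{c_l}+d_{c_l}$ to re-bin the increments. The technical obstacles you flag are precisely the ones the paper works through.

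There is, however, a concrete slip in your base case for Part~II. At $L=1$ the left-hand side of Eq.~\eqref{eq:kRental:VD:random:inequality:frac} is $\tfrac{\alpha}{2}z_1\bigl(d_{c_1}+a_{c_1}-a_{c_1}-d_{c_1}\bigr)=0$, while the right-hand side is $\tfrac{\alpha}{2}z'_1\phi(z'_1)$: the empty sum over $l\ge 2$ contributes zero, but nothing multiplies the lead term $z'_1\phi(z'_1)$ by zero, so your expression ``$\tfrac{\alpha}{2}z'_1\phi(z'_1)\cdot 0$'' is wrong and the base case is not vacuous. Taken literally, the inequality then fails at $L=1$. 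The paper's own proof makes the same ``trivial since $z_1=z'_1$'' assertion, so this is a defect in the lemma statement rather than in your outline: the chain of inequalities bounding $\sum_{j\in C_n}\Delta^{(j)}_{\Lambda_n}$ just before the lemma actually produces the summand $(d_{c_l}+a_{c_l}-a_{c_1})$ without the spurious $-d_{c_1}$. With that correction the $L=1$ case reads $\tfrac{\alpha}{2}z_1 d_{c_1}\ge\tfrac{\alpha}{2}z'_1\phi(z'_1)$, which does follow from $z_1=z'_1$ together with $d_{c_1}\ge\phi(z'_1)$ from Part~I. Once you account for this, your induction goes through verbatim.
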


\begin{proof}
Again, we prove the two parts of the lemma separately.

\textbf{Part-I (Proof of $d_{c_l} \ge \phi(z'_l)$).} We show $d_{c_l} \ge \phi\bigl(z'_l\bigr)$ by induction on $L$. For the base case $L = 1$:
\begin{align*}
   z'_1 = z_1,
\end{align*}
and by the definition of $z_1$, the probabilistic utilization level of item $i$ at the arrival of request $c_1$ is at least $z_1$. Since $x_{c_1} \neq 0$, it follows that
\begin{align*}
   d_{c_1} \ge \phi\bigl(z_1\bigr) 
   = \phi\bigl(z'_1\bigr).
\end{align*}

Assume the statement holds for any number of requests in the set $\{c_{l}\}_{l \in [L]}$ up to $M-1$. We prove it for $L = M$. By the same argument as in the base case, we have 
\begin{align*}
   d_{c_M} \ge \phi\bigl(z_M\bigr) 
   = \phi\bigl(z'_M\bigr).
\end{align*}
If $z'_{M-1} = z_{M-1}$, then by the definition of $z_{M-1}$, we immediately get
\begin{align*}
   \phi\bigl(z'_{M-1}\bigr) \le d_{c_{M-1}}.
\end{align*}
Otherwise, suppose
\begin{align*}
   \bigl(d_{c_{M}} + a_{c_{M}}\bigr) 
   - 
   \bigl(d_{c_{M-1}} + a_{c_{M-1}}\bigr) 
   <
   \phi\bigl(z'_{M}\bigr) 
   -  
   \phi\bigl(z_{M-1}\bigr).
\end{align*}
From the definition of $z'_{M-1}$, we have
\begin{align*}
   \phi\bigl(z'_{M-1}\bigr) 
   = 
   \phi\bigl(z'_M\bigr) 
   - 
   \Bigl[
      \bigl(d_{c_{M}} + a_{c_{M}}\bigr) 
      - 
      \bigl(d_{c_{M-1}} + a_{c_{M-1}}\bigr)
   \Bigr].
\end{align*}
Since $\phi\bigl(z'_M\bigr) \le d_{c_M}$ and $a_{c_{M-1}} < a_{c_{M}}$ (by the definition of $c_l$), it follows that
\begin{align*}
   \phi\bigl(z'_{M-1}\bigr)
   \le
   \bigl[\phi(z'_M) - d_{c_M}\bigr]
   +
   \bigl[a_{c_{M-1}} - a_{c_M}\bigr]
   +
   d_{c_{M-1}}
   \le
   d_{c_{M-1}}.
\end{align*}
For the remaining inequalities, for each $l \in \{1,\dots,M-2\}$, the induction hypothesis applies to the set of values 
$\{z_1,\dots,z_{M-2}\} \cup \{z'_{M-1}\}$, ensuring that 
$\phi\bigl(z'_l\bigr) \le d_{c_l}$ for all $l \in \{1,\dots,M-1\}.$

\medskip

\textbf{Part-II (Proof of \eqref{eq:kRental:VD:random:inequality:frac})}. We again use induction on $L$ to prove
\begin{align*}
   & \sum_{l=1}^{L} 
   \frac{\alpha}{2}\bigl(z_l - z_{l-1}\bigr)
   \Bigl(
      d_{c_l} + a_{c_l} - a_{c_1} - d_{c_1}
   \Bigr) \\
   \ge\ & 
   \frac{\alpha}{2} z'_1 \phi(z'_1) 
   + 
   \sum_{l=2}^{L}   
   \bigl(z'_{l}- z'_{l-1}\bigr) 
   \Bigl(
       a_{c_l}+d_{c_l} - a_{c_1} - d_{c_1}
   \Bigr).
\end{align*}
For $L=1$, the statement is trivial since $z_1 = z'_1$. Assume it holds for all sequences $\{z'_l\}$ of length $L-1$. We prove it for $L$.  

From the definition of $z'_{L-1}$, we have  $z'_{L-1} \ge z_{L-1}$. Thus, we will have the following two cases:

\textit{Case 1:} $z'_{L-1} > z_{L-1}$.  
 By the induction hypothesis, for the sequence 
$\{z_1,\dots,z_{L-2}\} \cup \{z'_{L-1}\}$, we have
\begin{align*}
    & \sum_{l=1}^{L-2} 
    \frac{\alpha}{2}\bigl(z_l - z_{l-1}\bigr)
    \Bigl(
       d_{c_l} + a_{c_l} - a_{c_1}
    \Bigr)
    +
    \frac{\alpha}{2}\bigl(z'_{L-1} - z_{L-2}\bigr)
    \Bigl(
       d_{c_{L-1}} + a_{c_{L-1}} - a_{c_1} - d_{c_1}
    \Bigr)  \\
    \ge &
    \frac{\alpha}{2}z'_1 d_{c_1}
    +
    \sum_{l=2}^{L-1} 
    \bigl(z'_l - z'_{l-1}\bigr)
    \Bigl(
       d_{c_l} + a_{c_l} - a_{c_1} - d_{c_1}
    \Bigr).
\end{align*}
Next, we add the term $\tfrac{\alpha}{3}\bigl(z'_{L-1} - z_{L-2}\bigr)\bigl(d_{c_{L-1}} + a_{c_{L-1}} - a_n \bigr)$ to both sides of above inequality.
  \begin{align*}
        & \sum_{l=1}^{L-2} 
    \frac{\alpha}{2}\bigl(z_l - z_{l-1}\bigr)
    \Bigl(d_{c_l} + a_{c_l} - a_{c_1} \Bigr) + \frac{\alpha}{2}\bigl(z'_{L-1} - z_{L-2}\bigr)
    \Bigl(
       d_{c_l} + a_{c_l} - a_{c_1} - d_{c_1}
    \Bigr) + \\
    & \qquad \frac{\alpha}{2}\bigl(z_L - z'_{L-1}\bigr)
    \Bigl(
       d_{c_l} + a_{c_l} - a_{c_1} - d_{c_1}
    \Bigr) \nonumber\\
    \ge & \frac{\alpha}{2} \cdot z'_1 \cdot  d_{c_1} + \sum_{l=2}^{L}   (z'_{l}- z'_{l-1}) \cdot  \left ( d_{c_{l}} + a_{c_l} - a_{c_1} - d_{c_1} \right).
\end{align*}  
Next, we will upper-bound the left-hand-size of the above inequality as follows:
\begin{align*}
    & \sum_{l=1}^{L-2} 
    \frac{\alpha}{2}\bigl(z_l - z_{l-1}\bigr)
    \Bigl( d_{c_l} + a_{c_l} - a_{c_1} - d_{c_1} \Bigr) + \frac{\alpha}{2}\bigl(z'_{L-1} - z_{L-2}\bigr)
    \Bigl(
       d_{c_{L-1}} + a_{c_{L-1}} - a_{c_1} - d_{c_1}
    \Bigr) +\\
    & \qquad \frac{\alpha}{2}\bigl(z_L - z'_{L-1}\bigr)
    \Bigl(
        d_{c_{L}} + a_{c_{L}} - a_{c_1} - d_{c_1}
    \Bigr))  \\
    \leq & \sum_{l=1}^{L-2} 
    \frac{\alpha}{2}\bigl(z_l - z_{l-1}\bigr)
    \Bigl(d_{c_l} + a_{c_l} - a_{c_1} - d_{c_1} \Bigr) + \frac{\alpha}{2}\bigl(z_{L-1} - z_{L-2}\bigr)
    \Bigl(
       d_{c_{L-1}} + a_{c_{L-1}} - a_{c_1} - d_{c_1}
    \Bigr) +  \\
    & \qquad \frac{\alpha}{2}\bigl(z_L - z_{L-1}\bigr)
    \Bigl(
       d_{c_L} + a_{c_L} - a_{c_1} - d_{c_1}
    \Bigr)) \\
    = & \sum_{l=1}^{L} 
    \frac{\alpha}{2}\bigl(z_l - z_{l-1}\bigr)
    \Bigl(d_{c_l} + a_{c_l} - a_{c_1} - d_{c_1} \Bigr),
\end{align*}
where the inequality follows since $\Bigl(
        d_{c_{L-1}} + a_{c_{L-1}} - a_{c_1} - d_{c_1}
    \Bigr) \leq \Bigl(
       d_{c_L} + a_{c_L} - a_{c_1} - d_{c_1}
    \Bigr)$.
    Thus in this case the inequality in Eq.~\eqref{eq:kRental:VD:random:inequality:frac} follows.

\textit{Case 2:} $z'_{L-1} = z_{L-1}$. In this scenario, the inequality follows by the same reasoning as previous steps.

This completes the proof of 
\eqref{eq:kRental:VD:random:inequality:frac}.
\end{proof}

Returning to the point where we left off, we have:
\begin{align*}
    \sum_{j \in C_n^{(i)}} \Delta_{\Lambda_{n}^{(i)}}^{(j)} 
    &\ge
    \sum_{l=1}^{L} 
    \frac{\alpha}{2}\bigl(z_l - z_{l-1}\bigr)
    \bigl(
       a_{c_l} + d_{c_l} - a_{c_1} - d_{c_1}
    \bigr)  + \frac{\alpha}{2} \int_{\eta = 0}^{z'_L} \phi(\eta) d \eta
    \\[6pt]
    &\ge
    \frac{\alpha}{2}
    \Bigl[
       z'_1\phi(z'_1) 
       +
       \sum_{l=2}^{L} 
          \bigl(z'_{l} - z'_{l-1}\bigr)
          \bigl(
              a_{c_l} + d_{c_l} - a_{c_1} - d_{c_1}
          \bigr)
    \Bigr] + \frac{\alpha}{2} \int_{\eta = 0}^{z'_L} \phi(\eta) d \eta
    \\[6pt]
    &\ge
    \frac{\alpha}{2}
    \Bigl[
       z'_1\phi(z'_1) 
       +
       \sum_{l=2}^{L} 
          \bigl(z'_{l} - z'_{l-1}\bigr)
          \bigl(
             \phi(z'_l) - \phi(z'_1)
          \bigr)
    \Bigr]  + \frac{\alpha}{2} \int_{\eta = 0}^{z'_L} \phi(\eta) d \eta
    \\[6pt]
    &\ge
    \frac{\alpha}{2}
    \Bigl[
       z'_1\phi(z'_1) 
       +
       \int_{\eta = z'_1}^{z'_L}
          \bigl(\phi(\eta) - \phi(z'_1)\bigr)
       d\eta
    \Bigr] + \frac{\alpha}{2} \int_{\eta = 0}^{z'_L} \phi(\eta) d \eta
    \\[6pt]
    & \ge 
    \alpha z'_1\phi(z'_1) 
    -
    \frac{\alpha}{2}z'_L\phi(z'_1) 
    +
    \int_{z'_1}^{z'_L}
      \alpha \phi(\eta)
    d\eta,
\end{align*}
where the second and third inequalities follow from 
Lemma~\ref{lem:kRental:VD:z'}, 
and the fourth inequality holds because $\phi$ is an increasing function. 

Next, we lower-bound 
$\sum_{j \in B_n \setminus C_n} \Delta_{\Lambda_{n}}^{(j)}$
as follows:
\begin{align*}
    \sum_{j \in B_n \setminus C_n} 
    \Delta_{\Lambda_{n}}^{(j)} 
    \ge 
    \frac{\alpha}{2}d_n
    \max\!\Bigl\{0,y_n - z'_L\Bigr\}.
\end{align*}
The above inequality holds because, for each request $j$ in 
$B_n \setminus C_n$, we have $a_{j} + d_{j} > a_n + d_n$ by the definitions of 
$B_n$ and $C_n$. Then, from 
Eqs.~\eqref{eq:kRentalVD:frac:dual:update:2}--\eqref{eq:kRentalVD:frac:dual:update:3}, 
for each such $j$, $\Delta_{\Lambda_{n}}^{(j)}$ is at least 
$\tfrac{\alpha}{3}d_n\hat x_j$. Furthermore,
\begin{align*}
    \sum_{j \in B_n \setminus C_n} \hat x_j 
    \ge 
    \max\!\Bigl\{0,y_n - z'_L\Bigr\}.
\end{align*}

We proceed to lower-bound the term $u_n$. From the update rule in Eq.~\eqref{eq:kRentalVD:frac:dual:update:1}, for each request~$n$ we have:
\begin{align*}
    u_n 
    &\ge  
    \frac{\alpha}{2}(d_n - \phi(y_n+\hat x_n/k))\hat x_n
\end{align*}

Combining bounds obtained for the LHS of dual constraint in Eq.~\eqref{eq:kRentalVD:frac:dual:cons}, we obtain
\begin{align*}
u_n
  + \sum_{j = n}^{N}
      \lambda_{j}\,
      \indicator{a_n + d_n > a_{j}}
&\ge
  \frac{\alpha}{2}\bigl(d_n - \phi\bigl((y_n+\hat x_n)/k\bigr)\bigr)\hat x_n
  + \sum_{j \in B_n} \Delta_{\Lambda_{n}}^{(j)}
\\[4pt]
&=
  \frac{\alpha}{2}\bigl(d_n - \phi\bigl((y_n+\hat x_n)/k\bigr)\bigr)\hat x_n
  + \sum_{j \in C_n} \Delta_{\Lambda_{n}}^{(j)}
  + \sum_{j \in B_n \setminus C_n} \Delta_{\Lambda_{n}}^{(j)}
\\[4pt]
&\ge
  \frac{\alpha}{2}\bigl(d_n - \phi(Z)\bigr)\hat x_n
  + \alpha z'_1 \phi(z'_1)
  - \frac{\alpha}{2} z'_L \phi(z'_1)
  + \int_{z'_1}^{z'_L} \!\alpha \phi(\eta)\,d\eta
  + \frac{\alpha}{2} d_n \,
    \max\!\bigl\{0,\,Z - z'_L\bigr\}
\\
&\ge
  \alpha z'_1 \phi(z'_1)
  - \frac{\alpha}{2} z'_L \phi(z'_1)
  + \int_{z'_1}^{z'_L} \!\alpha \phi(\eta)\,d\eta
  + \frac{\alpha}{2} d_n \,
    \max\!\bigl\{0,\,\phi^*(d_n) - z'_L\bigr\}
\\
&\ge d_n,
\end{align*}
where the last inequality follows from the lemma given below. Consequently, we conclude that the set of dual constraints in Eq.~\eqref{eq:kRentalVD:frac:dual:cons} is satisfied for each request $n$ given the design of dual constraints in Eqs.~\eqref{eq:kRentalVD:frac:dual:update:1}--\eqref{eq:kRentalVD:frac:dual:update:3}. \textbf{Therefore, if the increasing pricing function $\phi$ satisfies the system of constraints in Proposition~\ref{prop:kRentalVD:frac-Dynamic-CR}, it establishes the $\alpha$-competitiveness of Algorithm~\ref{alg:kRentalVD-dynamic-fracitonal}. Thus, we complete the proof of Proposition \ref{prop:kRentalVD:frac-Dynamic-CR} if the following lemma follows.}

\begin{lemma}
If the $\phi$ function satisfy the constraints in Eqs.~\eqref{eq:kRental:VD:frac:price:design:1}--\eqref{eq:kRental:VD:frac:price:design:2} 
then 
    \begin{align*}
    & \alpha z'_1\phi(z'_1) 
    -
    \frac{\alpha}{2}z'_L\phi(z'_1) 
    +
    \int_{z'_1}^{z'_L}
      \alpha \phi(\eta)
    d\eta +   \frac{\alpha}{2}d_n
    \max\!\Bigl\{0,\phi^*(d_n) - z'_L\Bigr\} \ge  d_n,
    \\
    & 
    \hspace{+4cm}
    \forall d_{n} \in [d_{\min},d_{\max}],  
    z'_1 \in \bigl(0,z'_L\bigr], z'_L \in \bigl(0,\phi^{*}(d_{min})\bigr].
    \end{align*}
\end{lemma}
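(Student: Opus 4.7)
The plan mirrors the variational argument used in the proof of Theorem~\ref{prop:kRentalVD-Dynamic-CR}. Denote the left-hand side of the claimed inequality by $\mathfrak{L}(z'_1, z'_L, d_n)$ and fix admissible $z'_L \in (0, \phi^*(d_{\min})]$ and $d_n \in [d_{\min}, d_{\max}]$. A preliminary observation eliminates the $\max$ in the last term: because $\phi$ is increasing, the generalized inverse $\phi^*$ is non-decreasing in its argument, so $z'_L \le \phi^*(d_{\min}) \le \phi^*(d_n)$ and the last term simplifies to $\tfrac{\alpha}{2} d_n (\phi^*(d_n) - z'_L)$, which is independent of $z'_1$.

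Next, I would view $\mathfrak{L}(\cdot, z'_L, d_n)$ as a univariate function of $z'_1$ on $(0, z'_L]$ and compute its derivative using the Leibniz rule:
\begin{align*}
\frac{\partial \mathfrak{L}}{\partial z'_1} = \alpha \phi(z'_1) + \alpha z'_1 \phi'(z'_1) - \tfrac{\alpha}{2} z'_L \phi'(z'_1) - \alpha \phi(z'_1) = \alpha \phi'(z'_1)\Bigl(z'_1 - \tfrac{z'_L}{2}\Bigr).
\end{align*}
Since $\phi$ is (weakly) increasing, $\phi'(z'_1) \ge 0$, so the sign of the derivative is controlled by $z'_1 - z'_L/2$. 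Hence $\mathfrak{L}$ is non-increasing on $(0, z'_L/2]$ and non-decreasing on $[z'_L/2, z'_L]$, and the global minimum over $(0, z'_L]$ is attained at $z'_1 = z'_L/2$. At that point the two terms $\alpha z'_1 \phi(z'_1)$ and $-\tfrac{\alpha}{2} z'_L \phi(z'_1)$ cancel exactly, leaving
\begin{align*}
\mathfrak{L}\Bigl(\tfrac{z'_L}{2}, z'_L, d_n\Bigr) = \int_{z'_L/2}^{z'_L} \alpha \phi(\eta)\, d\eta + \tfrac{\alpha}{2} d_n \bigl(\phi^*(d_n) - z'_L\bigr).
\end{align*}

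Finally, since $z'_L/2 \le \phi^*(d_n)/2$, hypothesis \eqref{eq:kRental:VD:frac:price:design:1} applies with $y_1 = z'_L/2$ and yields precisely $\mathfrak{L}(z'_L/2, z'_L, d_n) \ge d_n$, which together with the variational bound completes the argument. The main subtlety I expect is handling the regularity of $\phi$: if $\phi$ is only weakly monotone rather than continuously differentiable (as happens, e.g., for the numerically constructed piecewise-constant $\phi$ of Proposition~\ref{lem:kRental:VD:discretized_properties}), the derivative identity should be replaced by a direct comparison of $\mathfrak{L}$ at $z'_1$ and $z'_L - z'_1$ using the fact that $\phi$ is monotone, which yields the same conclusion that the minimum lies at $z'_1 = z'_L/2$. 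Notably hypothesis \eqref{eq:kRental:VD:frac:price:design:2} is not invoked for this lemma, because the domain restriction $z'_L \le \phi^*(d_{\min})$ together with the critical-point analysis ensures that the endpoint $z'_1 = z'_L$ is never the binding minimizer.
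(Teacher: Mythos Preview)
Your proof is correct and follows the same variational approach as the paper: fix $z'_L$ and $d_n$, differentiate in $z'_1$, and locate the critical point at $z'_1 = z'_L/2$. The derivative computation and the evaluation $\mathfrak{L}(z'_L/2, z'_L, d_n) = \int_{z'_L/2}^{z'_L}\alpha\phi(\eta)\,d\eta + \tfrac{\alpha}{2}d_n(\phi^*(d_n)-z'_L)$ both match the paper's, and your application of~\eqref{eq:kRental:VD:frac:price:design:1} with $y_1 = z'_L/2$ is exactly how the paper closes that case.

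Where you differ is in how you dispense with the boundary $z'_1 = z'_L$. The paper argues generically that a continuous function on a closed interval attains its minimum at a critical point or an endpoint, and therefore also evaluates $\mathfrak{L}(z'_L, z'_L, d_n) = \tfrac{\alpha}{2}d_n\phi^*(d_n) - \tfrac{\alpha}{2}z'_L(d_n - \phi(z'_L))$ and invokes constraint~\eqref{eq:kRental:VD:frac:price:design:2} to bound it. You instead observe that the sign of $\partial\mathfrak{L}/\partial z'_1 = \alpha\phi'(z'_1)(z'_1 - z'_L/2)$ shows $\mathfrak{L}$ is non-increasing on $(0, z'_L/2]$ and non-decreasing on $[z'_L/2, z'_L]$, so $z'_L/2$ is the \emph{global} minimizer and the right endpoint need not be checked separately. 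This is a genuine sharpening: it reveals that~\eqref{eq:kRental:VD:frac:price:design:2} is in fact redundant for this lemma under the stated domain $z'_L \le \phi^*(d_{\min})$. Your remark about handling non-differentiable $\phi$ via a direct comparison (e.g., writing $\mathfrak{L}(z'_1) - \mathfrak{L}(z'_L/2) = \alpha\int_{\min}^{\max}|\phi(z'_1) - \phi(\eta)|\,d\eta \ge 0$ using monotonicity) is the right way to make the argument robust to the piecewise-constant pricing functions that arise numerically.
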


\begin{proof}
Let us denote the left-hand side of the above inequality by 
\begin{align*}
\mathfrak{L}(z'_1,z'_L,d_n),
\end{align*}
which is a function of $z'_1$, $z'_L$, and $d_n$. We are going to prove that
\begin{align*}
\mathfrak{L}(z'_1,z'_L,d_n)
\ge
\min\Bigl\{
\mathfrak{L}\Bigl(\frac{z'_L}{2},z'_L,d_n\Bigr),\quad
\mathfrak{L}(z'_L,z'_L,d_n)
\Bigr\},
\end{align*}
for all values of $ z'_1 \in \bigl(0,z'_L\bigr]$.

\emph{Step 1: Find the critical point in the interior.} Fix $z'_L$ and $d_n$ and differentiate $L$ with respect to $z'_1$. A straightforward calculation shows
\begin{align*}
\frac{\partial \mathfrak{L}}{\partial z'_1}
=
\frac{\alpha}{2}\phi'(z'_1)\Bigl(2z'_1 - z'_L\Bigr).
\end{align*}
Since $\phi'(z'_1) \neq 0$, setting this derivative to zero forces 
\begin{align*}
z'_1 = \frac{z'_L}{2}.
\end{align*}

A direct substitution $z'_1 = \frac{z'_L}{2}$ simplifies the first two terms and yields
\begin{align*}
\mathfrak{L}\Bigl(\frac{z'_L}{2},z'_L,d_n\Bigr) = \int_{z'_L/2}^{z'_L} \alpha \cdot \phi(\eta)d\eta + \frac{\alpha}{2}d_n\max\{0,\phi^{*}(d_n)-z'_L\}.
\end{align*}

\emph{Step 2: Evaluate $\mathfrak{L}$ on the boundary plane where $z'_1 = z'_L$.}  When $z'_1 = z'_L$, the integral term vanishes. One obtains
\begin{align*}
\mathfrak{L}(z'_L,z'_L,d_n) = \frac{\alpha}{2}z'_L\phi(z'_L) + \frac{\alpha}{2}d_n\max\{0,\phi^{*}(d_n)-z'_L\}.
\end{align*}
This can be written as 
\begin{align*}
\frac{\alpha}{2}d_n\phi^*(d_n) - \frac{\alpha}{2}z'_L\Bigl(d_n-\phi(z'_L)\Bigr)
\end{align*}
for values of $z'_L \le \phi^{*}(d_n)$.

\emph{Step 3: Evaluate $\mathfrak{L}$ on the boundary plane where $z'_1$ converges to zero.} It can be verified that $\mathfrak{L}(z'
_1,z'_L,d_n)$ is lower-bounded by $L(z'_L/2,z'_L,d_n)$, as $z'_1$ converges to zero, as follows:
\begin{align*}
     \mathfrak{L}(0,z'_L,d_n)  & =
    \int_{0}^{z'_L}
       \alpha \cdot \phi(\eta)d\eta 
    +
    \frac{\alpha}{2}d_n
    \max\Bigl\{0,\phi^{*}\bigl(d_n\bigr)-z'_L\Bigr\}  -     \frac{\alpha}{2}z'_L\phi\bigl(z'_1\bigr) \\
    & \ge  \int_{z'_L/2}^{z'_L}
      \alpha \cdot \phi(\eta)d\eta 
    +
    \frac{\alpha}{2}d_n
    \max\Bigl\{0,\phi^{*}\bigl(d_n\bigr)-z'_L\Bigr\}  = L(z'_L/2,z'_L,d_n),
\end{align*}
where the inequality follows from the fact that $\phi$ function is increasing.

Putting together the above results, in the above three steps, since a continuous function on a closed interval attains its minimum either at a critical point or on the boundary, for each fixed $z'_L$ and $d_n$ we have
\begin{align*}
\mathfrak{L}(z'_1,z'_L,d_n) & \ge \min\Bigl\{
\mathfrak{L}\Bigl(\frac{z'_L}{2},z'_L,d_n\Bigr),\quad
\mathfrak{L}(z'_L,z'_L,d_n)
\Bigr\} \\
&=\min\Biggl\{
\frac{\alpha}{2}d_n\phi^*(d_n)
-\frac{\alpha}{2}z'_L\Bigl(d_n-\phi(z'_L)\Bigr), \\
&\quad\quad\quad \int_{z'_L/2}^{z'_L} \alpha \cdot \phi(\eta)d\eta
+\frac{\alpha}{2}d_n\max\{0,\phi^{*}(d_n)-z'_L\}
\Biggr\} \\
&\ge d_n,
\end{align*}
where the last inequality follows from the constraints in Eq.~\eqref{eq:kRental:VD:frac:price:design:1} and Eq.~\eqref{eq:kRental:VD:frac:price:design:2}. 
This completes the proof of the lemma.
\end{proof}

\section{Connection to the Results of \cite{ekbatani2025}}
\label{apx:comparison}

In the following, building on the techniques from \cite{ekbatani2025}, we reproduce their result for the fractional version of the \problemkRentalVD\ problem using our pseudo-utility maximization approach. We focus on a special case in which rental durations are restricted to integer values and requests arrive at discrete time steps. Accordingly, for each request $n$, we define its rental period as $\calt_n := \{a_n, a_n + 1, \dots, a_n + d_n - 1\}$. Furthermore, we make the following assumption where:
\begin{assumption}
    \label{ass:bounded-duration-oja}
The rental duration of each request $n$, $d_n$ is integral and bounded, i.e., $d_n \in \{1,2,\dots,d_{\max}\}, \forall n\in[N]$.
\end{assumption}

In their work, they consider a parametrized pricing function in the following form
\begin{align}
    \phi(y) = \eta (\beta^{\frac{y}{k}} - 1), y\in [0,k],
\end{align}
where $\eta >0$ and $\beta \ge e$ are two parameters to be determined.

We propose Algorithm~\ref{alg:ota} below, which is also referred to as online forward looking price-based algorithm, \oja. 

\begin{algorithm}[H]
\caption{$\oja$ for Fractional \problemkRentalVD}
\label{alg:ota}

\textbf{Input:} Pricing function $\phi(\cdot)$

\textbf{Initialization:} Set utilization $y_t^{(0)} = 0$ for all $t$.

\While{item $n$ arrives}{
  Observe item $n$'s request $(a_n, d_n)$.

  Determine $\hat x_n$ by solving the pseudo-utility maximization problem:
  \begin{align}
      \label{p:utility-maximization}
  \hat x_n = \argmax_{x \in [0,1]} \left\{ 
  x \cdot d_n 
  - \sum_{t \in \mathcal{T}_n} \int_{y_t^{(n-1)}}^{y_t^{(n-1)} + x} \phi(u)\,du 
  \right\}.
  \end{align}

  Update the utilization profile:
  \[
  y_t^{(n)} =
  \begin{cases}
    y_t^{(n-1)} + \hat x_n & \text{if } t \in \mathcal{T}_n, \\
    y_t^{(n-1)}       & \text{otherwise}.
  \end{cases}
  \]
}
\end{algorithm}

Similar to Algorithm~\ref{alg:kRentalVD-dynamic}, \oja employs a pricing function $\phi$ to determine the fractional allocation $\hat{x}_n$ for each arriving request~$n$. However, it is important to highlight two key differences between Algorithm~\ref{alg:ota} and Algorithm~\ref{alg:kRentalVD-dynamic}.
\begin{itemize}
    \item \textbf{Duration-dependent \textit{vs} Duration-oblivious}.  It is evident that the pseudo-utility maximization problem in Eq.~\eqref{p:utility-maximization} accounts for the item's utilization over the entire duration $\mathcal{T}_n$ of request~$n$. Consequently, the pseudo-cost is explicitly \emph{duration-dependent}. This is in sharp contrast to the \emph{duration-oblivious} nature of Algorithm~\ref{alg:kRentalVD-dynamic}, introduced in Section~\ref{sec:krental:variable} for the \problemkRentalVD\ problem, which considers only the item's utilization at the arrival time of each request~$n$.

    \item \textbf{Different design principles to ensure feasibility}. In \oja, the online decision is solely determined by the pseudo-utility maximization problem, with no additional mechanism to explicitly enforce capacity constraints. This represents another key difference between the design principles of Algorithm~\ref{alg:ota} and Algorithm~\ref{alg:kRentalVD-dynamic}. Intuitively, if the pricing function is designed to be sufficiently steep (i.e., with large enough $\eta$ and $\beta$), the feasibility of the online solution can be guaranteed. However, increasing $\eta$ and $\beta$ also leads to a larger competitive ratio, which is undesirable. Lemma~\ref{lem:feasibility} provides a tighter condition on $\eta$ and $\beta$ to ensure the feasibility of \oja.
\end{itemize}

\begin{lemma}
    \label{lem:feasibility}
     \oja is $(1+\eta) \ln(\beta)$-competitive and generates  feasible online solutions, with respect to the resource constraint, if the parameters of the pricing function satisfies
    \begin{align}
    \label{eq:cond}
        \ln\beta \ge - \ln \left(\prod_{i\in[d_{\max}]} (1 - \frac{1}{i(1 + \eta)}) \right).
    \end{align}
\end{lemma}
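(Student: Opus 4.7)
The lemma has two parts—competitive ratio and feasibility—both exploiting the structure of the exponential pricing function $\phi(y)=\eta(\beta^{y/k}-1)$. My plan is to handle them in turn.

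For the competitive ratio $(1+\eta)\ln\beta$, I would use an online primal-dual argument paralleling the proof of Theorem~\ref{them:kRentalFD} but adapted to the forward-looking, duration-dependent pseudo-utility in Eq.~\eqref{p:utility-maximization}. Concretely, I would set the dual utility of each request to $u_n = (1+\eta)\ln\beta \cdot \hat x_n\bigl(d_n - \sum_{t\in\calt_n}\phi(y_t^{(n-1)}+\hat x_n)\bigr)$ and, for each $t\in\calt_n$, update a price $\lambda_t \leftarrow \lambda_t + \frac{(1+\eta)\ln\beta}{k}\int_{y_t^{(n-1)}}^{y_t^{(n)}}\phi(u)\,du$. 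The primal--dual gap analysis then reduces to the per-request inequality $\Delta^{D} \le (1+\eta)\ln\beta\cdot \Delta^{P}$, which follows from the single identity $k\int_0^{y}\phi(u)\,du \le \frac{y\,\phi(y) + \eta\,y}{\ln\beta}$ specific to the exponential pricing form. Dual feasibility, namely $u_n + \sum_{t\in\calt_n}\lambda_t \ge d_n$ for every $n$, then follows by combining the KKT conditions characterizing $\hat x_n$ with the monotonicity and convexity of $\phi$, along the lines of the $B_n\setminus C_n$ / $C_n$ decomposition already used in Appendix~\ref{apx:krental:frac}.

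For feasibility, I would show $y_t^{(N)}\le k$ for every $t$ via a worst-case analysis over the durations of accepted requests covering $t$. The idea is to define $Y_i$ as the largest utilization any time slot can reach after absorbing all requests of duration at most $i$, and to track the increments $Y_i - Y_{i-1}$ one duration class at a time. Using the KKT characterization of $\hat x_n$ together with the exponential form of $\phi$, a direct calculation on the pseudo-utility condition applied to a duration-$i$ request yields the telescoping identity $k-Y_i = (k-Y_{i-1})\bigl(1-\tfrac{1}{i(1+\eta)}\bigr)$ with $Y_0=0$. Unwinding gives $Y_{d_{\max}}/k = 1 - \prod_{i=1}^{d_{\max}}\bigl(1 - \tfrac{1}{i(1+\eta)}\bigr)$, and the stated lower bound on $\ln\beta$ is precisely the threshold that allows this worst-case profile—translated through the inverse pricing function $\phi^{-1}$—to satisfy $Y_{d_{\max}}\le k$.

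The main obstacle will be establishing the per-duration recursion in the feasibility argument. Because Eq.~\eqref{p:utility-maximization} couples allocation decisions across all time slots in $\calt_n$, the marginal increment in $y_t$ from accepting a duration-$i$ request depends on the utilization profile over the entire interval $\{a_n,\dots,a_n+d_n-1\}$, not merely at $t$. I expect to resolve this via an exchange argument showing that the adversary's worst-case sequence flattens the utilization profile across each request's window, thereby reducing the multi-slot KKT condition $\sum_{t'\in\calt_n}\phi(y_{t'}^{(n-1)}+\hat x_n) \le d_n$ to a single-slot inequality that yields the factor $\tfrac{1}{i(1+\eta)}$ by direct computation with $\phi(y)=\eta(\beta^{y/k}-1)$ at the boundary $y=Y_{i-1}$.
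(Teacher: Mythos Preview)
Your feasibility argument has a genuine gap. The recursion you posit, $k - Y_i = (k - Y_{i-1})\bigl(1 - \tfrac{1}{i(1+\eta)}\bigr)$ with $Y_0 = 0$, does not involve $\beta$ at all: unwinding it gives $Y_{d_{\max}} = k\bigl(1 - \prod_i(1 - \tfrac{1}{i(1+\eta)})\bigr) < k$ unconditionally, so it cannot be the source of the threshold on $\ln\beta$. The vague ``translated through $\phi^{-1}$'' does not repair this, because $Y_i$ was defined as a utilization level in the same units as $k$. More fundamentally, a per-duration-class recursion does not match the online structure: requests of different durations interleave, and a slot's utilization depends on the \emph{profile} of utilizations across the window of each covering request, not just on a scalar $Y_{i-1}$. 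The exchange argument you anticipate (flattening the profile) is exactly where the difficulty lies, and it does not reduce to your single-slot recursion.

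The paper's approach is structurally different. It proves by induction on $n$ the invariant
\[
y_{a_n+\tau}^{(n)} - y_{a_n+d}^{(n)} \;\le\; -\,\frac{k}{\ln\beta}\,\ln\!\Bigl(\prod_{i\in[d-\tau]}\bigl(1 - \tfrac{1}{i(1+\eta)}\bigr)\Bigr),\qquad 0\le \tau\le d,
\]
which bounds utilization \emph{differences} between slots at distance $d-\tau$; setting $\tau=0$, $d=d_{\max}$ and using $y_{a_n+d_{\max}}^{(n)}=0$ then gives feasibility under condition~\eqref{eq:cond}. The inductive step plugs the hypothesis into the KKT inequality $d_n \ge \sum_{t\in\calt_n}\eta(\beta^{y_t^{(n)}/k}-1)$, exploits monotonicity of $y_t^{(n)}$ in $t$, and closes via the combinatorial identity $\sum_{t=0}^{b-1}\prod_{i\in[t]}(1-a/i) = \tfrac{b}{1-a}\prod_{i\in[b]}(1-a/i)$. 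This is the missing idea.

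On the competitive ratio: your plan would likely work but is overbuilt. Because \oja\ is forward-looking, the KKT condition already controls $\sum_{t\in\calt_n}\phi(y_t^{(n)})$ over the \emph{entire} rental window, so setting $\lambda_t = \phi(y_t^{(N)})$ and $\gamma_n = \max\{0,\,d_n - \sum_{t\in\calt_n}\phi(y_t^{(n)})\}$ makes dual feasibility immediate from monotonicity of $\phi$ in $y$; no $B_n/C_n$ decomposition is needed. The bound $\Delta^{D}\le (1+\eta)\ln\beta\cdot \Delta^{P}$ then follows from the elementary estimate $1 - \beta^{-x/k}\le (x/k)\ln\beta$ applied termwise, together with $\beta\ge e$.
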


\begin{proof}
In the following, we first prove that \oja generates a feasible solution and then we prove the competitive ratio of the algorithm.

\paragraph{Feasibility of \oja.} 
We only need to show that after processing each request $n$, the utilization at the arrival time $a_n$ is no larger than $k$, i.e., $y_{a_n}^{(n)} \le k, \forall n\in[N]$ since $y_{a_n}^{(n)}$ is the largest utilization over time after processing each request $n$ in the online $k$-rental problem. To show this, we prove a stronger claim that after processing request $n$, the utilization difference between two time slots $a_n + \tau$ and $a_n + d$ is upper bounded as follows
\begin{align}
\label{eq:claim1}
    y_{a_n + \tau}^{(n)} - y_{a_n+d}^{(n)} \le - \frac{k}{\ln\beta} \ln \left(\prod_{i\in[d - \tau]} (1 - \frac{1}{i(1 + \eta)}) \right),  \quad 0 \le \tau \le d. 
\end{align}
Since the rental period of an item is at most $d_{\max}$, we have $y_{a_n + d_{\max}}^{(n)} = 0$. 
If the above claim in Eq.  \eqref{eq:claim1} and the condition in Eq. \eqref{eq:cond} both hold, we can show the online solution is feasible by noting
$$y_{a_n}^{(n)} = y_{a_n}^{(n)} - y_{a_n+d_{\max}}^{(n)} \le - \frac{k}{\ln\beta} \ln \left(\prod_{i\in[d_{\max}]} (1 - \frac{1}{i(1 + \eta)}) \right) \le k,$$
where the first and second inequalities are due to Eqs. \eqref{eq:claim1} and \eqref{eq:cond}, respectively. 

Next, we prove the claim in Eq. \eqref{eq:claim1} by induction. We refer to the processing of request $n$ as step $n$. In the base case when $n = 0$, the initial utilization of all slots is $y_t^{(0)} = 0, \forall t\in[T]$, and the claim holds. Now suppose the claim holds in step $n-1$. Consider the following cases.

\textbf{Case I.} Request $n$ is completely rejected, i.e., $\hat x_n = 0$. In this case, $y_t^{(n-1)} = y_t^{(n)}, \forall t\in[T]$, and thus the claim holds in step $n$.

\textbf{Case II.} Request $n$ is partially or completely accepted, i.e., $\hat x_n \in (0, 1]$. Based on the pseudo-utility maximization in \oja, we must have $v_n \ge \sum_{t\in\calt_n} \phi(y_t^{(n)})$ (if $\hat x_n < 1$, $v_n = \sum_{t\in\calt_n} \phi(y_t^{(n)})$; if $\hat x_n = 1$, $v_n \ge \sum_{t\in\calt_n} \phi(y_t^{(n)})$).
Then we have
\begin{align}
\label{eq:lem1}
d_n  \ge  \sum_{t\in\calt_n} \phi(y_t^{(n)}) = \sum_{t \in \calt_n} \eta (\beta^{\frac{y_t^{(n)}}{k}} - 1). 
\end{align}
Note that $y_{a_n+\tau}^{(n)} - y_{a_n+d}^{(n)} = (y_{a_n+\tau}^{(n -1)} + \hat x_n) - (y_{a_n+d}^{(n-1)} + \hat x_n) = y_{a_n+\tau}^{(n -1)} - y_{a_n+d}^{(n-1)}$ if $a_n+\tau, a_n + d \in \calt_n = \{a_n,\dots,a_n+d_n-1\}$ and $y_{a_n+\tau}^{(n)} - y_{a_n+d}^{(n)} = y_{a_n+\tau}^{(n -1)} - y_{a_n+d}^{(n-1)}$ if $a_n+\tau, a_n + d \in \{a_n+d_n,\dots,T\}$. In these cases, the utilization difference between $y_{a_n+\tau}^{(n)}$ and $y_{a_n+d}^{(n)}$ is the same as the previous step, and the induction holds.
Thus, the only interesting case is when 
$$a_n + \tau < a_n + d_n \le a_n + d,$$
where the utilization of slot $a_n + \tau$ increases due to newly leased item while the utilization of $a_n + d$ remains unchanged. 

From Eq. \eqref{eq:lem1}, we can have
\begin{subequations}
\begin{align}
\frac{1}{\eta} d_n  &\ge  \sum_{t = a_{n}}^{a_{n} + d_n - 1} (\beta^{\frac{y_t^{(n)}}{k}} - 1)\\
&= \sum_{t = a_{n}}^{a_n + \tau -1} (\beta^{\frac{y_t^{(n)}}{k}} - 1) + \sum_{t = a_n + \tau}^{a_n + d_n - 1} (\beta^{\frac{y_t^{(n)}}{k}} - 1)\\
\label{eq:lem-eq1}
&\ge  - d_n  + \tau \beta^{\frac{y_{a_n+\tau}^{(n)}}{k}} + \sum_{t = a_n + \tau}^{a_n + d_n - 1} \beta^{\frac{y_t^{(n)}}{k}}\\
\label{eq:lem-eq2}
&\ge - d_n  + \tau \beta^{\frac{y_{a_n + \tau}^{(n)}}{k}} + \sum_{t = a_n + \tau}^{a_n + d_n - 1} \beta^{\frac{y_{a_n + \tau}^{(n)}}{k}  + \frac{1}{\ln\beta} \ln \left(\prod_{i\in[t-a_n - \tau]} (1 - \frac{1}{i(1 + \eta)}) \right)}\\
&= - d_n + \beta^{\frac{y_{a_n + \tau}^{(n)}}{k} }\left[\tau +\sum_{t' = 0}^{d_n - \tau - 1} \prod_{i\in[t']} (1 - \frac{1}{i(1 + \eta)}) \right]\\
\label{eq:lem-eq3}
& = - d_n + \beta^{\frac{y_{a_n + \tau}^{(n)}}{k} } \left[\tau + \frac{d_n - \tau}{1- \frac{1}{1 + \eta}} \cdot \prod_{i\in[d_n - \tau]} (1 - \frac{1}{i(1 + \eta)}) \right]\\
\label{eq:lem-eq4}
& = -d_n + \beta^{\frac{y_{a_n + \tau}^{(n)}}{k} } \cdot \frac{1 + \eta}{\eta}\left[\tau \cdot \frac{\eta}{1 + \eta} + (d_n - \tau) \cdot \prod_{i\in[d_n - \tau]} (1 - \frac{1}{i(1 + \eta)}) \right] \\
& \ge -d_n + \beta^{\frac{y_{a_n + \tau}^{(n)}}{k} } \cdot \frac{1 + \eta}{\eta} \cdot d_n \cdot  \prod_{i\in[d_n - \tau]} (1 - \frac{1}{i(1 + \eta)}),
\end{align} 
\end{subequations}
which gives
\begin{align*}
  y_{a_n + \tau}^{(n)}\le - \frac{k}{\ln\beta} \cdot \ln \prod_{i\in[d_n - \tau]} (1 - \frac{1}{i(1 + \eta)}). 
\end{align*} 
In the above set of equations, the inequality~\eqref{eq:lem-eq1} holds since the utilization is monotonically non-increasing over time, and thus $y_{t}^{(n)} \ge y_{a_n + \tau}^{(n)}, \forall t = a_n,\dots,a_n + \tau - 1$. The inequality~\eqref{eq:lem-eq2} is obtained by applying the induction hypothesis, i.e., $\forall t = a_{n}+\tau,\dots,a_n + d_n -1$,
\begin{align*}
y_t^{(n)} = \hat x_n + y_t^{(n-1)} \ge  \hat x_n + y_{a_n + \tau}^{(n-1)} + \frac{k}{\ln\beta} \ln \left(\prod_{i\in[t- a_n - \tau]} (1 - \frac{1}{i(1 + \eta)}) \right).
\end{align*}
The equality~\eqref{eq:lem-eq3} holds due to the following equation that can be shown by induction: 
\begin{align*}
    \sum_{t = 0}^{b-1} \prod_{i\in[t]} (1 - \frac{a}{i}) = \frac{b}{1-a} \prod_{i\in[b]} (1 - \frac{a}{i}), \forall b\in\mathbb{N}, a\ge 0. 
\end{align*}
The last inequality holds since $\prod_{i\in[d_n - \tau]} (1 - \frac{1}{i(1 + \eta)}) \le \frac{\eta}{1 + \eta}$, where $d_n - \tau \ge 1$. 

Thus, we have
\begin{align}
 y_{a_n + \tau}^{(n)} -  y_{a_n + d}^{(n)} \le y_{a_n + \tau}^{(n)} \le  - \frac{k}{\ln\beta} \cdot \ln \prod_{i\in[d_n - \tau]} (1 - \frac{1}{i(1 + \eta)}) \le - \frac{k}{\ln\beta} \cdot \ln \prod_{i\in[d - \tau]} (1 - \frac{1}{i(1 + \eta)}), 
\end{align}
which shows that the induction holds in step $n$. This completes the feasibility proof.

\paragraph{Proof of $(1+\eta) \cdot \ln(\beta)$-competitiveness of \oja.}
We consider the following relaxed primal problem and its corresponding duel problem:
\begin{subequations}
\label{p:primal}
\begin{align}
\max_{x_n \ge 0} \quad&\sum_{n\in[N]} d_n x_n\\
{\rm s.t.} \quad&\sum_{n\in [N]: t\in \calt_n} x_n \le k, \forall t\in[T], \quad (\lambda_t)\\
& x_n \le 1, \forall n\in[N], \quad (\gamma_n)
\end{align}
\end{subequations}
\begin{subequations}
\label{p:dual}
\begin{align}
\min_{\lambda_t,\gamma_n \ge 0} \quad& k \sum_{t\in[T]} \lambda_t +\sum_{n\in[N]} \gamma_n\\
{\rm s.t.} \quad& d_n \le \gamma_n + \sum_{t\in\calt_n} \lambda_t, \forall t\in[T], n\in[N].
\end{align}
\end{subequations}

We construct a set of feasible dual variables as follows:
\begin{align*}
    \lambda_t &= \phi(y_t^{(N)}), \qquad \forall t\in[T], \\
    \gamma_n &= 
    \begin{cases}
    0 &  0 \le \hat x_n <1,\\
    d_n - \sum_{t\in\calt_n}\phi(y_t^{(n)}) & \hat x_n = 1,
    \end{cases} \qquad \forall n\in[N],\quad
\end{align*}
where $\hat x_n$ is the online solution of \oja and $y_t^{(n)}$ is the utilization of time $t$ after processing the $n$-th item by \oja.
We show in the following that the constructed dual solution is feasible: 

\textbf{Case I.} If $\hx = 0$, then we have $\gamma_n = 0$ and $d_n \le \sum_{t\in\calt_n}\phi(y_t^{(n)})$ (based on~\eqref{p:utility-maximization}). Thus,
\begin{align*}
 d_n \le \sum_{t\in\calt_n}\phi(y_t^{(n)}) \le \sum_{t\in\calt_n}\phi(y_t^{(N)}) = \sum_{t\in\calt_n} \lambda_t + \gamma_n.   
\end{align*}

\textbf{Case II.} If $0 < \hat x_n < 1$, then we have $\gamma_n = 0$ and $d_n = \sum_{t\in\calt_n}\phi(y_t^{(n)})$. Thus,
\begin{align*}
 d_n = \sum_{t\in\calt_n}\phi(y_t^{(n)}) \le \sum_{t\in\calt_n} \lambda_t + \gamma_n. 
\end{align*}

\textbf{Case III.} If $\hat x_n = 1$, then we have $\gamma_n = d_n -\sum_{t\in\calt_n}\phi(y_t^{(n)}) \ge 0$. Thus,
\begin{align*}
 d_n = \sum_{t\in\calt_n}\phi(y_t^{(n)}) + \gamma_n \le \sum_{t\in\calt_n}\lambda_t + \gamma_n. 
\end{align*}

The dual objective at the feasible dual solution provides an upper bound on the relaxed primal and the original primal objective. Thus, we have
\begin{align*}
\opt &\le k\sum_{t\in[T]} \phi(y_t^{(N)}) + \sum_{n\in[N]} \gamma_n\\
& = \sum_{n\in[N]} \left[k\sum_{t\in\calt_n}(\phi(y_t^{(n)}) - \phi(y_t^{(n-1)})) + \gamma_n\right],
\end{align*}
which holds as $\phi(0) = 0$.
We next show that 
$  k\sum_{t\in\calt_n}(\phi(y_t^{(n)}) - \phi(y_t^{(n-1)})) + \gamma_n \le (1+\eta)\ln\beta \cdot \hat x_n d_n , \forall n\in[N]$.

Consider the following three cases.

\textbf{Case I.} If $\hat x_n = 0$, then we have $ k\sum_{t\in\calt_n}(\phi(y_t^{(n)}) - \phi(y_t^{(n-1)})) + \gamma_n = (1+\eta)\ln\beta \cdot \hat x_n d_n =0$.

\textbf{Case II.} If $0< \hat x_n <1$, we have $\gamma_n = 0$ and $d_n = \sum_{t\in\calt_n} \phi(y_t^{(n)})$. This gives
\begin{align*}
 k\sum_{t\in\calt_n}(\phi(y_t^{(n)}) - \phi(y_t^{(n-1)})) &= k \eta \sum_{t\in\calt_n} \beta^{\frac{y_t^{(n)}}{k}}(1 - \beta^{-\frac{\hat x_n}{k}})\\
&\le k \eta \sum_{t\in\calt_n} \beta^{\frac{y_t^{(n)}}{k}}\frac{\hat x_n}{k} \ln\beta\\
&= \ln\beta \cdot \sum_{t\in\calt_n} \hat x_n \eta [\beta^{\frac{y_t^{(n)}}{k}} - 1] + \ln\beta \cdot \eta \sum_{t\in\calt_n} \hat x_n\\
&= \ln\beta \cdot \hat x_n \cdot\sum_{t\in\calt_n} \phi(y_t^{(n)}) + \ln\beta \cdot \eta \cdot \hat x_n \cdot d_n \\
&\le \ln\beta \cdot\hat x_n d_n + \ln\beta \cdot \eta \cdot\hat x_n d_n = (1+\eta)\ln\beta \cdot \hat x_n d_n , 
\end{align*}    
where the last inequity holds since  $d_n = \sum_{t\in\calt_n} \phi(y_t^{(n)})$.

\textbf{Case III.} If $\hat x_n = 1$, we have $\gamma_n = d_n - \sum_{t\in\calt_n} \phi(y_t^{(n)})$ and $d_n > \sum_{t\in\calt_n} \phi(y_t^{(n)})$. This gives
\begin{align*}
k\sum_{t\in\calt_n}(\phi(y_t^{(n)}) - \phi(y_t^{(n-1)})) + \gamma_n
&= k \eta \sum_{t\in\calt_n} \beta^{\frac{y_t^{(n)}}{k}}(1 - \beta^{-\frac{1}{k}}) + \gamma_n\\
&\le k \eta \sum_{t\in\calt_n} \beta^{\frac{y_t^{(n)}}{k}}\frac{1}{k} \ln\beta + \gamma_n\\
&= \ln\beta \cdot \sum_{t\in\calt_n} \eta [\beta^{\frac{y_t^{(n)}}{k}} - 1] + \ln\beta \cdot \eta \cdot d_n + \gamma_n\\
&= \ln\beta \cdot\sum_{t\in\calt_n} \phi(y_t^{(n)}) + \ln\beta \cdot \eta \cdot d_n + \gamma_n \\
&= \ln\beta \cdot (d_n - \gamma_n) + \ln\beta \cdot \eta \cdot d_n + \gamma_n\\
&\le \ln\beta \cdot d_n + \ln\beta \cdot \eta \cdot d_n + (1-\ln\beta)\gamma_n\\
&\le (1+\eta)\ln\beta \cdot d_n , 
\end{align*}    
where the last inequality holds when $\beta \ge e$.

This completes the proof of Lemma \ref{lem:feasibility}. 
\end{proof}

Based on Lemma~\ref{lem:feasibility}, we can design the parameters of the pricing function by formulating and solving the following optimization problem, which aims to minimize the competitive ratio of \oja:
\begin{subequations}
\begin{align}
\min_{\beta \ge e, \eta > 0} \quad &(1 + \eta) \ln\beta\\
{\rm s.t.}\quad & \ln\beta \ge - \ln \left(\prod_{i\in[d_{\max}]} (1 - \frac{1}{i(1 + \eta)}) \right).
\end{align}   
\end{subequations}
Thus, using the standard online primal-dual analysis, the results of \cite{ekbatani2025} can be extended to the \emph{fractional} setting through the same pseudo-utility maximization framework as in Algorithm \ref{alg:kRentalFD} and Algorithm \ref{alg:kRentalVD-dynamic}. 

We end this section with a discussion of some interesting future directions. Based on the results of this work and those of \cite{ekbatani2025}, a natural direction for future work is to develop a rounding scheme that converts the fractional algorithm such as Algorithm \ref{alg:ota} into integral decisions, thereby addressing the integral version of the \problemkRentalVD problem. However, due to the impossibility result in Theorem \ref{prop:rounding:impossiblity}, any such rounding scheme will necessarily incur some loss in performance. An alternative and promising direction is to adopt the relax-and-round approach from Section \ref{sec:krental:variable}, where the rounding step is integrated into the design of the fractional solution. This would require a more sophisticated treatment of the pricing function $\phi$, both to ensure feasibility and to optimize the competitive ratio. 

\end{document}